\pgfplotsset{compat=1.14}
\pgfplotsset{compat=1.16}
\newcommand{\vars}{\textsf{vars}}
\newcommand{\free}{\textsf{free}}
\newcommand{\bound}{\textsf{bound}}
\newcommand{\out}{\textsf{OUT}}
\newcommand{\bigO}[1]{\mathcal{O}(#1)}
\newcommand{\cw}{\mathsf{cw}}
\newcommand{\ctw}{\mathsf{cw}}
\newcommand{\cg}{\mathsf{cg}}
\newcommand{\eliminate}{\mathsf{eliminate}}
\newcommand{\rev}{\texttt{rev}}
\newcommand{\fw}{\mathsf{w}}
\newcommand{\atoms}{\textsf{atoms}}
\newcommand{\ftd}{\text{\textsf FTD}}
\newcommand{\bags}{\text{\textsf bags}}
\newcommand{\ospgosyan}{\textsf{OSPG}+\textsf{OSYan}\xspace}
\newcommand{\Dom}{\text{\sf Dom}}
\newcommand{\defeq}{\stackrel{\text{def}}{=}}
\newcommand{\calT}{\mathcal{T}}
\newcommand{\calR}{\mathcal{R}}
\newcommand{\calB}{\mathcal{B}}
\newcommand{\calA}{\mathcal{A}}
\newcommand{\calF}{\mathcal{F}}
\newcommand{\calS}{\mathcal{S}}
\newcommand{\pg}{\textsf{PG}\xspace}
\newcommand{\ospg}{\textsf{OSPG}\xspace}
\newcounter{magicrownumbers}
\newcommand\rownumber{\footnotesize\stepcounter{magicrownumbers}\arabic{magicrownumbers}}
\newcommand{\linenumber}{\makebox[2ex][r]{\rownumber\TAB}}
\newcommand{\TAB}{\makebox[2.5ex][r]{}}
\newcommand{\LET}{\textbf{let}\xspace}%
\newcommand{\IF}{\textbf{if}\xspace}%
\newcommand{\ELSE}{\textbf{else}\xspace}%
\newcommand{\FOREACH}{\textbf{foreach}\xspace}%
\newcommand{\RETURN}{\textbf{return}\xspace}%
\newcommand{\nop}[1]{}
\definecolor{verylightgray}{gray}{0.90}
\newcommand{\mahmoud}[1]{\todo[inline,color=cyan]{\textsf{#1} \hfill \textsc{--Mahmoud.}}}
\newcommand{\dan}[1]{\todo[inline,color=green]{\textsf{#1} \hfill \textsc{--Dan.}}}
\newcommand{\dans}[1]{\todo[inline,color=pink]{\textsf{#1} \hfill \textsc{--DanS.}}}
\definecolor{verylightgray}{gray}{0.90}
\newcommand{\ahmet}[1]{\todo[inline,color=verylightgray]{\textsf{#1} \hfill \textsc{--Ahmet.}}}
\newcommand{\alex}[1]{\todo[inline,color=orange]{\textsf{#1} \hfill \textsc{--Alexandru.}}}
\newcommand{\mahmoud}[1]{}
\newcommand{\dan}[1]{}
\newcommand{\dans}[1]{}
\newcommand{\ahmet}[1]{}
\newcommand{\alex}[1]{}
\theoremstyle{plain}                  
\newtheorem{theorem}{Theorem}
\newtheorem{proposition}[theorem]{Proposition}
\newtheorem{definition}[theorem]{Definition}
\newtheorem{example}[theorem]{Example}
\newtheorem{claim}[theorem]{Claim}
\date{}
\title{Output-Sensitive Evaluation of \\ Acyclic Conjunctive Regular Path Queries}
\author{
\hspace{-1.5em}
\begin{tabular}{ccc}
Mahmoud Abo Khamis & \hspace{0.2cm} Alexandru-Mihai Hurjui & \hspace{0.2cm} Ahmet Kara \\
Relational AI & University of Zurich & OTH Regensburg \\
\texttt{\normalsize mahmoud.abokhamis@relational.ai} & \texttt{\normalsize hurjuialexandru12@gmail.com}   & \texttt{\normalsize ahmet.kara@oth-regensburg.de} \\[0.6cm]
 Dan Olteanu & Dan Suciu & Zilu Tian  \\
 University of Zurich & University of Washington & OmniVision \\
\texttt{\normalsize dan.olteanu@uzh.ch}  & \texttt{\normalsize suciu@cs.washington.edu}  & \texttt{\normalsize ruby.tian@ovt.com}
\end{tabular}
}
\begin{document}
\maketitle

\begin{abstract}
Conjunctive Regular Path Queries, or CRPQs for short, are an essential construct in graph query languages. In this paper, we propose the first output-sensitive algorithm for evaluating acyclic CRPQs. It is output-sensitive in the sense that its complexity is a function of the sizes of the input graph and of the query output. In particular, it does not depend on the output sizes of the regular expressions that appear in the query, as these sizes can be much larger than the query output size.

Our algorithm proceeds in two stages. In the first stage, it contracts the given query into a free-connex acyclic one such that the output of the original query can be obtained from the output of the contracted one. This contraction removes bound variables by composing regular expressions or by promoting bound variables to free ones. The minimum necessary number of promoted bound variables gives the contraction width, which is a novel parameter specific to CRPQs. In the second stage, our algorithm evaluates the free-connex acyclic CRPQ and projects away the columns of the promoted bound variables. It ensures output-sensitivity by computing the calibrated outputs of the regular expressions appearing in the free-connex acyclic CRPQ in time proportional to their sizes.

Our algorithm has lower complexity than the state-of-the-art approaches for problem instances where (i) the query output is asymptotically smaller than the worst-case output size or (ii) the largest output size of any of the regular expression in the query.
\end{abstract}

\paragraph{Acknowledgements}
Suciu was partially supported by NSF IIS 2314527, NSF SHF 2312195, and NSF IIS 2507117.

\section{Introduction}
\label{sec:intro}

The evaluation of conjunctive regular path queries is a central problem in graph databases. Such queries are an essential construct in graph query languages in both academia and industry, e.g.,~\cite{SPARQL:2013,Cypher:SIGMOD:2018,GSQL:2021,PGQL:2021,DBLP:conf/sigmod/DeutschFGHLLLMM22,DBLP:conf/icdt/FrancisGGLMMMPR23}. Regular path queries have been investigated extensively, e.g.,~\cite{MendelzonW95,CalvaneseGLV03,Wood12,Baeza13,AnglesABHRV17}, as they represent the key differentiator of graph query languages from the purely relational query languages. 

Let a graph $G=(V,E,\Sigma)$, where $V$ is the set of vertices and $E$ is the set of edges that are labeled with symbols from an alphabet $\Sigma$. A regular path query, or RPQ for short, $r$ is a regular expression over $\Sigma$. Its semantics is given by the set of vertex pairs $(v, u)$ such that $u$ can be reached in $G$ from $v$ via a path labeled with a string from the language $L(r)$. A conjunctive regular path query, or CRPQ for short, $Q$ is a conjunction of binary atoms, with each atom  defined by a regular path query: $Q(\bm Y) = \wedge_{i\in[n]} r_i(\bm X_i)$, where each $r_i$ is an RPQ, each $\bm X_i$ is a pair of variables, and  $\bm Y\subseteq \bigcup_{i\in[n]} \bm X_i$ is the set of free variables.

In this paper, {\em we propose the first output-sensitive algorithm for evaluating acyclic\footnote{Since the CRPQs are conjunctions of binary atoms, the various notions of acyclicity become equivalent: $\alpha$-acyclicity is the same as Berge-acyclicity. We therefore simply refer to acyclic CRPQs in this paper.} CRPQs}. It is output-sensitive in the sense that its complexity is a function of the sizes of the input graph and of the query output. In particular, it does not depend on the output sizes of the RPQs in the query, as these sizes can be larger than the query output size.

\begin{restatable}{theorem}{MainTheorem}
\label{theo:general}
    Any acyclic CRPQ $Q$ over an edge-labeled graph $G = (V,E, \Sigma)$ can be evaluated in $\bigO{|E| + |E|\cdot\out^{1/2} \cdot |V|^{\ctw/2} + \out\cdot |V|^{\ctw}}$ data complexity, where $\out$ is the size of the output of $Q$ and $\ctw$ is the contraction width of $Q$.
\end{restatable}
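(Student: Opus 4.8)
The plan is to realize the two-stage strategy announced in the abstract and to charge the three summands to three phases. Stage one applies the contraction procedure to rewrite $Q$ into a free-connex acyclic CRPQ $Q'$ whose free variables are $\bm Y\cup\bm Z$, where $\bm Z$ is the set of promoted bound variables with $|\bm Z|=\ctw$, so that $Q(\bm Y)=\pi_{\bm Y}\big(Q'(\bm Y\cup\bm Z)\big)$; I take the correctness of this rewriting as established earlier. The one quantitative fact I extract is a bound on $N'\defeq |Q'(\bm Y\cup\bm Z)|$: every answer of $Q'$ projects under $\pi_{\bm Y}$ to an answer of $Q$, and each of the $\out$ answers of $Q$ has at most $|V|^{\ctw}$ preimages, one per assignment of the $\ctw$ promoted variables to vertices. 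Hence $N'\le \out\cdot|V|^{\ctw}$ and $\sqrt{N'}\le \out^{1/2}\cdot|V|^{\ctw/2}$.

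Stage two evaluates $Q'$ and projects away $\bm Z$. Reading $G$ and building, for each of the constantly many atoms $r_i$, the product graph $A_{r_i}\times G$ of size $\bigO{|E|}$ accounts for the first summand. I then run a free-connex Yannakakis schedule over a join tree of $Q'$. The reduction and the bound-variable (non-connex) part of the tree are handled purely by forward/backward reachability in the product graphs: to semi-join-reduce an atom incident to a bound variable I only need the set of interface values that admit a witnessing path, which is a reachability query costing $\bigO{|E|}$, so no pair is ever enumerated there. For the atoms of the connex part, whose endpoints all lie in $\bm Y\cup\bm Z$, I compute the calibrated output, i.e.\ the RPQ restricted to the reduced (relevant) endpoint values; after reduction each of its tuples is the projection of some answer of $Q'$, so its size $k_i$ satisfies $k_i\le N'$. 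By the single-RPQ calibrated-evaluation lemma (the core building block) such a relation is produced in $\bigO{|E|+|E|\cdot k_i^{1/2}}$; summing over the constantly many atoms and using $\sqrt{N'}\le\out^{1/2}|V|^{\ctw/2}$ gives the middle summand $\bigO{|E|\cdot\out^{1/2}\cdot|V|^{\ctw/2}}$. Finally, the Yannakakis join over the calibrated connex relations emits $Q'(\bm Y\cup\bm Z)$ in time linear in input-plus-output, i.e.\ $\bigO{N'}=\bigO{\out\cdot|V|^{\ctw}}$, and the closing projection $\pi_{\bm Y}$ costs a further $\bigO{N'}$; this is the third summand. Summing the three phases yields the claimed bound.

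The crux is the single-RPQ calibrated step, which must emit $r_i$ restricted to the relevant endpoints without ever materializing its full output, whose size can reach $|V|^2$ even when $\out$ is tiny. The $\sqrt{\cdot}$ factor is the price of output-sensitivity, and I expect to obtain it by a heavy/light threshold argument on the product graph $A_{r_i}\times G$: fixing $\theta=k_i^{1/2}$, the endpoints whose reachable set has size at least $\theta$ number at most $k_i/\theta=\sqrt{k_i}$ and are each resolved by one full traversal at cost $\bigO{|E|}$, while the low-reachability endpoints are resolved by budgeted traversals whose total cost is $\bigO{|E|\cdot\theta}$, the two regimes balancing to $\bigO{|E|\cdot k_i^{1/2}}$. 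Making the budgeted traversals rigorous in the product graph — where many automaton-annotated vertices collapse onto one graph vertex, so that bounding the work by the number of emitted pairs is not immediate — is the delicate point; a second subtlety is that, when a bound variable survives with degree at least three, the existence test at it requires intersecting the reachability sets of all incident atoms rather than a single pairwise semi-join, which I must still carry out within the $\bigO{|E|}$ reachability budget so that it is absorbed into the first two summands.
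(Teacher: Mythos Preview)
Your Stage two analysis of the free-connex evaluation is essentially right and matches the paper: semi-join reductions via reachability for $\bigO{|E|}$, calibrated RPQ outputs via the heavy/light single-RPQ lemma for $\bigO{|E|\cdot k_i^{1/2}}$, and a final acyclic join for $\bigO{N'}$.

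The gap is in Stage one. You posit a \emph{single} free-connex $Q'$ with $|\bm Z|=\ctw$ promoted variables and defer its construction as ``established earlier'', but no such single $Q'$ is established, and in general none exists. The contraction width is defined (Definition~\ref{def:con-tree-width}) as the minimum over free-connex tree decompositions of the maximum number of surviving bound variables in any \emph{bag's} induced query. A single global contraction of $Q$ is the trivial one-bag decomposition, and it can leave strictly more than $\ctw(Q)$ bound variables: in the paper's running example, global contraction leaves both $B$ and $F$ (two degree-$3$ bound variables), whereas the optimal decomposition achieves $\ctw=1$ by placing $B$ and $F$ in separate bags, each of which allows the other to be eliminated because fewer variables are free there. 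With your single $Q'$ you would only get $N'\le\out\cdot|V|^{2}$ on that example, not $\out\cdot|V|$.

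The paper's proof therefore does not go through one $Q'$. It fixes a free-connex tree decomposition $\calT$ realizing $\ctw(Q)$, and \emph{per bag} it contracts the induced query, promotes at most $\ctw$ variables, runs the free-connex evaluation (your Stage two, applied bag-wise), and projects back to the bag's free variables; each bag output has size at most $\out$. The final step joins these bag outputs, which is a full acyclic conjunctive query over materialized relations of size $\le\out$ and costs $\bigO{\out}$. To repair your argument you must insert this tree-decomposition layer; the per-bag instance of your $N'$ bound then legitimately reads $\out\cdot|V|^{\ctw}$.

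A minor point: your closing worry about intersecting reachability sets at a surviving degree-$\ge 3$ bound variable is moot, since those are precisely the variables that get promoted to free before the free-connex evaluation begins.
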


In a nutshell, our algorithm proceeds in two stages. In the first stage, it contracts the given CRPQ $Q$ into a free-connex CRPQ $Q'$ such that the output of $Q$ can be obtained from the output of $Q'$. This contraction repeatedly uses two mechanisms to rewrite the query until 
they cannot be applied anymore.
The first mechanism  removes a bound variable that appears in at most two atoms of $Q$. This corresponds to concatenating the RPQs in the two atoms into a single RPQ. It exploits the closure of RPQs under composition and does not increase the data complexity of RPQ evaluation. 
The second mechanism promotes those bound variables that violate the free-connex property to free variables so that the query becomes free-connex. This can increase the query output size by a multiplicative factor of at most $|V|$ for each promoted bound variable. The minimum necessary number of promoted bound variables gives the contraction width $\ctw$ of $Q$, which is a novel parameter specific to CRPQs. The size of the output of $Q'$ can thus be at most $\out\cdot|V|^{\ctw}$, where $\out$ is the size of the output of $Q$.

In the second stage, our algorithm evaluates the free-connex acyclic CRPQ $Q'$. For this, it first computes the {\em calibrated} outputs of the RPQs in $Q'$ in time proportional to their sizes. For an RPQ atom $r(X,Y)$ in $Q'$, where $S\subseteq\{X,Y\}$ is the set of free variables of $Q'$ that occur in this atom, its calibrated output is the projection of the output of $Q'$ onto $S$. This calibration ensures that the time to compute the individual RPQs does not exceed the size of the output of $Q'$. It is reminiscent of the classical Yannakakis algorithm~\cite{Yannakakis81}: It is done in one bottom-up pass followed by one top-down pass over the join tree of $Q'$. However, whereas the Yannakakis algorithm works on input relations that are materialized and have sizes linear in the input database size, our algorithm works with non-materialized RPQs whose output sizes can be larger than 
the input graph size and even than the query output size. In our setting, calibration is thus essential for output sensitivity.
By joining the calibrated RPQ outputs, the algorithm then computes the output of $Q'$. The output of $Q$ is then computed by projecting away the columns of the promoted bound variables from the output of $Q'$.

\nop{
\ahmet{The above description of our algorithm is bit different from what we actually do in the main body. Maybe, this was intended to keep the description high-level. I describe in the following what we actually do, so we are on the same page (we do not need to be that precise here of course).\\
We construct a free-connex tree decomposition for the given CRPQ $Q$. For each bag, we create a query $Q'$ induced by that bag. Then, for each induced query $Q'$, we do the following. We eliminate bound variables and adapt, if necessary, the input graph to preserve the output of 
$Q'$. This process can produce new non-join bound variables. Then, we promote those bound variables that are in $Q'$, but which we are not able to eliminate, to free variables. This gives us a free-connex query $Q''$ resulting from $Q'$. Then we compute the output of $Q''$ and project the result onto the free variables of $Q'$. Finally, we join the outputs obtained for each induced query $Q'$. The contraction width is the maximal number of promoted variables over all bags.
}
}

\nop{
\dans{this definition seem vacuous: if the output is non-empty, then for every output tuple $(x,y,\ldots)$ each of the atoms (pairs?) must be satisfied, hence every pair contributes to that output tuple. I believe you meant to say something like this: ``for every query atom $R(X,Y)$ and any pair of nodes $(x,y)$ satisfying the regular expression $R$, there exists a query output that uses the pair $(x,y)$.''}.
}

Although the contraction width measures the complexity of transforming acyclic CRPQs into free-connex acyclic CRPQs, it differs from the free-connex fractional hypertree width for conjunctive queries~\cite{OlteanuZ15,Hu25}. 

\begin{restatable}{proposition}{PropertiesContractionWidth}
\label{prop:properties_contraction_width}
There is an infinite class of CRPQs such that the contraction  width of the queries in the class is bounded but their free-connex fractional hypertree width is unbounded.   
\end{restatable}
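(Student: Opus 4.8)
The plan is to exhibit a concrete infinite family of acyclic CRPQs — a family of ``stars'' — for which promoting a single bound variable suffices to reach free-connexity, so the contraction width is $1$, while as plain conjunctive queries their free-connex fractional hypertree width (which I write $\fhtw_{\mathrm{fc}}$) grows linearly in the number of atoms. Concretely, for each $k\geq 3$ I would take
\[
Q_k(y_1,\dots,y_k)\defeq\bigwedge_{i\in[k]} r(z,y_i),
\]
where $z$ is the only bound variable, $y_1,\dots,y_k$ are free, and $r$ is any fixed RPQ (say, a single label). The query graph is the star $K_{1,k}$, hence a forest, so each $Q_k$ is acyclic, and since the widths depend only on the query structure and the free/bound partition, the choice of $r$ is immaterial.

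First I would pin down the contraction width. For $k\geq 3$ the center $z$ has degree exceeding two, so the composition mechanism is inapplicable, and the free $y_i$ cannot be removed; thus the only route to free-connexity is promotion. The query $Q_k$ is not free-connex, because augmenting $\calH(Q_k)$ with the free-variable hyperedge $\{y_1,\dots,y_k\}$ creates a cycle through $z$ (already $\{z,y_1\},\{z,y_2\}$ together with the free edge is cyclic). Promoting $z$ yields the full acyclic query $\bigwedge_{i} r(z,y_i)$ on all variables, which is free-connex; hence $\ctw(Q_k)=1$ for every $k\geq 3$.

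Next I would lower-bound $\fhtw_{\mathrm{fc}}$ by a running-intersection argument. In any free-connex tree decomposition there is a connected subtree $T_0$ (containing the root) whose bags cover exactly $\{y_1,\dots,y_k\}$ and omit $z$; the bags containing $z$ form a connected subtree $T_z$ disjoint from $T_0$, and in a tree these two subtrees meet at a unique boundary bag $t^\star\in T_0$. For each $i$, the atom $r(z,y_i)$ forces $y_i$ and $z$ to co-occur in a bag of $T_z$, while $y_i$ also occurs in a bag of $T_0$; by connectivity every bag on the tree-path between these occurrences — in particular $t^\star$ — must contain $y_i$. Hence $t^\star\supseteq\{y_1,\dots,y_k\}$. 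Since each $y_i$ is covered only by the hyperedge $\{z,y_i\}$, any fractional edge cover must place weight at least $1$ on each of the $k$ edges, so the fractional cover number of $t^\star$ is exactly $k$. This gives $\fhtw_{\mathrm{fc}}(Q_k)\geq k$, with equality from the single-bag decomposition; the free-connex fractional hypertree width is therefore unbounded over the family.

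The main obstacle is making this $\fhtw_{\mathrm{fc}}$ lower bound fully rigorous under the precise definition of a free-connex (fractional) tree decomposition, and confirming that the forced bag genuinely costs $k$ rather than something smaller — which holds here precisely because each free variable is covered by a single hyperedge. A secondary point to check is that the comparison is fair: that the two parameters are evaluated on the same query under the intended semantics ($\ctw$ for the CRPQ, $\fhtw_{\mathrm{fc}}$ for the underlying CQ), and that composition really is unavailable for $k\geq 3$ so that promotion is unavoidable and $\ctw=1$ is tight. The family can be enriched to also exercise the composition mechanism by subdividing each arm into a path of degree-two bound variables, which composition collapses back to $Q_k$ on the CRPQ side; I would keep the width computations on the plain star to avoid recomputing covers after subdivision.
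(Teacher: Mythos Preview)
Your proposal is correct and follows essentially the same approach as the paper: both use the $k$-star CRPQs $Q_k(X_1,\dots,X_k)=\bigwedge_{i\in[k]} r_i(Y,X_i)$ with bound center $Y$, observe that the single bound variable has degree $k\geq 3$ so $\ctw(Q_k)=1$, and that the free-connex fractional hypertree width is $k$. You supply a detailed running-intersection argument for the $\fw$ lower bound that the paper simply asserts; the only minor slip is that the witnessing upper-bound decomposition is not the single-bag one (which is not free-connex) but the two-bag decomposition with root $\{y_1,\dots,y_k\}$ and child $\{z,y_1,\dots,y_k\}$, though for the proposition only the lower bound is needed.
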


For instance, the free-connex acyclic CRPQs have free-connex fractional hypertree width $\fw=1$ and contraction width $\ctw=0$. The $k$-path CRPQs, where the join variables are free and the other variables are bound, have $\fw=2$ and $\ctw=0$. The star CRPQs, where the join variable is bound and all other variables are free, have unbounded free-connex fractional hypertree width and $\ctw=1$.

Of particular interest are CRPQs with contraction width 0, since our algorithm evaluates them with the lowest complexity among all CRPQs: $\bigO{|E| + |E|\cdot \out^{1/2}+\out}$. This matches the best known output-sensitive data complexity for evaluating RPQs~\cite{KhamisKOS25}. Examples of CRPQs with contraction width 0 include: (i) free-connex acyclic CRPQs, but also (ii) non-free-connex acyclic CRPQs that are obtained from free-connex acyclic CRPQs by replacing any RPQ atom, whose variables are free, by an arbitrarily long path CRPQ whose free variables are the same as of the replaced atom. For instance, the $k$-path CRPQs with $k\geq 2$ and an arbitrary set of free variables has contraction width 0.

For free-connex acyclic CRPQs, we give a more refined complexity of our algorithm that uses the maximum size $\out_c$ of the calibrated output of any RPQ in the query. Note that by construction, $\out_c\leq\out$. Also, since all RPQ atoms are binary, $\out_c \leq |V|^2$.

\begin{restatable}{proposition}{FreeConnex}
\label{prop:free-connex}
    \begin{itemize}
        \item For any free-connex acyclic CRPQ $Q$, it holds $\ctw(Q) =0$.
        \item     Any free-connex acyclic CRPQ $Q$ over an edge-labeled graph $G = (V,E, \Sigma)$ can be evaluated in $\bigO{|E| + |E|\cdot \out_c^{1/2} + \out}$ data complexity.
    \end{itemize}
\end{restatable}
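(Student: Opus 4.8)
The plan is to treat the two bullets separately: the first is immediate from the definition of contraction width, and the second refines the generic bound of Theorem~\ref{theo:general} by replacing $\out$ with $\out_c$ in the dominant middle term.

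For the first bullet, recall that $\ctw(Q)$ is the minimum number of bound variables that the contraction must promote to free variables in order to reach a free-connex acyclic query. When $Q$ is already free-connex acyclic, the promotion mechanism never needs to fire: the only other mechanism, eliminating a bound variable shared by at most two atoms via RPQ composition, merely merges two adjacent nodes of a join tree and touches no free variable, so it preserves free-connexity. Hence a free-connex witness survives the contraction without any promotion, and the minimum number of promotions is $0$, i.e.\ $\ctw(Q)=0$. I would make this precise by fixing a free-connex join tree of $Q$ and checking that each elimination step maps it to a free-connex join tree of the contracted query.

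For the second bullet, I would fix a free-connex join tree $T$ of $Q$, so that the free variables induce a connected subtree, and run a two-phase algorithm. In the first phase, for each RPQ atom $r_i(X_i,Y_i)$ with free variables $S_i\subseteq\{X_i,Y_i\}$, I compute its calibrated output $\pi_{S_i}(\text{output of }Q)$; by definition this has size at most $\out_c$. The crucial point is that this is done using the output-sensitive RPQ evaluator of~\cite{KhamisKOS25}, but constrained so that it only ever enumerates calibrated tuples: before and while evaluating $r_i$, I restrict the admissible values of $X_i$ and $Y_i$ to the vertices that pass the semi-join reductions propagated from the rest of $T$, so that the RPQ output actually materialized has size $M_i\le\out_c$. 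Feeding these endpoint restrictions into the evaluator lets it run within its $\bigO{|E| + |E|\cdot M_i^{1/2} + M_i}$ profile with $M_i\le\out_c$, rather than with the full RPQ output size, which can be as large as $|V|^2$. Summing over the constant number of atoms gives $\bigO{|E| + |E|\cdot \out_c^{1/2} + \out_c}$, and since $\out_c\le\out$ the additive $\out_c$ is absorbed. In the second phase I run the standard bottom-up and top-down Yannakakis passes over the calibrated relations; because $T$ is free-connex acyclic and each calibrated relation has size at most $\out_c$, the join assembles exactly the output of $Q$ with no intermediate blow-up, in $\bigO{\out}$ additional time, yielding the claimed $\bigO{|E| + |E|\cdot \out_c^{1/2} + \out}$.

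The main obstacle is the constrained RPQ evaluation in the first phase: making it output-sensitive with respect to the \emph{calibrated} size $\out_c$ rather than the full RPQ output size. Standard semi-join reduction presumes materialized relations, but here each RPQ relation is only implicitly defined and may be far larger than $\out_c$, so one cannot first materialize and then reduce. The delicate work is therefore to break the circular dependency between the calibration, which determines the relevant endpoints, and the RPQ evaluation, which is what we are trying to bound: one must interleave the two so that only the relevant endpoint pairs are enumerated, and verify that the restricted evaluation still fits the $\bigO{|E| + |E|\cdot M^{1/2} + M}$ profile of~\cite{KhamisKOS25} with $M\le\out_c$. A secondary subtlety is the direction of projection: an atom with only one free endpoint contributes a set of vertices rather than pairs, so its calibration is a unary reachability computation, and I would handle these atoms separately.
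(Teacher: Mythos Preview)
Your proposal is correct and tracks the paper's approach closely. For the second bullet, your two-phase plan---compute calibrated RPQ outputs by feeding semi-join--style endpoint restrictions into the output-sensitive RPQ evaluator, then join---is exactly the paper's \textsc{EvalFreeConnex} procedure (Proposition~\ref{prop:complexity_eval_free_connex}); the ``constrained RPQ evaluation'' you identify as the main obstacle is precisely what Lemma~\ref{lem:reachable_set_ospg_variant} supplies, and your separate treatment of atoms with only one free endpoint matches that lemma's case split.

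One caveat on the first bullet: you argue via the informal description of $\ctw$ from the introduction (``minimum number of promotions needed to reach a free-connex query''), but the formal Definition~\ref{def:con-tree-width} is phrased in terms of free-connex tree decompositions, and you should tie your argument to it. The paper's proof is more direct here: since in a free-connex acyclic query the free variables form a connected subtree of the query graph, every bound variable lives in a subtree hanging off that core and can be removed by repeatedly eliminating degree-$1$ leaves, so $\cg(Q)$ has no bound variables; taking, say, the two-bag free-connex decomposition with bags $\free(Q)$ and $\vars(Q)$ (both of whose induced queries are $Q$ itself) then witnesses $\ctw(Q)=0$. Your observation that elimination preserves free-connexity is correct and leads to the same endpoint, but make the connection to Definition~\ref{def:con-tree-width} explicit.
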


\nop{
\dans{It's unclear why that's the case.  If you add another ``tree'' (or path?) connecting the endpoints then the query becomes cyclic.}. 

\dan{Indeed, sloppy language. I rephrased locally: The idea is to {\em replace} an edge between two free variables by a tree of bound variables.}
}




\subsection{Comparison with Related Work}

{\em None of the existing evaluation algorithms for CRPQs is output-sensitive}: Their complexity is not only a function of the input data graph and query output sizes but also of the output sizes of the RPQs in the query, yet the RPQ outputs can be asymptotically larger than the query output. In the following, we overview the state-of-the-art algorithms for CRPQ evaluation and show that, for problem instances where the query output is asymptotically smaller than the worst-case output size or even than the largest output size of any of the RPQs in the query, our algorithm can outperform them.

The classical evaluation algorithm for RPQs, called \pg, computes the product graph of the data graph and the non-deterministic finite automaton defined by the given RPQ and runs in  $\bigO{|V|\cdot |E|}$ data complexity~\cite{Baeza13, MartensT18, MendelzonW95}. This cannot be improved by combinatorial algorithms: There is no combinatorial\footnote{This term is not defined precisely. A combinatorial algorithm does not rely on Strassen-like fast matrix multiplication. Using fast matrix multiplication, the data complexity of \pg can be improved to $\bigO{|V|^{\omega}}$, where $\omega$ is the matrix multiplication exponent~\cite{CaselS23} and currently $\omega = 2.37$.} algorithm to compute RPQs with $\bigO{(|V|\cdot |E|)^{1-\epsilon}}$ data complexity\footnote{The lower bound in~\cite{CaselS23} has $|V|+|E|$ instead of $|E|$, but the proof also holds for the formulation that just uses $|E|$.} for any $\epsilon >0$, unless the combinatorial Boolean Matrix Multiplication conjecture fails~\cite{CaselS23}.\footnote{The combinatorial Boolean Matrix Multiplication  conjecture is as follows: Given two $n\times n$ Boolean matrices $A,B$, there is no combinatorial algorithm that multiplies $A$ and $B$ in $\bigO{n^{3-\epsilon}}$ for any $\epsilon > 0$.} When taking the query output into consideration, however, \pg's complexity can be improved. The output-sensitive variant of \pg, called \ospg, evaluates any RPQ with $\bigO{|E| + |E|\cdot \out^{1/2}}$ data complexity, where $\out$ is the output size of the  RPQ~\cite{KhamisKOS25}.\footnote{The data complexity of \ospg is $\bigO{|E| \cdot \Delta + \min(\out/\Delta, |V|) \cdot |E|}$~\cite{KhamisKOS25}. The user-defined threshold $\Delta$ was set to $|E|^{1/2}$ in~\cite{KhamisKOS25}, yielding $\bigO{|E|^{3/2} + \min(\out\cdot |E|^{1/2}, |V|\cdot|E|)}$. For $\Delta = \out^{1/2}$, we obtain instead $\bigO{|E|\cdot \out^{1/2} + |E|}$, where the additive term $|E|$ is necessary to scan the input data even when for $\out=0$.} The complexity of \ospg is at most the complexity of \pg, since $\out \leq |V|^2$.

The state-of-the-art evaluation algorithm for CRPQs proceeds in two stages~\cite{DBLP:conf/icdt/CucumidesRV23}. Using \pg, it first materializes the binary relations representing the output of the RPQs in the given CRPQ $Q$. This stage turns the CRPQ into an equivalent conjunctive query $Q'$ over materialized binary relations. The algorithm then evaluates $Q'$ using a worst-case optimal join algorithm~\cite{Veldhuizen14,Ngo:JACM:18}. 
For acyclic CRPQs with arbitrary free variables, we can immediately improve this state-of-the-art. First, we use \ospg instead of \pg to materialize the RPQs with $\bigO{|E| + |E|\cdot \out_a^{1/2}}$ data complexity, where $\out_a \leq |V|^2$ is the maximum size of the {\em non-calibrated} output of the RPQs.\footnote{$\out_a$ is not to be confused with $\out_c$, which is the maximum size of the calibrated output of  any RPQ in $Q$. It holds that $\out_c\leq \out_a$ and there can be instances for which $\out_c=0$ yet $\out_a=|V|^2$.} Then, $Q'$ can be evaluated using the classical Yannakakis algorithm, called here \textsf{Yan},~\cite{Yannakakis81} with $\bigO{\out_a\cdot\out}$ data complexity, or using its output-sensitive refinement~\cite{Hu25}, called here \textsf{OSYan}, in $\bigO{\out_a\cdot \out^{1-1/\fw(Q')}+\out}$ data complexity, where $\fw(Q')$ is the free-connex fractional hypertree width of $Q'$ and $\out$ is the output of $Q'$ (and also of $Q$). Overall, the best complexity for evaluating acyclic CRPQs using prior work can therefore be achieved by combining \ospg and \textsf{OSYan}: $\bigO{|E| + |E|\cdot\out_a^{1/2}+\out_a\cdot\out^{1-1/\fw(Q')}+\out}$. Yet this approach, as well as all previous approaches, are {\em not} output-sensitive: Indeed, one can easily construct data graphs for which the output of some RPQs in $Q$ consists of all quadratically many pairs of vertices in the data graph, i.e., $\out_a=\bigO{|V|^2}$, while the query output is empty, i.e., $\out=0$. In that case, our algorithm takes time $\bigO{|E|}$, whereas \ospgosyan takes $\bigO{|E|\cdot|V|}$. 

For free-connex acyclic CRPQs, which have free-connex fractional hypertree width 1, \ospgosyan needs  
$\bigO{|E| + |E|\cdot\out_a^{1/2}+\out_a +\out}$, which simplifies to $\bigO{|E| + |E|\cdot\out_a^{1/2} +\out}$, since the time $\bigO{|E|\cdot\out_a^{1/2}}$ to compute the RPQs cannot be smaller than the maximum output size of the RPQs. In contrast, our algorithm needs $\bigO{|E| + |E|\cdot\out_c^{1/2} +\out}$ (Proposition~\ref{prop:complexity_eval_free_connex}).
Since $\out_c \leq \out_a$, this implies that for free-connex acyclic queries, our algorithm performs at least as good as \ospgosyan for any input and outperforms \ospgosyan in case $\out_a$ is larger than $\out_c$. 
The next examples compare our algorithm with \ospgosyan on CRPQs that are not free-connex acyclic.  
Further examples are in Appendix~\ref{subsec:app_query_examples}.

\nop{In case all variables of $Q'$ are free, the running time of the second stage is proportional to the worst-case output size of $Q'$: $\bigO{\out_a^{\rho^*(Q')} + \out}$, where $\out_a \leq |V|^2$ is the maximum size of the {\em non-calibrated} output of the RPQs, $\out$ is the output of $Q'$ (and also of $Q$), and $\rho^*(Q')$ is the fractional edge cover of $Q'$. Note that $\out_a$ is not to be confused with $\out_c$, which is the maximum size of the calibrated output of  any RPQ in $Q$. 
\dans{Are you considering cyclic queries later in the paper?  I just wonder if the discussion of $\rho^*$ here is really needed.}
}

\begin{example}
\rm
    Let the $k$-path CRPQ $Q_1(X_1,X_{k+1}) = r_1(X_1,X_2),\ldots,r_{k}(X_{k},X_{k+1})$, where $r_1,\ldots,r_k$ are RPQs. The contraction width of $Q_1$ is $\ctw = 0$, since we can compose the $k$ RPQs into a single RPQ $r$ and rewrite $Q_1$ into the free-connex query $Q_1'(X_1,X_{k+1}) = r(X_1,X_{k+1})$. The free-connex fractional hypertree width of $Q_1$ is $\fw = 2$. Thus, $Q_1$ can be computed in time $\bigO{|E| + |E|\cdot \out^{1/2} + \out}$ using our algorithm (Theorem~\ref{theo:general}) and in time $\bigO{|E| + |E|\cdot\out_a^{1/2}+\out_a\cdot\out^{1/2}+\out}$ using \ospgosyan. Depending on the scale of $\out_a$, one algorithm is asymptotically faster than the other. Our algorithm outperforms \ospgosyan when $\out_a = \omega(|E|)$ or $\out_a = \omega(\out)$, i.e., when the RPQ outputs are asymptotically larger than $|E|$ or the query output size $\out$. 
    \nop{The key reason for our improvement is that the chain of $k$ RPQs can be contracted into a single RPQ without incurring an increase in the data complexity for its evaluation (the query complexity increases by a factor $k$). The query obtained by contraction is one RPQ and therefore trivially free-connex and can be computed using \ospg.
    }
    \qed
\nop{
    Let the $k$-path CRPQ $Q_1(X_1,X_{k+1}) = r_1(X_1,X_2),\ldots,r_{k}(X_{k},X_{k+1})$, where $r_1,\ldots,r_k$ are RPQs. The contraction width of $Q_1$ is $\ctw = 0$ and the free-connex fractional hypertree width is $\fw = 2$. Thus, $Q_1$ can be computed in time $\bigO{|E| + |E|\cdot \out_c^{1/2} + \out}$ using our algorithm (Proposition~\ref{prop:free-connex}) and in time $\bigO{|E| + |E|\cdot\out_a^{1/2}+\out_a\cdot\out^{1/2}+\out}$ using \ospgosyan. Since $\out_c \leq \out_a$, the complexity of our algorithm has at least the  additive factor $\out_a\cdot\out^{1/2}$ less than the best prior approach. This extra factor can be made arbitrarily large and can dominate all other factors in the complexity. The key reason for our improvement is that the chain of $k$ RPQs can be contracted into a single RPQ without incurring an increasing in the data complexity for its evaluation (the query complexity increases by a factor $k$). The query obtained by contraction is one RPQ and therefore trivially free-connex and can be computed using \ospg.\qed
}
\end{example}

\begin{example}
\rm
    Let the $k$-star CRPQ $Q_2(X_1,\ldots,X_{k}) = r_1(Y,X_1),\ldots,r_{k}(Y,X_{k})$, where $r_1,\ldots,r_k$ are RPQs for $k>2$. The contraction width of $Q_2$ is $\ctw=1$ and the free-connex fractional hypertree width is $\fw = k$. Thus, $Q_2$ can be computed in time $\bigO{|E| + |E|\cdot\out^{1/2} \cdot |V|^{1/2} + \out\cdot |V|}$ using our algorithm (Theorem~\ref{theo:general}) and in time $\bigO{|E| + |E|\cdot\out_a^{1/2}+\out_a\cdot\out^{1-1/k}+\out}$ using \ospgosyan. A fine-grained comparison of the two approaches is rather daunting. Yet it can be observed that depending on the scale of $\out_a$ and $\out$, one algorithm incurs a lower runtime than the other. One regime where our algorithm is faster is for large RPQ outputs but small overall query output: for instance, for $\out_a=\bigO{|V|^2}$ and $\out=o(1)$, our algorithm is faster by an $\bigO{|V|^{1/2}}$ factor. Our algorithm achieves the stated complexity as follows. It first promotes the bound variable $Y$ to free, so that the obtained query $Q_2'$ becomes free-connex. Then it evaluates $Q_2'$ with the same complexity as computing RPQs using \ospg, i.e., in time $\bigO{|E|+|E|\cdot\out^{1/2}}$. The output of $Q_2'$ can be a multiplicative factor $|V|$ larger than the output of $Q_2$. The complexity thus has $\out\cdot|V|$ in place of $\out$. \qed  
\end{example}

The rest of the paper is organized as follows.
Section~\ref{sec:prelims} introduces basic concepts.
Section~\ref{sec:contraction} defines and illustrates the contraction  width for CRPQs.
Sections~\ref{sec:freeconnex} and~\ref{sec:general} detail our algorithms for free-connex and general acyclic CRPQs, respectively.
We conclude in Section~\ref{sec:conclusion}. 
Proof details appear in the appendix.
\section{Preliminaries}
\label{sec:prelims}
For a natural number $n$, we define $[n] = \{1,2,\ldots,n\}$. In case $n=0$, we have $[n]=\emptyset$.
\paragraph*{Regular Languages}
An {\em alphabet} $\Sigma$ is a finite set of symbols. 
The set of all strings over an alphabet $\Sigma$ is denoted as $\Sigma^*$.
A {\em language} over $\Sigma$ is a subset of $\Sigma^*$.
We use the standard definition of {\em regular expressions}
composed of symbols from an alphabet $\Sigma$, the empty string
symbol $\epsilon$, and the concatenation, union, and Kleene star operators. We denote by $L(r)$ the language defined by a regular expression $r$. 
A language is {\em regular} if it can be defined by a regular expression.

The {\em reverse} $\rev(w)$ of a string $w \in \Sigma^*$ is defined recursively as follows: 
$\rev(w) = \epsilon$ if $w = \epsilon$ and 
$\rev(w) = \sigma \cdot \rev(v)$ if $w = v\sigma$ for some $v \in \Sigma^*$
and $\sigma \in \Sigma$.
The reverse of a language $L$ is $L^R = \{\rev(w) \mid w \in L\}$. For every regular expression $r$, we can construct in $\bigO{|r|}$ time a regular expression $r^R$ that defines the reverse of $L(r)$~\cite{HopcroftMU07}.

\paragraph*{Edge-Labeled Graphs}
An  {\em edge-labeled graph} is a directed graph 
$G=(V, E, \Sigma)$, where $V$ is a set of vertices, $\Sigma$ is an alphabet, and 
$E \subseteq V \times \Sigma \times V$ is a set of labeled edges. 
A triple $(u,\sigma, v) \in E$ denotes an edge from vertex $u$ to vertex $v$ labeled by $\sigma$\footnote{
Without loss of generality, we consider edge-labeled graphs where every vertex has at least one incident edge. A single pass over the edge relation suffices to extract these vertices.}. 
A {\em path} $p$ in $G$ from vertex $u$ to vertex $v$ consists of a sequence $u = w_0, w_1 \ldots , w_k = v$ of vertices from $V$ and a sequence $\sigma_1, \ldots ,\sigma_k$ of symbols from $\Sigma$
for some $k \geq 0$ such that for
every $i \in [k]$, there is an edge $(w_{i-1}, \sigma_i, w_i) \in E$. We say that the string $\sigma_1\sigma_2 \cdots \sigma_k$
is the {\em label} of the path $p$.

The {\em symmetric closure} $\hat{G}$ of $G$ contains for each edge in $G$ labeled by some symbol $\sigma$, an additional edge in opposite direction labeled by a fresh symbol $\hat{\sigma}$. That is, 
$\hat{G} = (V, \hat{E}, \hat{\Sigma})$, where 
$\hat{\Sigma} = \Sigma \cup \{\hat{\sigma} \mid \sigma \in \Sigma\}$ and $\hat{E} = E \cup \{(v,\hat{\sigma},u) \mid (u,\sigma, v) \in E\}$. 

\paragraph*{Conjunctive Regular Path Queries}
A {\em regular path query (RPQ)} over an alphabet $\Sigma$
is a regular expression using $\Sigma$.
The output of an RPQ $r$ evaluated over an edge-labeled graph 
$G=(V, E, \Sigma)$ is the set of all pairs $(u,v)$ of 
vertices such that $G$ contains a path from $u$ to $v$
labeled by a string from $L(r)$.

A {\em conjunctive regular path query (CRPQ)}, or query for short, is of the form 
\begin{align}
Q(\bm F) = r_1(X_1,Y_1) \wedge \cdots \wedge r_n(X_n,Y_n),
\label{eq:CRPQ}
\end{align}
where: each $r_i$ is an RPQ; each $r_i(X_i,Y_i)$ is an 
{\em atom};
$\vars(Q) = \{X_i, Y_i \mid i \in [n]\}$ is the set of variables;
$\free(Q) = \bm F \subseteq \vars(Q)$ is the set of {\em free} variables;
$\bound(Q) = \vars(Q)\setminus\free(Q)$ is the set of {\em bound} variables; 
$\atoms(Q)$ is the set of atoms; and the conjunction of the atoms is the {\em body} of the query.
The query is called {\em full} if all its variables are free.

The {\em query graph}  of $Q$ is an undirected multigraph 
$G_Q = (V, E)$, where $V$ is the variable set of $Q$
and $E$ is a multiset that contains an undirected edge 
$\{X_i,Y_i\}$ for each atom $r_i(X_i,Y_i)$, i.e., 
$E = \{\hspace{-0.2em}\{\, \{X_i, Y_i\} \mid r_i(X_i, Y_i) \in \atoms(Q)\, \}\hspace{-0.2em}\}$.
Hence, an atom $r_i(X_i, Y_i)$ with $X_i = Y_i$ is represented by the singleton set $\{X_i\}$ in the query graph. The leftmost graph in Figure~\ref{fig:cw-example} visualizes the query graph (free variables are underlined) of the CRPQ in Example~\ref{ex:elimination_procedure}. 

A {\em cycle} in a query graph is 
a sequence $u_1, e_1, u_2, e_2, \dots ,e_{k-1},u_k$ of vertices 
$u_1, \dots ,u_k$ and edges $e_1, \dots ,e_{k-1}$
for $k \geq 2$
such that 
(1) $u_i \neq u_j$ for $i,j \in [k-1]$ and $i\neq j$, 
(2) $e_\ell \neq e_m$ for $\ell,m \in [k-1]$ and $\ell\neq m$, and (3) $u_1 = u_k$.
A query is called {\em acyclic} if its query graph is cycle-free, which means that it  is a forest. 
In particular, the graph of an acyclic query contains neither self-loops (i.e., edges of the form $\{u\}$) nor multiple edges between the same vertex pair (i.e., subquery of the form $r_1(X,Y)\wedge r_2(X,Y)$).
For any acyclic query, we can construct a {\em join tree}, which is a tree such that:  (1) the nodes of the tree are the atoms of the query, and (2) for each variable, the following holds: if the variable appears in two atoms, then it appears in all atoms on the unique path between these two atoms in the tree~\cite{Brault-Baron16}.
An acyclic query is called {\em free-connex} if in each  tree
of the query graph, the free variables occurring in the tree form a connected subtree~\cite{BraultPhD13}.
Any free-connex acyclic query has a {\em free-top} join tree, which is  a join tree where all atoms containing a free variable form a connected subtree including the root.

We define the semantics of the CRPQ $Q$ from  
Eq.~(\ref{eq:CRPQ}) over an edge-labeled graph $G = (V, E, \Sigma)$ using homomorphisms, which are mappings
from the query variables to the vertices of $G$~\cite{AnglesABHRV17}. 
A mapping $\mu: \free(Q) \rightarrow V$ is in the output of $Q$, if it can be extended to a mapping
$\mu': \free(Q) \cup \bound(Q) \rightarrow V$
such that $G$ has a path from $\mu'(X_i)$
to $\mu'(Y_i)$ labeled by a string from $L(r_i)$, 
for every $i \in [n]$.
For convenience, we represent a mapping $\mu$ by the tuple $(\mu(X))_{X \in \free(Q)}$, assuming a fixed order on the free variables.

The {\em calibrated} output of an RPQ $r_i$ in $Q$ consists of all tuples
in the projection of the output of $Q$
onto the variables $X_i$ and $Y_i$. Hence, the calibrated output of RPQs whose variables are both bound is empty.

Our evaluation algorithms use at intermediate steps  
{\em conjunctive queries} that have atoms of the form $R(X_1, \ldots, X_k)$, where the arity $k \geq 0$ of an atom is arbitrary and $R$ references a  $k$-ary relation (see Appendix~\ref{app:prelims}).
They also use queries of the form 
$Q(\bm F) = r(X,Y) \wedge \bigwedge_{i \in [k]} R_i(\bm X_i)$
where $r$ is an  RPQ and each $R_i$ is a unary or nullary materialized  relation over the vertex set of the input graph. 
The semantics of such queries is straightforward and given in 
Appendix~\ref{app:prelims}.

Our complexity measures and computational model are given in Appendix~\ref{app:prelims}. 

\nop{
\paragraph*{Complexity Measures}
\nop{The {\em evaluation problem} for a CRPQ $Q$ has the following input and output:

\smallskip
    \begin{tabular}{ll} 
        \textbf{Input} & An edge-labeled graph $G = (V,E, \Sigma)$\\
        \textbf{Output} & The set of all tuples in the output of $Q$ over $G$
    \end{tabular}

\smallskip 
}
We use data complexity, i.e., we consider the query fixed and of constant size and measure the time complexity of solving the evaluation problem for CRPQs in terms of the number $|V|$ of vertices and the number $|E|$ of  edges of the input graph, the  size $\out$ of the output of the query, and the maximum size $\out_c$ of the calibrated output of any RPQ in the query.

We consider the RAM model of computation and assume that each materialized relation over the vertex set of the input graph is implemented by a data structure that allows to look up a tuple in
the relation in constant time and enumerate all tuples with constant delay.
}

\section{A Width Measure for Acyclic CRPQs}
\label{sec:contraction}

In this section, we introduce the {\em contraction width}, a new measure to express the complexity of evaluating acyclic CRPQs. We do so in two stages. 
First, we introduce an elimination procedure for query graphs  that allows to remove or contract edges through the elimination of bound variables.
To obtain the contraction width of a CRPQ $Q$, we take a free-connex tree decomposition of $Q$ and 
apply the elimination  procedure to each query induced by a bag of the decomposition. The contraction width of the decomposition is the maximal number of remaining bound variables in any  induced query. The contraction width of $Q$ is the minimal contraction width of any free-connex tree decomposition of $Q$.
\nop{
\dans{This definition of contraction width differs from the one given in Sec.~\ref{sec:intro}}
}

\subsection{Elimination Procedure for CRPQ Graphs}
\label{sec:elimination_procedure}
Consider a cycle-free undirected graph $G = (V, E)$.
The {\em degree} of a vertex $u$ in $G$ is the number of its neighbors, i.e., the number of vertices $v \in V$ with $\{u,v\} \in E$. We define two contraction operations on $G$ that eliminate vertices of degree 1 or 2: 
\begin{itemize}
\item {\em Elimination of a degree-2 vertex}: 
A graph $G'$ is the result of the elimination of a degree-2
vertex $u$ in $G$ if
$\{u,v_1\}, \{u,v_2\} \in E$ for distinct $v_1, v_2 \in V$
and $G'= (V \setminus \{u\}, E \setminus \{\{u,v_1\}, \{u,v_2\}\} \cup \{\{v_1, v_2\}\})$. That is, $G'$ is obtained from $G$ by removing $u$ and the two edges to its only neighbors $v_1$ and $v_2$ and adding a new edge connecting $v_1$ and $v_2$.

\item {\em Elimination of a degree-1 vertex}:
A graph $G'$ is the result of the elimination of a degree-1 vertex $u$ in $G$ if $\{u,v\} \in E$ for some $v \in V$ and $G' = (V', E')$, such that: (1) $E' = E \setminus \{\{u,v\}\}$ and; (2) $V' = V \setminus \{u, v\}$
if there is no $v'$ with $(v,v') \in E'$,  or $V' = V \setminus \{u\}$ otherwise.
In other words, $G'$ results from $G$ by removing the vertex $u$, the edge to its only neighbor $v$, and also $v$ if $v$ has no neighbor other than $u$.

\nop{
\dans{This does require a short explanation.  If you simply remove the regular expression $R(v,v')$ then the new query is no longer equivalent.  Also, a minor comment.  You don't need to consider two cases above: if $Q$ is connected, then $v$ is guaranteed to have an edge to some $v'$.}
}

\nop{
\item {\em Elimination of a degree-0 vertex}: 
A graph $G'$ is the result of the elimination of a degree-0 vertex $u$ in $G$ if $G' = (V \setminus \{u\}, E)$, i.e.,   $G'$ results from $G$ by removing vertex $u$.
}
\end{itemize}
We refer to the graph $G'$ obtained from $G$ after eliminating a vertex $u$ by $\eliminate(G,u)$.

Consider an acyclic  CRPQ $Q$ and its query graph $G_Q$.
An {\em elimination order} for $Q$ is a sequence 
$\omega= (X_1, \ldots, X_k)$ of bound variables such that 
there is a sequence of undirected graphs $(G_0, G_1, \ldots , G_k)$
with $G_0 = G_Q$, $G_i = \eliminate(G_{i-1}, X_i)$ for $i \in [k]$, and all bound variables in $G_k$ have degree at least 3.
We define $\omega(G_Q) = (G_1, \ldots , G_k)$ and
call $G_k$ a {\em contracted  query graph} of $Q$.
Any elimination order leads to the same contracted query graph:

\begin{restatable}{proposition}{UniqueReducedGraph}
\label{lem:unique-reduced-graph}
Every acyclic  CRPQ has a unique contracted query graph.
\end{restatable}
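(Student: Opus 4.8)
The plan is to view the elimination process as an \emph{abstract rewriting system} on labeled forests and to invoke Newman's Lemma: a terminating system whose one-step relation is locally confluent (weakly Church--Rosser) has unique normal forms. Here the objects are forests whose vertices carry a free/bound label (the query graph $G_Q$ and all graphs obtainable from it), the one-step relation is ``$G \to \eliminate(G,u)$ for some bound vertex $u$ of degree $1$ or $2$'', and the normal forms are exactly the graphs in which every bound vertex has degree at least $3$ --- that is, precisely the contracted query graphs. First I would record three easy facts: (a) both operations turn a forest into a forest (in the degree-2 case $v_1$ and $v_2$ were joined only through $u$, so adding $\{v_1,v_2\}$ creates no cycle), so the object class is closed; (b) no vertex ever reaches degree $0$, since degree-2 elimination preserves all other degrees and degree-1 elimination deletes its neighbor exactly when that neighbor would otherwise become isolated; hence a graph is a normal form iff no eligible bound vertex remains iff every bound vertex has degree at least $3$; and (c) each step strictly decreases the number of vertices, so the system terminates and every elimination order is a maximal rewrite sequence ending in a normal form.

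The heart of the argument is local confluence. I would fix a graph $G$ with two distinct eligible bound vertices $a,b$, set $G_a = \eliminate(G,a)$ and $G_b = \eliminate(G,b)$, and show $G_a$ and $G_b$ are joinable. The key structural observation is that $\eliminate(\cdot,u)$ is \emph{local}: degree-2 elimination of $u$ leaves the degree of every other vertex unchanged, and degree-1 elimination of $u$ only decreases the degree of (or deletes) its unique neighbor. Using that the graph is a forest, I would split into cases by the adjacency of $a$ and $b$. If $a$ and $b$ are non-adjacent, eliminating one neither deletes the other nor changes its degree (a deleted neighbor of $a$ cannot be adjacent to $b$, since it is removed only when it has no neighbor besides $a$), so the two operations commute and $\eliminate(G_a,b)=\eliminate(G_b,a)$ is a common reduct. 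If $a$ and $b$ are adjacent, I would check the three degree combinations ($1/1$, $1/2$, $2/2$) directly: in each case a single further elimination on each side reaches a common graph $H$ --- e.g.\ for the path $v - a - b - w$ with both of degree $2$, either order yields $v - w$; for $a$ of degree $1$ attached to $b$ of degree $2$ with other neighbor $w$, both branches leave $a,b$ deleted and the fate of $w$ (kept or collaterally removed) determined identically by $w$'s remaining neighbors. Finally, the case where $a$ and $b$ share a common neighbor $c$ (they cannot be adjacent, as the forest has no triangle) is handled the same way while tracking $c$'s degree; again at most one extra step per branch yields a common reduct.

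With termination and local confluence established, Newman's Lemma gives confluence, hence a unique normal form reachable from $G_Q$. Since every elimination order is a rewrite sequence from $G_Q$ to a normal form, they all terminate at the same graph, which is therefore the unique contracted query graph. I expect the main obstacle to be the bookkeeping in the local-confluence case analysis --- specifically the degree-1 elimination, whose collateral deletion of an isolated neighbor (possibly even a free vertex) makes the two operations fail to commute literally and forces the ``one extra step on each side'' joinability argument rather than a clean diamond. The cases themselves are routine once one notes that every branch removes the same active vertices and decides the deletion of each affected neighbor by a condition (its number of remaining neighbors) that is independent of the elimination order.
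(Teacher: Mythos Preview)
Your proposal is correct and takes essentially the same approach as the paper: both reduce the claim to a local-confluence (diamond) property proved by the same case analysis on the degrees and adjacency of two eligible bound vertices, combined with termination via the decreasing vertex count. The only difference is packaging---you invoke Newman's Lemma explicitly, while the paper unfolds the same termination-plus-local-confluence argument as a direct induction on the number of vertices.
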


We denote the contracted query graph of a CRPQ $Q$ by $\cg(Q)$.

\begin{example}
\label{ex:elimination_procedure}
\rm Consider the following CRPQ whose query graph is depicted on the left in Figure~\ref{fig:cw-example} with free variables being underlined:
\nop{
\begin{align*}
Q(A, C, E, G, H) = r_1(A, B) \wedge &r_2(B, C) \wedge r_3(B, D) \wedge r_4(E,D) \wedge \,\\
& r_5(A, F) \wedge r_6(F, G) \wedge r_7(F, H) \wedge r_8(F, I) \wedge r_9(I, J)
\end{align*}
}
$Q(A, C, E, G, H) = r_1(A, B) \wedge r_2(B, C) \wedge r_3(B, D) \wedge r_4(E,D) \wedge
r_5(A, F) \wedge r_6(F, G) \wedge r_7(F, H) \wedge r_8(F, I) \wedge r_9(I, J)$.
The degree of the bound variable $D$ is 2, so we eliminate it and obtain the second graph in Figure~\ref{fig:cw-example}.  
Next, we eliminate the bound variable $J$ with degree 1, which results in the third graph in Figure~\ref{fig:cw-example}. 
Finally, we eliminate the bound variable $I$ with degree 1, resulting in the fourth  graph in Figure~\ref{fig:cw-example}. 
This graph is the contracted query graph of $Q$, since the only remaining bound variables $B$ and $F$ have both degree 3. 
\qed
\end{example}
\begin{figure}[t]
    \centering
     	\begin{tikzpicture}
      \tikzset{
        free/.style={minimum size=0.3cm, inner sep=0pt, text=black},
        bound/.style={minimum size=0.3cm, inner sep=0pt, text=black},
      }

      \node[free] (a) {\underline{A}};
      \node[bound, below left=0.3cm of a] (b) {B};
      \node[free, below left = 0.3cm of b] (c) {\underline{C}};
    \node[bound, below = 0.3cm of b] (d) {D};
    \node[free, below = 0.3cm of d] (e) {\underline{E}};
      \node[bound, below right=0.3cm of a] (f) {F};
      \node[free, below left = 0.3cm of f] (g) {\underline{G}};
    \node[free, below = 0.3cm of f] (h) {\underline{H}};
    \node[bound, below right = 0.3cm of f] (i) {I};
    \node[bound, below = 0.3cm of i] (j) {J};    

      \draw (a) -- (b);
      \draw (b) -- (c);
      \draw (b) -- (d);	
        \draw (d) -- (e);
        \draw (a) -- (f);
        \draw (f) -- (g);
        \draw (f) -- (h);
        \draw (f) -- (i);
        \draw (i) -- (j);     

      \begin{scope}[xshift=3.5cm]

      \node[free] (a) {\underline{A}};
      \node[bound, below left=0.3cm of a] (b) {B};
      \node[free, below left = 0.3cm of b] (c) {\underline{C}};
    \node[free, below = 0.5cm of b] (e) {\underline{E}};
      \node[bound, below right=0.3cm of a] (f) {F};
      \node[free, below left = 0.3cm of f] (g) {\underline{G}};
    \node[free, below = 0.5cm of f] (h) {\underline{H}};
    \node[bound, below right = 0.3cm of f] (i) {I};
\node[bound, below = 0.3cm of i] (j) {J};

      \draw (a) -- (b);
      \draw (b) -- (c);
      \draw (b) -- (e);	
        \draw (a) -- (f);
        \draw (f) -- (g);
        \draw (f) -- (h);
        \draw (f) -- (i);
        \draw (i) -- (j);

        \end{scope}
        
      \begin{scope}[xshift=7cm]

      \node[free] (a) {\underline{A}};
      \node[bound, below left=0.3cm of a] (b) {B};
      \node[free, below left = 0.3cm of b] (c) {\underline{C}};
    \node[free, below = 0.3cm of b] (e) {\underline{E}};
      \node[bound, below right=0.3cm of a] (f) {F};
      \node[free, below left = 0.3cm of f] (g) {\underline{G}};
    \node[free, below = 0.3cm of f] (h) {\underline{H}};
    \node[bound, below right = 0.3cm of f] (i) {I};

      \draw (a) -- (b);
      \draw (b) -- (c);
      \draw (b) -- (e);	
        \draw (a) -- (f);
        \draw (f) -- (g);
        \draw (f) -- (h);
        \draw (f) -- (i);
   
      \end{scope}

        \begin{scope}[xshift=10.5cm]

      \node[free] (a) {\underline{A}};
      \node[bound, below left=0.3cm of a] (b) {B};
      \node[free, below left = 0.3cm of b] (c) {\underline{C}};
    \node[free, below = 0.3cm of b] (e) {\underline{E}};
      \node[bound, below right=0.3cm of a] (f) {F};
      \node[free, below left = 0.3cm of f] (g) {\underline{G}};
    \node[free, below = 0.3cm of f] (h) {\underline{H}};

      \draw (a) -- (b);
      \draw (b) -- (c);
      \draw (b) -- (e);	
        \draw (a) -- (f);
        \draw (f) -- (g);
        \draw (f) -- (h);
    
      \end{scope}
      
      \end{tikzpicture}

    \caption{First graph: Query graph of the CRPQ from Example~\ref{ex:elimination_procedure}. The free variables are underlined. Second, third, and fourth graph: The graphs obtained after eliminating the bound variables $D$, $J$, and $I$, respectively.
    \nop{\mahmoud{Why does $D$ still appear in the third figure?}
    \ahmet{fixed}
    }}
    \label{fig:cw-example}
\end{figure}
\subsection{Contraction Width}
We recall the definition of tree decompositions. 
%
    A tree decomposition of a CRPQ $Q$ is a pair 
    $\calT = (T, \chi)$, where $T$ is a tree with vertices $V(T)$  
    and $\chi: V(T) \rightarrow 2^{\vars(Q)}$ maps each vertex $t$ of $T$ to a subset $\chi(t)$ of variables of $Q$ such that the following properties hold:
    \begin{itemize}
        \item for every $r(X,Y) \in \atoms(Q)$, it holds $X, Y\in \chi(t)$ for some $t \in V(T)$, and
        \item for every $X \in \vars(Q)$, the set 
        $\{t \mid X \in \chi(t)\}$ is a non-empty connected subtree of $T$.
    \end{itemize}
    The sets $\chi(t)$ are called the {\em bags} of the tree decomposition. We denote the set of bags of $\calT$ by $\bags(\calT)$.
    The CRPQ $Q_{\calB}$ {\em induced} by a bag $\calB$ has the same body as $Q$ and has the free variables $\free(Q) \cap \calB$.\footnote{Our definition of induced query is tailored to CRPQs and differs from the standard notion of induced conjunctive queries. Given a tree decomposition for a conjunctive query $Q(\bm F) = \bigwedge_{i \in [n]} R_i(\bm X_i)$,
    the query induced by a bag $\calB$ is defined as $Q(\calB) = \bigwedge_{i \in [n]} R_i(\bm X_i \cap \calB)$. 
    Applying this definition to CRPQs can lead to induced queries that are not CRPQs, since their atoms can have less than two variables.}
    A tree decomposition $\calT$ is {\em free-connex} if $\calT$ contains a connected subtree such that the union of the bags in this subtree is the set $\free(Q)$ of free variables. 
    We use $\ftd(Q)$ to denote the set of all free-connex tree decompositions of the query $Q$. 

The contraction width of a tree decomposition is the maximal
number of bound variables in the contracted query graph of any induced query. 
The contraction width of a CRPQ is the minimal contraction width of any of  its free-connex tree decompositions:

\begin{definition}[Contraction Width]
\label{def:con-tree-width}
The contraction width $\ctw(Q)$ of any acyclic CRPQ $Q$ and
the contraction width $\ctw(\calT)$ of any tree decomposition $\calT$ of $Q$ are:
$$
\ctw(Q)  = \min_{\calT \in \ftd(Q)} \ctw(\calT)
\hspace{1cm} \text{and} \hspace{1cm} 
\ctw(\calT)  = \max_{\calB \in \bags(\calT)} |\bound(\cg(Q_\calB))|,
$$
where $\bound(\cg(Q_\calB))$ is the set of bound variables in
the contracted query graph of the query induced by the bag $\calB$.
\end{definition}

Note that two induced queries with different sets of free variables can impact differently the contraction width of the decomposition, even though they have the same body. This is because fewer free variables can enable the elimination of a larger number of bound variables.

The next example shows that a decomposition with several bags can lead to a lower contraction width compared to a trivial decomposition consisting of a single bag.

\begin{example}
\rm
Consider again the CRPQ $Q$ from Example~\ref{ex:elimination_procedure}
and the trivial tree decomposition for $Q$ with a single bag. The query induced by the single bag is the query $Q$ itself. 
Hence, the contraction width of this decomposition is 2, as 
shown in Example~\ref{ex:elimination_procedure}.

Now, consider the free-connex tree decomposition $\calT$ for $Q$ from Figure~\ref{fig:ctw-example} (left). The figure next shows the query graph of the induced query $Q_\calB$ for the bag $\calB = \{A,C,E\}$. The next graph is the query graph obtained after eliminating $D$. The second graph from the right is obtained after eliminating the variables $J$, $I$, $H$, and $G$ one after the other. 
We obtain the rightmost graph after eliminating $F$. This graph is the contracted query graph of $Q_\calB$, since its only bound variable $B$ is of degree 3. Hence, $|\bound(\cg(Q_\calB))| = 1$.  
Analogously, we can show that the number of bound variables 
in the contracted graphs of each of the other induced queries
is 1.
Hence, $\ctw(\calT) = 1$. It can be shown that $\ctw(Q) = 1$. 
\qed
\end{example}

\begin{figure}[t]
    \centering
     	\begin{tikzpicture}
      \tikzset{
        free/.style={minimum size=0.3cm, inner sep=0pt, text=black},
        bound/.style={minimum size=0.3cm, inner sep=0pt, text=black},
      label/.style={minimum size=0.3cm, inner sep=0pt, text=black},
        bag/.style={ellipse, draw, minimum size=0.5cm, inner sep=0pt, text=black},
      }

      \node[bag] (free_left) {\underline{A}\underline{C}\underline{E}};
      \node[bag, below left=0.5cm of free_left, xshift=2mm] (all_left) {\underline{A}B\underline{C}D\underline{E}};
        \node[bag, below right=0.5cm of free_left, xshift=-4.5mm] (free_right) {\underline{A}\underline{G}\underline{H}};
      \node[bag, below=0.5cm of free_right] (all_right) {\underline{A}F\underline{G}\underline{H}IJ};

      \draw (free_left) -- (all_left);
      \draw (free_left) -- (free_right);
      \draw (free_right) -- (all_right);

\begin{scope}[xshift=3.4cm]
      \node[free] (a) {\underline{A}};
      \node[bound, below left=0.3cm of a] (b) {B};
      \node[free, below left = 0.3cm of b] (c) {\underline{C}};
    \node[bound, below = 0.3cm of b] (d) {D};
    \node[free, below = 0.3cm of d] (e) {\underline{E}};
      \node[bound, below right=0.3cm of a] (f) {F};
      \node[bound, below left = 0.3cm of f] (g) {G};
    \node[bound, below = 0.3cm of f] (h) {H};
    \node[bound, below right = 0.3cm of f] (i) {I};
   \node[bound, below  = 0.3cm of i] (j) {J};    

      \draw (a) -- (b);
      \draw (b) -- (c);
      \draw (b) -- (d);	
        \draw (d) -- (e);
        \draw (a) -- (f);
        \draw (f) -- (g);
        \draw (f) -- (h);
        \draw (f) -- (i);
        \draw (i) -- (j);        
 \end{scope}       
      
      \begin{scope}[xshift=6.8cm]

      \node[free] (a) {\underline{A}};
      \node[bound, below left=0.3cm of a] (b) {B};
      \node[free, below left = 0.3cm of b] (c) {\underline{C}};
    \node[free, below = 0.3cm of b] (e) {\underline{E}};
      \node[bound, below right=0.3cm of a] (f) {F};
      \node[bound, below left = 0.3cm of f] (g) {G};
    \node[bound, below = 0.3cm of f] (h) {H};
    \node[bound, below right = 0.3cm of f] (i) {I};
   \node[bound, below  = 0.3cm of i] (j) {J}; 
   
      \draw (a) -- (b);
      \draw (b) -- (c);
      \draw (b) -- (e);	
        \draw (a) -- (f);
        \draw (f) -- (g);
        \draw (f) -- (h);
        \draw (f) -- (i);
        \draw (i) -- (j);        
    
      \end{scope}

        \begin{scope}[xshift=8.7cm]

      \node (a) {$\,$};
      \node[below = 0.3cm of a] (b) {$\dots$};
    
      \end{scope}

      \begin{scope}[xshift=9.6cm]

      \node[free] (a) {\underline{A}};
      \node[bound, below=0.3cm of a] (b) {B};
      \node[free, below left = 0.3cm of b] (c) {\underline{C}};
    \node[free, below = 0.3cm of b] (e) {\underline{E}};
      \node[bound, below right =0.3cm of a] (f) {F};

      \draw (a) -- (b);
      \draw (b) -- (c);
      \draw (b) -- (e);	
        \draw (a) -- (f);
    
      \end{scope}

        \begin{scope}[xshift=11.5cm]

      \node[free] (a) {\underline{A}};
      \node[bound, below=0.3cm of a] (b) {B};
      \node[free, below left = 0.3cm of b] (c) {\underline{C}};
    \node[free, below = 0.3cm of b] (e) {\underline{E}};

      \draw (a) -- (b);
      \draw (b) -- (c);
      \draw (b) -- (e);	
    
      \end{scope}

      \end{tikzpicture}

    \caption{Left to right: Free-connex tree decomposition for the CRPQ $Q$ from Example~\ref{ex:elimination_procedure}; query graph of the query $Q_\calB$ induced by $\calB = \{A,C,E\}$; graph obtained after eliminating $D$; graph obtained after eliminating $J$, $I$, $H$, and $G$; contracted query graph obtained after eliminating $F$.}
    \label{fig:ctw-example}
\end{figure}

The next example illustrates that the  contraction width of free-connex acyclic queries is 0, as claimed in Proposition~\ref{prop:free-connex}:

\nop{
\begin{proposition}
\label{prop:ctw-free-connex}
    For any free-connex acyclic query $Q$, it holds $\ctw(Q) = 0$. 
\end{proposition}
}

\begin{example}
\label{ex:ctw-zero}
\rm
Consider the free-connex acyclic CRPQ
$Q(A, B, D) = r_1(A, B) \wedge r_2(B, C) \wedge r_3(B, D) \wedge r_4(A, E) \wedge r_5(E, F)$ and its free-connex tree decomposition $\calT$ consisting of the 
two bags $\calB_1 =\{A, B, D\}$ and $\calB_2 = \{A, B, C, E, F\}$. The induced query $Q_{\calB_1}$ is the same as $Q$. 
Figure~\ref{fig:ctw-zero} depicts on the left the query graph of $Q_{\calB_1}$ and right to it the graphs obtained after the elimination of the bound variables $F$, $C$, and $E$. The final contracted query graph has no bound variables. 
Similarly, we can show that contracted query graph of $Q_{\calB_2}$ has no bound variables.
This implies $\ctw(\calT) = \ctw(Q) = 0$.
\qed
\end{example}
\begin{figure}[t]
    \centering
     	\begin{tikzpicture}
      \tikzset{
        free/.style={minimum size=0.3cm, inner sep=0pt, text=black},
        bound/.style={minimum size=0.3cm, inner sep=0pt, text=black},
      label/.style={minimum size=0.3cm, inner sep=0pt, text=black},
        bag/.style={ellipse, draw, minimum size=0.5cm, inner sep=0pt, text=black},
      }

      \node[free] (a) {\underline{A}};
      \node[free, below =0.3cm of a] (b) {\underline{B}};
      \node[bound, below left = 0.3cm of b] (c) {C}; 
    \node[free, below = 0.3cm of b] (d) {\underline{D}};
    \node[bound, below right = 0.3cm of a] (e) {E};
      \node[bound, below = 0.3cm of e] (f) {F};

      \draw (a) -- (b);
      \draw (b) -- (c);
      \draw (b) -- (d);	
        \draw (a) -- (e);
        \draw (e) -- (f);
        
      \begin{scope}[xshift=3cm]

      \node[free] (a) {\underline{A}};
      \node[free, below =0.3cm of a] (b) {\underline{B}};
      \node[bound, below left = 0.3cm of b] (c) {C}; 
    \node[free, below = 0.3cm of b] (d) {\underline{D}};
    \node[bound, below right = 0.3cm of a] (e) {E};

      \draw (a) -- (b);
      \draw (b) -- (c);
      \draw (b) -- (d);	
        \draw (a) -- (e);
    
      \end{scope}

        \begin{scope}[xshift=6cm]

      \node[free] (a) {\underline{A}};
      \node[free, below =0.3cm of a] (b) {\underline{B}};
    \node[free, below = 0.3cm of b] (d) {\underline{D}};
    \node[bound, below right = 0.3cm of a] (e) {E};

      \draw (a) -- (b);
      \draw (b) -- (d);	
        \draw (a) -- (e);
    
      \end{scope}

      \begin{scope}[xshift=9cm]

      \node[free] (a) {\underline{A}};
      \node[free, below =0.3cm of a] (b) {\underline{B}};
    \node[free, below = 0.3cm of b] (d) {\underline{D}};

      \draw (a) -- (b);
      \draw (b) -- (d);	
    
      \end{scope}
      
      \end{tikzpicture}

    \caption{Left to right: Query graph of the query $Q_{\calB_1}$
    in Example~\ref{ex:ctw-zero}; graphs obtained after eliminating the variables $F$, $C$, and $E$.}
    \label{fig:ctw-zero}
\end{figure}

Another well-known width measure based on tree decompositions is the fractional hypertree width ($\fw$). 
We recall its definition in Appendix~\ref{sec:fract_hypertree_width}.  

\section{Evaluation of Free-Connex Acyclic CRPQs}
\label{sec:freeconnex}
In this section, we present our evaluation strategy for free-connex acyclic CRPQs and show that it runs within the time bound given in Proposition~\ref{prop:free-connex}. 
\nop{
$\bigO{|E| + |E|\sqrt{\out_c} + \out}$ time as claimed in Corollary~\ref{cor:free-connex}, where $E$ is the edge set of the input graph, $\out_c$ is the size of the calibrated outputs of the atoms, and $\out$ is the size of the output of the query. 
}

\subsection{Evaluation Strategy}
\label{sec:freeconnex-eval-strategy}
Given a free-connex acyclic CRPQ, we construct a free-top join tree for the query and perform, analogous to the Yannakakis algorithm~\cite{Yannakakis81}, a bottom-up and a top-down pass over the tree to compute the calibrated outputs of the RPQs occurring in the query.
We then join these output relations
to obtain the final output.
The key difference from the Yannakakis algorithm is that the outputs of the RPQs are not materialized prior to calibration.
We illustrate our strategy with an example.

\begin{figure}[t]
\hspace{2em}
        \begin{tikzpicture}
      \tikzset{
        bag/.style={ellipse, draw, minimum size=0.6cm, inner sep=0pt, text=black},
      }

    \node (xy) {$r_1(\underline{X},\underline{Y})$};
    \node at (-0.8,-1) (yz) {$r_2(\underline{Y},Z)$};
    \node at (0.8,-1) (xu) {$r_4(\underline{X},\underline{U})$};
    \node at (-0.8,-2) (zw) {$r_3(Z,W)$};

    \draw (xy) -- (yz);
    \draw (yz) -- (zw);
    \draw (xy) -- (xu);

\node (x) at (8.5,-1) {
\begin{minipage}{0.75\textwidth}
    $B_{3 \rightarrow 2}(Z) = r_3(Z, W)$  \hspace{1em}
    $B_{2 \rightarrow 1}(Y) = r_2(Y, Z) \wedge B_{3 \rightarrow 2}(Z)$ \\[0.2em]
$B_{4 \rightarrow 1}(X) = r_4(X, U)$ \hspace{1em}
 $R_1(X, Y)  = r_1(X, Y) \wedge B_{2 \rightarrow 1}(Y) \wedge B_{4 \rightarrow 1}(X)$     \\[0.2em]
 $T_{1 \rightarrow 2}(Y) = R_1(X, Y)$\hspace{1em} $T_{1 \rightarrow 4}(X) = R_1(X, Y)$ \\[0.2em]
$R_2(Y) = r_2(Y, Z) \wedge T_{1 \rightarrow 2}(Y)$ \hspace{1em} $R_4(X, U) = r_4(X, U) \wedge T_{1 \rightarrow 4}(X)$ \\[0.2em]
$Q'(X, Y, U) = R_1(X, Y) \wedge R_2(Y) \wedge R_4(X, U)$
\end{minipage}
};
    \end{tikzpicture}

    \caption{Left: Join tree for the free-connex query from 
    Example~\ref{ex:free-connex-eval}. Right: The materialized relations computed by our evaluation strategy.}
    \label{fig:fc-crpq-eval}
\end{figure}

\begin{example}
\label{ex:free-connex-eval}
\rm
Consider the CRPQ $Q(X, Y, U) = r_1(X, Y) \wedge r_2(Y, Z) \wedge r_3(Z, W) \wedge r_4(X, U)$. 
Figure~\ref{fig:fc-crpq-eval} depicts a free-top join tree for the query, along with the materialized relations computed by our evaluation strategy.
The relations $B_{3\rightarrow 2}(Z)$, $B_{2\rightarrow 1}(Y)$, $B_{4\rightarrow 1}(X)$, and $R_{1}(X,Y)$ are computed during a bottom-up pass over the join tree, while the relations
$T_{1\rightarrow 2}(Y)$, $T_{1\rightarrow 4}(X)$, $R_{2}(Y)$, and $R_{4}(X,U)$ are materialized during a top-down pass. 
The relation $B_{3 \rightarrow 2}$ serves as a filter to discard tuples from the output of $r_2(Y,Z)$ that do not join with any tuple from the output of 
$r_3(Z,W)$. Similarly,  
$B_{2 \rightarrow 1}$ and $B_{4 \rightarrow 1}$ serve as filters to discard tuples from the output  of $r_1(X,Y)$ that do not appear in the final output.

The relations $T_{1\rightarrow 2}$ and $T_{1\rightarrow 4}$ filter out 
tuples from the outputs of $r_2(Y,Z)$ and $r_4(X,U)$, respectively,  that do not join with any tuple from the output of $R_1(X,Y)$.
The relation $R_1(X,Y)$ is the calibrated output of $r_1(X,Y)$, i.e., it is the projection of the output of $Q$ onto the variables $X$ and $Y$. 
The relations $R_2$ and $R_4$ are the calibrated outputs of $r_2(Y,Z)$ and $r_4(X,U)$, respectively. We obtain the final output of $Q$ by joining $R_1$, $R_2$, and $R_4$.
\qed
\end{example}

\begin{figure}[t]
\center
\begin{minipage}{\textwidth}
\center
\setlength{\tabcolsep}{3pt}
	\renewcommand{\arraystretch}{1}
	\renewcommand{\linenumber}{\makebox[2ex][r]{\rownumber\TAB}}
	\setcounter{magicrownumbers}{0}
	\begin{tabular}[t]{@{}l@{}}
		\toprule
 \textsc{EvalFreeConnex} (free-connex acyclic CRPQ 
 $Q(\bm F)$, edge-labeled graph $G$)\\
\midrule
\linenumber\hspace{-1em}
\LET $Q(\bm F) = r_1(X_1,Y_1) \wedge \cdots \wedge r_n(X_n,Y_n)$ \\
\linenumber\hspace{-1em} 
\LET $J$ be a free-top join tree for $Q$; 
\hspace{0.5em} \LET $r_i(X_i, Y_i)$ = root of $J$; \hspace{0.5em} \LET $\calR = \emptyset$ \\
\linenumber\hspace{-1em}   \textsc{BottomUp}$(r_i(X_i, Y_i), \bm F, J, \calR, G)$;  \hspace{1em} \textsc{TopDown}$(r_i(X_i, Y_i), \bm F, J, \calR, $G$)$; \\
\linenumber\hspace{-1em} \textbf{assume} $\calR$ is $\{R_{i_1}(\bm X_{i_1}), \dots, R_{i_k}(\bm X_{i_k})\}$   \\
\linenumber\hspace{-1em} \RETURN the output of 
$Q'(\bm F) = R_{i_1}(\bm X_{i_1})  \wedge \dots \wedge R_{i_k}(\bm X_{i_k})$\\
		\bottomrule
\end{tabular}
\end{minipage}

\vspace{0.5em}
\begin{minipage}{0.4\textwidth}
\setlength{\tabcolsep}{3pt}
	\renewcommand{\arraystretch}{1}
	\renewcommand{\linenumber}{\makebox[2ex][r]{\rownumber\TAB}}
	\setcounter{magicrownumbers}{0}
	\begin{tabular}[t]{@{}l@{}}
		\toprule
        \textsc{BottomUp}(atom $r_i(X_i, Y_i)$, variable set $\bm F$,\\
        \TAB tree $J$,  relation set $\calR$, edge-labeled graph $G$) \\
		\midrule
		\linenumber\hspace{-1em} \FOREACH $r_c(X_c,Y_c) \in \text{children}(r_i(X_i,Y_i), J)$  \\
		\linenumber\hspace{-1em} \TAB \textsc{BottomUp}$(r_c(X_c, Y_c), \bm F, J, \calR, G)$ \\
		 \linenumber\hspace{-1em}  \IF $r_i$ has no parent \\
        \linenumber\hspace{-1em} \TAB\LET $R_i(\{X_i, Y_i\} \cap \bm F) = r_i(X_i, Y_i)\wedge$ \\
        \TAB\ \ \ $\bigwedge\limits_{r_c(X_c, Y_c)\in\text{children}(r_i(X_i,Y_i), J)} \hspace{-4.5em}B_{c\rightarrow i} (\{X_c, Y_c\} \cap \{X_i, Y_i\})$ \\
    \linenumber\hspace{-1em} \TAB\textbf{update} $\calR$ to $\calR \cup \{R_i\}$ \\    
    \linenumber\hspace{-1em} \ELSE \\
        \linenumber\hspace{-1em} \TAB \LET $r_p(X_p, Y_p) = \text{parent}(r_i(X_i,Y_i), J)$ \\
        \linenumber\hspace{-1em}  \TAB\LET $B_{i\rightarrow p}(\{X_i, Y_i\}\cap \{ X_p, Y_p\}) = r_i(X_i, Y_i)\wedge$ \\
        \TAB $\bigwedge\limits_{r_c(X_c, Y_c)\in\text{ children}(r_i(X_i,Y_i), J)} \hspace{-4em}B_{c\rightarrow i} (\{X_c, Y_c\} \cap \{X_i, Y_i\})$\\
		\bottomrule
\end{tabular}
\end{minipage}
\hspace{4.9em}
\begin{minipage}{0.4\textwidth}
	\setlength{\tabcolsep}{3pt}
	\renewcommand{\arraystretch}{1.1}
	\renewcommand{\linenumber}{\makebox[2ex][r]{\rownumber\TAB}}
	\setcounter{magicrownumbers}{0}
	\begin{tabular}[t]{@{}l@{}}
		\toprule
        \textsc{TopDown}(atom $r_i(X_i, Y_i)$, variable set $\bm F$,\\
        tree $J$,  relation set $\calR$, edge-labeled graph $G$) \\
		\midrule 
    \linenumber\hspace{-1em} \IF $\{X_i, Y_i\} \cap \bm F =\emptyset$ \\
    \linenumber\hspace{-1em} \TAB \RETURN \\
    \linenumber\hspace{-1em} \IF $r_i$ has parent \\ 
     \linenumber\hspace{-1em} \TAB  \LET $r_p(X_p, Y_p) = \text{parent}(r_i(X_i, Y_i), J)$ \\
      \linenumber\hspace{-1em} \TAB\LET $T_{p\rightarrow i}(\{X_i, Y_i\}\cap \bm X) = R_p(\bm X)$\\
    \linenumber\hspace{-1em} \TAB\LET $R_i(\{X_i, Y_i\} \cap \bm F) = r_i(X_i, Y_i)\wedge$ \\
    \TAB\TAB\TAB\TAB\TAB\TAB\TAB\TAB\TAB\ \ $T_{p\rightarrow i}(\{X_i, Y_i\} \cap \bm X)$ \\
    \linenumber\hspace{-1em} \TAB\textbf{update} $\calR$ to $\calR \cup\{R_i\}$ \\
    \linenumber\hspace{-1em}  \FOREACH $r_c(X_c, Y_c)\in \text{children}(r_i(X_i,Y_i), J)$ \\
    \linenumber\hspace{-1em} \TAB \textsc{TopDown}($r_c(X_c, Y_c), \bm F, J, \calR, G$)\\
		\bottomrule
	\end{tabular}
\end{minipage}
\caption{Evaluating a free-connex acyclic CRPQ 
$Q(\bm F)$. 
We denote by $\text{parent}(r_i(X_i,Y_i), J)$ the parent and by 
$\text{children}(r_i(X_i,Y_i), J)$ the set of children of atom $r_i(X_i,Y_i)$ in the tree $J$. The relations of the form $R_i$, $B_{i \rightarrow j}$, and $T_{i \rightarrow j}$ are computed over the edge-labeled graph $G$.}
\label{fig:algo_free_connex}
\end{figure}

The procedure~\textsc{EvalFreeConnex} in Figure~\ref{fig:algo_free_connex} describes our evaluation strategy for a free-connex acyclic CRPQ 
$Q$ and an edge-labeled graph $G$.
It starts with 
constructing a free-top join tree $J$ for $Q$,
which means that the atoms containing a free variable form a connected subtree in $J$ including the root. 
The bottom-up and top-down passes are described by the subprocedures 
\textsc{BottomUp} and \textsc{TopDown}, respectively.
During these passes, the procedure computes, among other relations, a set $\mathcal{R}$ of relations $R_i$ whose join  is the output of $Q$.
This set consists of the calibrated output of the root atom and of all atoms with a free variable.
Next, we describe the procedures \textsc{BottomUp} and \textsc{TopDown} in more detail.

The main purpose of \textsc{BottomUp} is to materialize the calibrated output of the root atom in the join tree. It computes at each non-root atom $r_i(X_i,Y_i)$
a relation $B_{i \rightarrow p}$ that joins $r_i(X_i,Y_i)$
with the relations $B_{c \rightarrow i}$ computed at the children of $r_i(X_i,Y_i)$ and projects the result onto the common variables of 
$r_i(X_i,Y_i)$ and its parent atom  (Lines~7-8 of \textsc{BottomUp}).
If $r_i(X_i,Y_i)$ is the root, the procedure computes the relation $R_i$, defined analogously to $B_{i \rightarrow p}$ except that its set of variables is the intersection of $\{X_i, Y_i\}$ with the free variables of $Q$ (Lines 3–4 of \textsc{BottomUp}). The relation $R_i$ is the calibrated output of $r_i(X_i,Y_i)$.

The main purpose of the subprocedure \textsc{TopDown} is to materialize at each atom $r_i(X_i, Y_i)$ that is not the root of the join tree and contains at least one free variable of $Q$, the calibrated output 
$R_i$ of the RPQ $r_i$. For this, the procedure computes at each such atom $r_i(X_i, Y_i)$ with parent $r_p(X_p, Y_p)$,
a relation $T_{p\rightarrow i}$
that is the projection of the relation $R_p$, computed at the parent atom, onto the common variables of $R_p$ and $r_i(X_i, Y_i)$. Then, it materializes  the relation $R_i$ by computing the projection of the join of $r_i(X_i,Y_i)$ and  $T_{p\rightarrow i}$ onto  the free variables of $Q$ (Lines 4–6 of \textsc{TopDown}).

After the execution of \textsc{BottomUp} and \textsc{TopDown},
 the set $\calR = \{R_{i_1}, \ldots , R_{i_k}\}$ consists of the calibrated outputs of the root atom of the join tree and all other atoms that have at least one free variable. The procedure returns the output of the conjunctive query $Q'(\bm F) = \bigwedge_{j\in [k] }R_{i_j}$ (Line 5 of \textsc{EvalFreeConnex}). The output of this query is the output of $Q$ . We can show:

\begin{restatable}{proposition}{ComplexityEvalFreeConnex}
\label{prop:complexity_eval_free_connex}
    Given a free-connex acyclic CRPQ $Q$ and an edge-labeled graph 
    $G= (V, E, \Sigma)$, the procedure \textsc{EvalFreeConnex} 
    evaluates $Q$ in 
    $\bigO{|E| + |E|\cdot \out_c^{1/2} + \out}$ data complexity, 
    where $\out_c$ is the maximum size of the calibrated output of any RPQ in $Q$.
\end{restatable}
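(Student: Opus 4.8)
The plan is to bound the cost of each procedure invoked by \textsc{EvalFreeConnex} separately and then argue that the final join in Line~5 costs $\bigO{\out}$. The key insight to establish first is that every relation $B_{i\to p}$, $T_{p\to i}$, and $R_i$ computed by the algorithm has size at most $\out_c$, the maximum size of the calibrated output of any RPQ in $Q$. For the $R_i$ relations this is immediate: each $R_i$ is by construction the calibrated output of the atom $r_i$, i.e., the projection of the output of $Q$ onto $\{X_i,Y_i\}\cap\bm F$, so $|R_i|\le\out_c$. For the filter relations $B_{i\to p}$ and $T_{p\to i}$, I would argue that each is a projection of a calibrated output onto a subset of its variables (a single shared variable or a pair), hence also of size at most $\out_c$.

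First I would analyze the single \textsc{BottomUp} computation at each non-root atom (Lines~7--8) and the root computation (Lines~3--4). Each such step evaluates a query of the form $r_i(X_i,Y_i)\wedge\bigwedge_c B_{c\to i}(\cdots)$, where the $B_{c\to i}$ are already-materialized unary or nullary relations over $V$ (Line~8's projection guarantees they share at most the variables $\{X_i,Y_i\}$). This is precisely a query of the shape $Q(\bm F)=r(X,Y)\wedge\bigwedge_{i}R_i(\bm X_i)$ whose semantics was fixed in the Preliminaries. The crucial subroutine is the output-sensitive RPQ evaluation \ospg of~\cite{KhamisKOS25}, which computes an RPQ filtered by unary/nullary side relations in $\bigO{|E|+|E|\cdot\out_s^{1/2}}$ time, where $\out_s$ is the size of the \emph{filtered} output. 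Since the filtered output here is contained in the calibrated output of $r_i$, we have $\out_s\le\out_c$, so each \textsc{BottomUp} step costs $\bigO{|E|+|E|\cdot\out_c^{1/2}}$, and subsequently projecting to obtain $B_{i\to p}$ or $R_i$ costs time linear in that output.

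The \textsc{TopDown} analysis is analogous: at each atom with a free variable (Lines~4--6), the step first projects the parent's calibrated output $R_p$ to form $T_{p\to i}$ (linear in $|R_p|\le\out_c$), then evaluates the filtered RPQ $r_i(X_i,Y_i)\wedge T_{p\to i}(\cdots)$ again via \ospg, whose filtered output is contained in the calibrated output $R_i$ of $r_i$ and thus has size $\le\out_c$, giving $\bigO{|E|+|E|\cdot\out_c^{1/2}}$ per step. Since the query is fixed, there are $\bigO{1}$ atoms and hence $\bigO{1}$ such steps, so summing over all \textsc{BottomUp} and \textsc{TopDown} invocations preserves the bound $\bigO{|E|+|E|\cdot\out_c^{1/2}}$. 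Finally, the relations in $\calR$ are the calibrated outputs forming a free-top join tree of the free-connex query $Q'$; since $Q'$ is free-connex acyclic and all its $R_i$ are already globally consistent (every tuple participates in the output), the join in Line~5 can be enumerated with constant delay per output tuple, costing $\bigO{|E|+\out}$ total including the initial scan of $G$. Adding the three contributions yields the claimed $\bigO{|E|+|E|\cdot\out_c^{1/2}+\out}$.

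The main obstacle I anticipate is \emph{the size argument for the filter relations}, specifically proving that the filtered RPQ outputs at intermediate steps never exceed $\out_c$. This requires a careful inductive claim over the join tree: one must show that after the bottom-up pass, the relation $R_i$ at the root equals the true projection of $Q$'s output onto the root's variables (not merely an over-approximation), and that each $B_{c\to i}$ correctly restricts to tuples extendable to a full homomorphism in the subtree below $r_i$. This correctness-and-calibration invariant, essentially the analogue of the semijoin-consistency property underlying the Yannakakis algorithm, is what guarantees that \ospg is only ever run to produce outputs bounded by $\out_c$ rather than by the much larger non-calibrated RPQ output $\out_a$; establishing it rigorously for the non-materialized RPQ setting, where the side relations are unary/nullary rather than full binary relations, is the delicate part of the proof.
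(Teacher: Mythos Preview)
Your overall structure---analyze \textsc{BottomUp}, \textsc{TopDown}, and the final join separately---matches the paper's, but there is a genuine gap in the \textsc{BottomUp} analysis. You claim that the filtered output at each bottom-up step is contained in the calibrated output of $r_i$ and hence has size at most $\out_c$. This is false: during the bottom-up pass at a non-root atom $r_i$, the relation $r_i(X_i,Y_i)\wedge\bigwedge_c B_{c\to i}(\cdots)$ is filtered only by the subtree \emph{below} $r_i$, not by the rest of the query. A pair $(x_i,y_i)$ surviving this filter need not extend to any tuple in $Q$'s output, since it may fail to join with the atoms above $r_i$ in the tree; the filtered binary output can therefore be as large as the uncalibrated output $\out_a$ of $r_i$. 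Running \ospg on it would cost $\bigO{|E|\cdot\out_a^{1/2}}$ rather than $\bigO{|E|\cdot\out_c^{1/2}}$, which destroys the bound. The same objection applies to your claim that each $B_{i\to p}$ is ``a projection of a calibrated output'': it is a projection of the subtree-query output, which is not calibrated.

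The paper sidesteps this via Lemma~\ref{lem:reachable_set_ospg_variant}. Because the query is acyclic, any two atoms share at most one variable, so every $B_{i\to p}$ is at most \emph{unary}. A unary or nullary projection of a filtered RPQ can be computed in $\bigO{|E|}$ time by a multi-source BFS in the product graph---no output-sensitive machinery and no bound on the output size are needed. Only at the root of \textsc{BottomUp}, and at the $R_i$'s in \textsc{TopDown}, does the paper invoke \ospg, and there it first proves that $R_i$ equals the calibrated output so that its size is genuinely at most $\out_c$. You correctly flagged the calibration invariant as the delicate point, but the resolution is not that all intermediate outputs are calibrated (they are not); it is that the non-calibrated intermediate relations are unary and therefore cheap regardless of their size.
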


\subsection{Time Analysis}
We sketch the proof of the time complexity in Proposition~\ref{prop:complexity_eval_free_connex}.
The full proof is given in Appendix~\ref{app:freeconnex}.
Since the query size is considered fixed, a free-top join tree for $Q$ can be constructed in constant time. 
The relations $R_{i_1}, \ldots, R_{i_k}$ occurring  in the query $Q'$ in Line~4 of the procedure \textsc{EvalFreeConnex} are the calibrated outputs of RPQs that appear in a connected subtree of the join tree for $Q$. This means that the query $Q'$ forms a full acyclic conjunctive query.
Hence, its output can be computed in 
$\bigO{\max_{j \in k}|R_{i_j}| + \out}$ time using the Yannakakis algorithm~\cite{Yannakakis81}. Since the relations $R_{i_j}$
are calibrated, the time complexity simplifies to $\bigO{\out}$.
It remains to show that   
all relations constructed by the subprocedures \textsc{BottomUp} and \textsc{TopDown} can be computed in $\bigO{|E| + |E| \cdot \out_c^{1/2}}$ time.
To show this, we use the following lemma:
\begin{restatable}{lemma}{ReachableSetOSPGVariant}
\label{lem:reachable_set_ospg_variant}
Consider an edge-labeled graph $G = (V, E, \Sigma)$ and the
query 
$Q(\bm X) = r(X, Y) \wedge 
\bigwedge_{i \in [k]} U_i(X)$ $\wedge$
$\bigwedge_{i \in [\ell]} W_i(Y)$ $\wedge$
$\bigwedge_{i \in [m]} Z_i()$
for some $k, \ell, m \in \mathbb{N}$, where 
$r(X, Y)$ is an RPQ and each $U_i$, $W_i$, and $Z_i$
is a materialized relation. The query $Q$
can be evaluated in $\bigO{|E|}$ data complexity if $\bm F \subsetneq \{X,Y\}$, and 
in $\bigO{|E| + |E| \cdot \out^{1/2}}$ data complexity if $\bm F = \{X,Y\}$, where $\out$ is the output size of the query.
\end{restatable}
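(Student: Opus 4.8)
The plan is to reduce $Q$ to a (possibly filtered) RPQ-reachability problem on the product graph of $G$ with the fixed NFA $A_r$ recognizing $L(r)$, and then either run a single linear-time graph search (when $\bm F \subsetneq \{X,Y\}$) or invoke \ospg on a gadget graph (when $\bm F = \{X,Y\}$). First I would dispose of the nullary atoms: each $Z_i()$ is materialized, so in constant time per atom I check whether it holds; if some $Z_i()$ is empty, the output of $Q$ is empty and I return after an $\bigO{|E|}$ scan. Otherwise the $Z_i()$ are dropped. Next I compute the vertex sets $U = \bigcap_{i \in [k]} U_i$ and $W = \bigcap_{i \in [\ell]} W_i$ (with $U = V$ if $k = 0$ and $W = V$ if $\ell = 0$) by iterating over the vertices and testing membership in each materialized relation; since the query, hence $k$ and $\ell$, is fixed, this costs $\bigO{|V|} = \bigO{|E|}$ using the assumption that every vertex is incident to an edge. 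After this step, $Q$ asks exactly for the projection onto $\bm F$ of the set $O = \{(u,v) : u \in U,\ v \in W,\ G \text{ has an } L(r)\text{-labeled path } u \to v\}$.

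For the case $\bm F \subsetneq \{X,Y\}$ I would build the product graph $P = G \times A_r$, which has $\bigO{|V|}$ vertices and $\bigO{|E|}$ edges because $A_r$ has constant size. A pair $(u,v)$ lies in $O$ iff some accepting product vertex $(v,q_f)$ with $v \in W$ is reachable in $P$ from $(u,q_0)$ with $u \in U$. Since at most one variable is free it suffices to decide reachability rather than to enumerate pairs: if $\bm F = \{Y\}$, I run a single multi-source BFS forward in $P$ from all $(u,q_0)$ with $u \in U$ and output every $v \in W$ for which an accepting $(v,q_f)$ is marked; if $\bm F = \{X\}$, I run the same search on the reversed product graph from all accepting $(v,q_f)$ with $v \in W$ and output every $u \in U$ whose $(u,q_0)$ is marked; if $\bm F = \emptyset$, one forward BFS decides whether $O \neq \emptyset$. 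Each case is one linear-time traversal of $P$ plus an $\bigO{|V|}$ scan to collect the answer, hence $\bigO{|E|}$ in total. Paths of length $0$ labeled $\epsilon$ are handled correctly because BFS marks its own start vertices, so when $q_0$ is accepting a vertex in $U \cap W$ is reported.

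For the case $\bm F = \{X,Y\}$ the set $O$ itself must be enumerated, and the difficulty is to make the running time sensitive to $\out = |O|$ rather than to the unfiltered RPQ output of $r$ on $G$, which can be as large as $|V|^2$ even when $O$ is empty; running \ospg on $(G,r)$ and filtering afterwards would cost $\bigO{|E|\cdot\out_a^{1/2}}$ and is therefore too expensive. Instead I would encode the source and target filters into the graph: pick two fresh symbols $\sigma_s,\sigma_t \notin \Sigma$, add a self-loop $(u,\sigma_s,u)$ for every $u \in U$ and a self-loop $(v,\sigma_t,v)$ for every $v \in W$, obtaining a graph $G'$ with $|E'| \le |E| + |U| + |W| = \bigO{|E|}$, and take the RPQ $r' = \sigma_s \cdot r \cdot \sigma_t$. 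Because $\sigma_s,\sigma_t$ are fresh and occur only on these self-loops, any $L(r')$-labeled path from $a$ to $b$ must begin with the $\sigma_s$ self-loop at $a$ (forcing $a \in U$), continue with an $L(r)$-labeled path over the original edges to some $c$, and end with the $\sigma_t$ self-loop at $c = b$ (forcing $b = c \in W$); conversely every such pair is produced. Hence the output of $r'$ on $G'$ is exactly $O$, and invoking \ospg on $(G',r')$ computes it in $\bigO{|E'| + |E'|\cdot|O|^{1/2}} = \bigO{|E| + |E|\cdot\out^{1/2}}$.

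The main obstacle is precisely this last reduction: guaranteeing that the filters $U$ and $W$ are realized exactly, with no spurious pairs (in particular correctly accounting for the $\epsilon \in L(r)$ case, which yields the legitimate diagonal pairs $(w,w)$ for $w \in U \cap W$) and without inflating the edge count, so that the output-sensitive guarantee of \ospg transfers to the filtered output size $\out$. The remaining steps, namely the Boolean and unary cases and the preprocessing of the materialized atoms, are routine linear-time graph searches.
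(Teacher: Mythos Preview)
Your proposal is correct and follows essentially the same approach as the paper: intersect the unary filters, discard the nullary atoms, handle the $\bm F \subsetneq \{X,Y\}$ cases by a single multi-source search in the product graph $G \times A_r$, and handle $\bm F = \{X,Y\}$ by encoding the filters as fresh-symbol self-loops and running \ospg on the RPQ $\sigma_s \cdot r \cdot \sigma_t$. The only cosmetic difference is that the paper always runs the search on the reversed product graph starting from the accepting states, whereas you pick the traversal direction depending on which variable is free; both are equivalent linear-time arguments.
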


Since the query is acyclic, it cannot contain two atoms with the same set of variables. Hence, the definition of each relation 
$B_{i\rightarrow p}$ computed in the procedure \textsc{BottomUp} 
is of the form 
$V(\bm X) = r_i(X_i, Y_i) \wedge 
\bigwedge_{j \in [k]} U_j(X_i)$ $\wedge$
$\bigwedge_{j \in [\ell]} W_j(Y_i)$ $\wedge$
$\bigwedge_{j \in [m]} Z_j()$
for some relations $U_j$, $W_j$, and $Z_j$ and $k, \ell, m \in \mathbb{N}$, where 
$\bm X \subset \{X_i, Y_i\}$.
By Lemma~\ref{lem:reachable_set_ospg_variant}, 
the relation can be computed in time 
$\bigO{|E|}$.

The definition of the relations $R_i$ computed in the procedures \textsc{BottomUp} and \textsc{TopDown} is of the same form, except 
that the set of variables $\bm X$ is not necessarily a strict subset of $\{X_i, Y_i\}$. By Lemma~\ref{lem:reachable_set_ospg_variant}, each relation $R_i$ can be computed in $\bigO{|E| + |E| \cdot \out^{1/2}}$ time.
Since $R_i$ is the calibrated output of $r_i(X_i, Y_i)$, we can write the complexity as $\bigO{|E| + |E| \cdot \out_c^{1/2}}$.

Each relation $T_{p \rightarrow i}$ computed in \textsc{TopDown} is a projection of the relation $R_p$ and can thus be computed by a single pass over $R_p$. Since $R_p$ is computed in $\bigO{|E| + |E| \cdot \out_c^{1/2}}$ time, its size is upper-bounded by the same asymptotic complexity. Hence, each relation $T_{p \rightarrow i}$ can be computed in $\bigO{|E| + |E| \cdot \out_c^{1/2}}$ time.

\section{Evaluation of General Acyclic CRPQs}
\label{sec:general}
We present our evaluation strategy for general acyclic CRPQs, 
which follows free-connex tree decompositions. 
Given an acyclic CRPQ $Q$ and a free-connex tree decomposition for $Q$, we first create CRPQs induced by the bags of the decomposition. Then, we evaluate each induced query $Q'$ as follows. 
We begin by eliminating bound variables following an elimination order for $Q'$ (see Section~\ref{sec:elimination_procedure})
while preserving the query output.
Then, we  promote the remaining bound variables of $Q'$ to free variables, which turns $Q'$ into a free-connex acyclic query $Q''$.
Afterwards, we evaluate $Q''$ using our strategy for free-connex acyclic queries (see Section~\ref{sec:freeconnex}) and project the output onto the free variables of $Q'$.
Finally, we obtain the output of $Q$ by joining the outputs of all induced queries.

In Section~\ref{sec:elimination_bound_variables}, we explain how to remove bound variables in a CRPQ and adapt the input graph without changing the output of the query. In Section~\ref{sec:evaluation_through_decomposition}, we detail our overall evaluation strategy for acyclic CRPQs.

\subsection{Elimination of Bound Variables in CRPQs}
\label{sec:elimination_bound_variables}
Given a CRPQ and an input graph, we explain how to 
eliminate bound variables following an elimination order and adapt the input graph in linear time 
while preserving the output of the query. We illustrate our strategy by an example.

\begin{example}
\label{ex:elimination_bound_variables_CRPQ}
\rm
Consider again the CRPQ 
\nop{
\begin{align*}
Q(A, C, E, G, H) = r_1(A, B) \wedge & r_2(B, C) \wedge r_3(B, D) \wedge r_4(E,D) \wedge \,\\
& r_5(A, F) \wedge r_6(F, G) \wedge r_7(F, H) \wedge r_8(F, I), r_9(I, J)
\end{align*}
}
$Q(A, C, E, G, H) = r_1(A, B) \wedge  r_2(B, C) \wedge r_3(B, D) \wedge r_4(E,D) \wedge
 r_5(A, F) \wedge r_6(F, G) \wedge r_7(F, H) \wedge r_8(F, I), r_9(I, J)$
from Example~\ref{ex:elimination_procedure}.
The query graph of $Q$ and the graphs obtained after the elimination of each of the variables in the elimination order 
$(D, J, I)$ are depicted in Figure~\ref{fig:cw-example}. 
Let $G$ be the input graph.
We start with constructing the symmetric closure $\hat{G}$ of $G$. 
The graph $\hat{G}$ allows us to evaluate RPQs while traversing the edges of the original input graph in opposite direction.
Next, we explain how at each elimination step, the query $Q$ and the graph $\hat{G}$ are adjusted.

At the elimination step for the variable $D$,
we replace the atoms $r_3(B, D)$ and  $r_4(E,D)$ by the atom $r_3\hat{r}_4^{R}(B,E)$; here, $\hat{r}_4$ is obtained  from $r_4$ by replacing every symbol $\sigma$
with $\hat{\sigma}$ and 
$\hat{r}_4^R$ is the regular expression that defines the reverse language of $\hat{r}_4$. 
Observe that the input graph $G$ contains three vertices 
$u,v,w$ such that there is a path labeled by $r_3$ from $u$ to 
$v$ and a path labeled by $r_4$ from $w$ to $v$ if and only if 
$\hat{G}'$ has two vertices $u$ and $w$ connected by a path labeled by $r_3\hat{r}_4^{R}$.

At the elimination step for the variable $J$, we remove the atom $r_9(I,J)$ from the query. We also  compute a {\em filter relation} $R_I$ defined by $R_I(I) = r_9(I,J)$ that consists of all vertices in the input graph from which there is a path labeled by a string from $L(r_9)$. 

At the elimination step for the variable $I$,
we remove the atom $r_8(F,I)$ from the query. 
Then, we compute a filter relation $R_F$ defined by 
$R_F(F) = r_8(F,I) \wedge R_I(I)$
and add to each vertex $u \in R_F$ in the input graph, a self-loop edge $(u,\sigma_F,u)$ using a
fresh symbol $\sigma_F$. Finally, we add to the query the atom 
$\sigma_F(F,F')$ for some fresh variable $F'$.

Let the query and the input graph obtained after the above transformations be denoted as $Q'$ and $\hat{G}'$, respectively. 
The query $Q'$ is of the form 
$Q'(A, C, E, G, H) = r_1(A, B) \wedge r_2(B, C) \wedge r_3\hat{r}_4^R(B, E) \wedge r_5(A, F) \wedge r_6(F, G) \wedge r_7(F, H) \wedge \sigma_F(F, F')$. 
The contracted query graph of $Q$ (rightmost graph in Figure~\ref{fig:cw-example}) is the query graph of the query $Q'$ after skipping the atom $\sigma_F(F, F')$.
The output of $Q$ over $G$ is the same as the output of $Q'$ over $\hat{G}'$.
\qed
\end{example}

\begin{figure}[t]
\center
\setlength{\tabcolsep}{3pt}
	\renewcommand{\arraystretch}{1}
	\renewcommand{\linenumber}{\makebox[2ex][r]{\rownumber\TAB}}
	\setcounter{magicrownumbers}{0}
	\begin{tabular}[t]{@{}l@{}}
		\toprule
 \textsc{Contract} (acyclic CRPQ $Q$, edge-labeled graph $G = (V,E,\Sigma)$)\\
\midrule
\linenumber \LET $Q_0= Q$; \TAB\LET $G_0 = G_Q$\\ 
\linenumber
\LET $\omega = (X_1, \ldots , X_k)$ be an elimination order for $Q_0$ and $\omega(G_0) = (G_1, \ldots, G_k)$\\
\linenumber \FOREACH $X \in \vars(Q_0)$ \LET $R_X= V$ \hspace*{2em} \text{// } $R_X$ \text{ is a unary filter relation}\\
\linenumber \FOREACH $i \in [k]$\\
\linenumber \TAB  \IF $X_i$ is a degree-1 variable in $G_{i-1}$ \\
\linenumber \TAB\TAB \LET $r(\bm X) \in \atoms(Q_{i-1})$
such that $X_i$ and some variable $Y$ are contained in $\bm X$\\
\linenumber \TAB\TAB \LET $Q_i$ = the query $Q_{i-1}$ after removing the atom $r(\bm X)$\\
\linenumber \TAB \TAB  
\LET $R_Y'(Y) = R_Y(Y)$; \TAB\textbf{update} $R_Y(Y)$ \textbf{to} $R_Y(Y) = r(\bm X) \wedge R_Y'(Y) \wedge R_{X_i}(X_i)$ \\
\nop{
\linenumber \TAB \TAB update filter relation $R_Y$:  
$R_Y'(Y): = R_Y(Y)$ and  
 $R_Y(Y): = r(\bm X) \wedge R_Y'(Y) \wedge R_{X_i}(X_i)$ \\
}
\linenumber \TAB  \IF $X_i$ is a degree-2 variable in $G_{i-1}$ \\
\ \ \TAB\TAB wlog, assume that $r_1(Y,X_i), r_2(Z,X_i) \in \atoms(Q_{i-1})$ for distinct variables $Y$ and $Z$\\
\linenumber \TAB\TAB \LET $Q_{i}$ = the query $Q_{i-1}$ after replacing atoms $r_1(Y,X_i)$ and $r_2(Z,X_i)$
with $r_1\sigma_{X_i}\hat{r}_2^R(Y,Z)$ \\
\linenumber \LET $\hat{G} = (V, \hat{E}, \hat{\Sigma})$ be the symmetric closure of $G$ \\
\linenumber \LET $\hat{G}'  = (V,\hat{E}', \hat{\Sigma}')$ where $\hat{\Sigma}' = \hat{\Sigma} \cup \{\sigma_X \mid \sigma_X \notin \hat{\Sigma} \text{ and } X \in \vars(Q)\}$ \\
\TAB\TAB\TAB\TAB\TAB\TAB\TAB\TAB\TAB\TAB\TAB\TAB\ \ \ and $\hat{E}' = \hat{E} \cup \{(u,\sigma_X,u)\mid X \in \vars(Q) \text{ and } u \in R_X\}$\\
\linenumber \LET $Q_k'$ = the query $Q_k$ after adding the atom $\sigma_X(X,\bar{X})$ with new variable $\bar{X}$,  $\forall X \in \vars(Q_k)$  \\
\linenumber \RETURN $(Q_k',\hat{G}')$\\
\bottomrule
\end{tabular}
\caption{Eliminating bound variables in an acyclic CRPQ $Q$ following 
an elimination order $\omega$ as defined in Section~\ref{sec:elimination_procedure} and 
adjusting the input graph $G$.}
\label{fig:eliminate_bound_in_query}
\end{figure}

Given an acyclic CRPQ $Q$ and an input graph $G$, the procedure  \textsc{Contract}
in Figure~\ref{fig:eliminate_bound_in_query} removes bound variables in $Q$ following an elimination order $\omega = (X_1, \ldots , X_k)$  for the query 
and adjusts the input graph such that the query  output is preserved. 
The procedure starts with creating a collection of  unary filter relations $R_X$ with  $X \in \vars(Q)$ (Line~3). The purpose of such a  relation $R_X$ is to restrict the domain of the variable $X$. At the beginning, there are no restrictions on the domains, so each filter relation consists of the whole vertex set of the input graph. 
We define $Q_0:=Q$ and $G_0:=G_Q$, where $G_Q$ is the query graph of $Q$. Let
$\omega(G_0) = (G_1, \ldots , G_k)$ be the sequence of graphs obtained after each elimination step following the elimination order $\omega$. 
For each $X_i$ in $\omega$, the procedure eliminates the variable $X_i$ from the query $Q_{i-1}$ and obtains a query $Q_i$ as follows.

Consider the case that $X_i$ is a degree-1 variable in $G_{i-1}$
(Lines~5--8).
Assume that $r(\bm X)$ is the only atom in $Q_{i-1}$ such that $X_i$ and some other variable $Y$ are contained in $\bm X$. 
The procedure removes the atom from the query and updates the filter relation $R_Y$ by restricting the domain of $Y$ to all vertices in the input graph from which there is a path labeled by a string from $L(r)$ to a vertex in the domain of $X_i$. 

Consider now the case that $X_i$ is a degree-2 variable in $G_{i-1}$
(Lines~9--10).
The pseudocode in Figure~\ref{fig:eliminate_bound_in_query} treats, without loss of generality, the case that the two atoms containing $X_i$ are of the form $r_1(Y,X_i)$ and $r_2(Z,X_i)$.
The procedure replaces the two atoms with the single atom $r_1\sigma_{X_i}\hat{r}_2^R(Y,Z)$, where $\sigma_{X_i}$ is a fresh alphabet symbol.
As it will be clear when we extend the input graph in Line~12, 
the symbol $\sigma_{X_i}$ ensures that the path satisfying $r_1$
ends up at a vertex that is contained in the domain of variable $X_i$. The regular expression $\hat{r}_2^R$ makes sure that the graph contains a path from $Z$ to $X_i$ that satisfies $r_2$.
The two cases not treated in the pseudocode are handled completely analogously:  
in case the atom pair containing $X_i$ is of the form 
$(r_1(Y,X_i), r_2(X_i,Z))$ or $(r_1(X_i,Y), r_2(X_i,Z))$, it is replaced by the atom $r_1\sigma_{X_i}r_2(Y,Z)$ or the atom $\hat{r}_1^{R}\sigma_{X_i}r_2(Y,Z)$, respectively.

When no more bound variables can be eliminated, the procedure adjusts the input graph $G$ and the query $Q_k$ (obtained from $Q$ after eliminating the variables in $\omega$) as follows (Lines~11--13).  
It constructs the symmetric closure $\hat{G}$ of $G$ and, for each filter relation $R_X$, performs the following steps:
(1) It adds a self-loop edge labeled by the symbol $\sigma_X$ to every vertex in $\hat{G}$ that belongs to $R_X$; (2) it adds the atom 
$\sigma_X(X, \bar{X})$ to the query $Q_k$, where $\bar{X}$ is a fresh variable.
By construction, this ensures that every vertex in the input graph that is mapped to variable $X$ is contained in the filter relation $R_X$.
\nop{In principle, it suffices to add the self-loops labeled by $X$ and the atoms $\sigma_X(X, \bar{X})$ 
only for those variables $X$ for which $R_X$ does not include all vertices in $V$. To keep the procedure simple, we do not consider this optimization here.}

The initialization of the filter relations in Line~3 and their update in Line~8 can be done in time 
$\bigO{|E|}$ (by Lemma~\ref{lem:reachable_set_ospg_variant}).
The symmetric closure $\hat{G}$ can be constructed by a single pass over the edge relation of $G$. 
For each filter relation $R_X$, adding to the symmetric closure self-loops of the form 
$(u, \sigma_X, u)$ for each $u \in R_X$ takes also $\bigO{|V|} = \bigO{|E|}$ time. Hence:

\begin{restatable}{proposition}{ContractQuery}
\label{prop:contract_query}
    Given an acyclic CRPQ $Q$ and an edge-labeled graph 
    $G= (V, E, \Sigma)$, the procedure \textsc{Contract} 
    creates in $\bigO{|E|}$ data complexity a CRPQ 
    $Q'$ and an edge-labeled graph $G'$ such that the output of $Q$ over $G$ is the same     as the output of $Q'$ over $G'$.
\end{restatable}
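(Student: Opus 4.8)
The plan is to prove the two assertions of the proposition separately: first the correctness claim that \textsc{Contract} preserves the output, by induction over the elimination steps of the order $\omega = (X_1,\ldots,X_k)$, and then the $\bigO{|E|}$ bound, which follows from the fixed query size together with Lemma~\ref{lem:reachable_set_ospg_variant}. For the correctness part I would carry along a loop invariant that ties each intermediate query back to the original output. Concretely, for $i \in \{0,\ldots,k\}$ let $\hat{Q}_i$ be the CRPQ whose body is that of $Q_i$ augmented with one unary filter atom $R_X(X)$ per variable $X \in \vars(Q_i)$, whose free variables are $\free(Q)$, and which is evaluated over the graph $\hat{G}_i$ obtained from the symmetric closure $\hat{G}$ by adding, for every degree-$2$-eliminated $X_j$ with $j \le i$, a self-loop $(u,\sigma_{X_j},u)$ at each $u \in R_{X_j}$. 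The invariant to maintain is that the output of $\hat{Q}_i$ over $\hat{G}_i$ equals the output of $Q$ over $G$. The base case $i=0$ holds because every $R_X = V$ is vacuous and the symmetric closure only adds reverse edges labeled by fresh symbols occurring in no atom of $Q$, so $\hat{Q}_0$ over $\hat{G}_0 = \hat{G}$ agrees with $Q$ over $G$.

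For the induction step I would treat the two elimination cases. If $X_i$ has degree $1$ in $G_{i-1}$, it is bound and occurs in exactly one atom $r(\bm X)$ with $\bm X = \{X_i, Y\}$; since $X_i$ appears nowhere else, the existential over $X_i$ can be pushed into a filter, so replacing the atom by the updated $R_Y(Y) = r(\bm X) \wedge R_Y'(Y) \wedge R_{X_i}(X_i)$ leaves the projection onto $\free(Q)$ unchanged. If $X_i$ has degree $2$, it is bound and occurs in exactly two atoms, say $r_1(Y,X_i)$ and $r_2(Z,X_i)$; here I would verify, via a word-level argument combined with the symmetric-closure identity (a path labeled by $L(r_2)$ from $Z$ to $x$ in $G$ exists iff a path labeled by $L(\hat{r}_2^R)$ from $x$ to $Z$ exists in $\hat{G}$), that the composed atom $r_1\sigma_{X_i}\hat{r}_2^R(Y,Z)$ evaluated over $\hat{G}_i$ witnesses exactly a common target $x \in R_{X_i}$ reachable from $Y$ via $r_1$ and from $Z$ via $r_2$. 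The fresh symbol $\sigma_{X_i}$ is precisely what forces that intermediate vertex to lie in $R_{X_i}$, which is why conceptually adding its self-loops at step $i$ keeps the invariant intact. The remaining four orientation cases are discharged by the analogous reversals noted after Line~10.

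Finally I would close the induction at $i=k$ and match $\hat{Q}_k$ over $\hat{G}_k$ with the output of $Q_k'$ over $\hat{G}'$. Line~13 adds an atom $\sigma_X(X,\bar{X})$ with fresh $\bar{X}$ for every remaining variable $X \in \vars(Q_k)$, and Line~12 places a $\sigma_X$ self-loop at exactly the vertices of $R_X$, so each such atom enforces $X \in R_X$ and thus plays the role of the filter atom $R_X(X)$ in $\hat{Q}_k$. The self-loops Line~12 additionally installs for degree-$2$-eliminated variables are exactly those already present in $\hat{G}_k$, whereas those installed for degree-$1$-eliminated variables carry labels occurring in no atom of $Q_k'$ and are therefore inert; this yields the equality of outputs. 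For the complexity, the query is fixed, so $k = \bigO{1}$; each degree-$1$ step evaluates a query of the exact shape of Lemma~\ref{lem:reachable_set_ospg_variant} with a single free variable, hence in $\bigO{|E|}$, each degree-$2$ step is a syntactic rewrite costing $\bigO{1}$, building $\hat{G}$ is a single pass in $\bigO{|E|}$, and installing self-loops costs $\bigO{|V|} = \bigO{|E|}$ since every vertex is incident to an edge; summing over the constantly many steps gives $\bigO{|E|}$.

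The main obstacle I anticipate is the degree-$2$ invariant: getting the orientation bookkeeping right across all four atom-pair shapes, and confirming that the deferred, end-of-procedure materialization of the $\sigma_X$ self-loops faithfully realizes the filter semantics that the composition steps implicitly rely on. The degree-$1$ case and the complexity accounting are comparatively routine once Lemma~\ref{lem:reachable_set_ospg_variant} is invoked.
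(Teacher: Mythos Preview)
Your proposal is correct and follows essentially the same approach as the paper: an induction over the elimination steps showing that each intermediate query augmented by the current filter relations has the same output as $Q$, a closing argument that $Q_k'$ over $\hat{G}'$ realizes those filters via the $\sigma_X$ self-loops, and a complexity bound via Lemma~\ref{lem:reachable_set_ospg_variant}. The only minor technical variation is that the paper evaluates every intermediate query $Q_i \wedge \calR_i$ directly over the single final graph $\hat{G}'$ (arguing once that the extra edges carry fresh labels not occurring in $Q_i$ and are therefore inert), whereas you build up intermediate graphs $\hat{G}_i$ by adding the $\sigma_{X_j}$ self-loops only at the step where $X_j$ is degree-$2$-eliminated; both devices work and the rest of the argument is the same.
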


\subsection{Evaluation via Decomposition}
\label{sec:evaluation_through_decomposition}
Our overall evaluation strategy for acyclic CRPQs decomposes a given CRPQ using a free-connex tree decomposition, computes the outputs of the queries induced by the bags and joins these outputs to obtain the output of the original CRPQ. We start with an example. 

\begin{example}
\label{ex:evaluation_general}
\rm
We continue with the CRPQ $Q$ from Example~\ref{ex:elimination_bound_variables_CRPQ}.
Figure~\ref{fig:ctw-example} gives on the left a free-connex tree decomposition for $Q$ with the bags
$\calB_1 = \{A,C,E\}$, $\calB_2 = \{A,B, C,D,E\}$, $\calB_3 = \{A,G,H\}$, and $\calB_4 = \{A,F,G,H,I\}$.
For each bag $\calB_i$, we have an induced query $Q_{\calB_i}$ whose body is the same as the body of 
$Q$ and whose set of free variables is the intersection of the bag with the free variables of $Q$.
For instance, the free variables of the
induced query $Q_{\calB_1}$ are $\{A,C,E\}$.
The procedure \textsc{Contract} in Figure~\ref{fig:eliminate_bound_in_query} converts 
$Q_{\calB_1}$ into a CRPQ $Q_{\calB_1}'$ by eliminating
all bound variables of
$Q_{\calB_1}$ up to variable $B$ and creating some fresh non-join variables.
We promote the bound variable $B$ to free and obtain the 
the query $Q_{\calB_1}''$ with free variables $\{A,B, C,E\}$. Since its bound variables are non-join variables, the query $Q_{\calB_1}''$ is free-connex. We use the procedure  \textsc{EvalFreeConnex} in Figure~\ref{fig:algo_free_connex} to evaluate $Q_{\calB_1}''$. After projecting the output of $Q_{\calB_1}''$ onto the variables $\{A,C,E\}$, we obtain the output of $Q_{\calB_1}$.
Similarly, we compute the outputs of the queries $Q_{\calB_2}$--$Q_{\calB_4}$.
Finally, we obtain the output of $Q$ by joining the outputs of 
$Q_{\calB_2}$--$Q_{\calB_4}$.
\qed
\end{example}

\begin{figure}[t]
\center
\setlength{\tabcolsep}{3pt}
	\renewcommand{\arraystretch}{1}
	\renewcommand{\linenumber}{\makebox[2ex][r]{\rownumber\TAB}}
	\setcounter{magicrownumbers}{0}
	\begin{tabular}[t]{@{}l@{}}
		\toprule
 \textsc{EvalAcyclic} (acyclic CRPQ $Q(\bm F)$, free-connex tree decomposition $\calT$, edge-labeled graph $G$)\\
\midrule
\linenumber \LET $Q_1(\bm F_1), \ldots, Q_k(\bm F_k)$ be the CRPQs
induced by the bags of $\calT$\\
\linenumber \FOREACH $i\in [k]$\\
\linenumber  \TAB \LET $(Q_i', G_i)= \textsc{Contract}(Q_i, G)$\\
\linenumber \TAB \LET $Q_i''$ = the query obtained by promoting the vars in $\bound(Q_i)\cap \bound(Q_i')$ to free in $Q_i'$  \\
\nop{\linenumber \TAB \LET $Q_i''$ = the result from $Q_i'$ by promoting the variables in $\bound(Q_i)\cap \bound(Q_i')$ to free \\}
\linenumber \TAB \LET $R_i(\bm X_i) = \textsc{EvalFreeConnex}(Q_i'', G_i)$\\
\linenumber \TAB \LET $R_i'(\bm F_i) = R_i(\bm X_i)$ \hspace*{4em} \text{// We project away} $\bm X_i\setminus\bm F_i$\\
\linenumber \RETURN the output of $Q'(\bm F) = R_1'(\bm F_1) \wedge \cdots \wedge R_k'(\bm F_k)$ \\
\bottomrule
\end{tabular}
\caption{Evaluating an acyclic CRPQ $Q$ over an edge-labeled graph $G$ following a free-connex tree decomposition $\calT$ for $Q$.
The procedures \textsc{Contract} and \textsc{EvalFreeConnex} are given in Figures~\ref{fig:eliminate_bound_in_query} and \ref{fig:algo_free_connex}, respectively.}
\label{fig:evaluate_acylcic}
\end{figure}

Given an acyclic CRPQ $Q$, a free-connex tree decomposition $\calT$ for $Q$, and an input graph $G$, the procedure \textsc{EvalAcyclic} in Figure~\ref{fig:evaluate_acylcic} evaluates $Q$ following the structure of the decomposition. The procedure first constructs the queries  
$Q_1, \ldots, Q_k$ induced by the bags of $\calT$. For each induced query 
$Q_i$, it performs the following steps. 
It executes the  procedure \textsc{Contract} from Figure~\ref{fig:eliminate_bound_in_query} on the pair $(Q_i, G)$ and obtains a transformed pair $(Q_i', G_i)$. It constructs 
the query $Q_i''$ by turning the bound variables in $Q_i'$
that are also contained in $Q_i$ into free variables. 
Then, it computes the output $R_i$ of the free-connex query $Q''_i$ on the edge-labeled graph $G_i$ using the procedure \textsc{EvalFreeConnex} from Figure~\ref{fig:algo_free_connex}.
Then, it obtains $R_i'$ from $R_i$ by projecting it onto the free variables of $Q_i$. Finally, it computes the output of $Q$ by joining the relations $R_1', \ldots, R_k'$.

\begin{restatable}{proposition}{ComplexityEvalAcyclic}
\label{prop:complexity_eval_acyclic}
    Given an acyclic CRPQ $Q$, a free-connex tree decomposition $\calT$ for $Q$, and an edge-labeled graph $G= (V, E, \Sigma)$, the procedure \textsc{EvalAcyclic}
    evaluates $Q$ in 
    $\bigO{|E| + |E|\cdot \out^{1/2} \cdot |V|^{\ctw/2} + \out \cdot |V|^{\ctw}}$ data complexity, 
    where $\out$ is the size of the output of $Q$ and $\ctw$ is the contraction width of $\calT$.
\end{restatable}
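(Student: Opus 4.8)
The plan is to establish correctness first and then bound the running time, processing the decomposition bag by bag. I fix notation $\bm F_i = \free(Q)\cap\calB_i$ and let $\mathcal{M}$ denote the set of all assignments $\vars(Q)\to V$ satisfying every atom of $Q$, so that the output of $Q$ is $\pi_{\free(Q)}(\mathcal{M})$. Since each induced query $Q_i$ keeps the entire body of $Q$ and has free variables $\bm F_i$, its satisfying assignments are exactly those in $\mathcal{M}$ and its output is $\pi_{\bm F_i}(\mathcal{M})$. The first step is to verify that the algorithm produces $R_i'=\pi_{\bm F_i}(\mathcal{M})$: by Proposition~\ref{prop:contract_query} the pair $(Q_i',G_i)$ returned by \textsc{Contract} preserves the output, so $Q_i'$ over $G_i$ has output $\pi_{\bm F_i}(\mathcal{M})$; promoting the variables in $\bound(Q_i)\cap\bound(Q_i')$ only adds columns, so the output of $Q_i''$ over $G_i$ projects onto $\bm F_i$ to give $\pi_{\bm F_i}(\mathcal{M})$ again; and Line~6 performs precisely this projection. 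Hence $R_i'=\pi_{\bm F_i}(\mathcal{M})$.

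The heart of correctness is the reconstruction identity $\bowtie_{i}\pi_{\bm F_i}(\mathcal{M})=\pi_{\free(Q)}(\mathcal{M})$, which I expect to be the main obstacle. The inclusion $\supseteq$ is immediate, since every tuple of $\pi_{\free(Q)}(\mathcal{M})$ restricts consistently to each $\bm F_i$. The reverse inclusion fails for arbitrary relations but holds here: free-connexity of $\calT$ supplies a connected subtree $T_f$ of bags whose union is exactly $\free(Q)$, so $\calB_t\subseteq\free(Q)$ and $\bm F_t=\calB_t$ for $t\in T_f$. Restricted to $T_f$ the running-intersection property makes $T_f$ a join tree over the free variables, and since $\mathcal{M}$ is the join of the binary atoms for which $\calT$ is a tree decomposition, the standard Yannakakis reconstruction yields $\bowtie_{t\in T_f}\pi_{\calB_t}(\mathcal{M})=\pi_{\free(Q)}(\mathcal{M})$. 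Because each $R_i'$ is the projection of the \emph{global} relation $\mathcal{M}$ rather than of a local bag join, it is already globally consistent; the bags outside $T_f$ therefore contribute only projections that are implied by the $T_f$-join, and including them in Line~7 leaves the result unchanged. This is exactly why evaluating each induced query independently is sound, and it gives that $Q'(\bm F)$ has output $\pi_{\free(Q)}(\mathcal{M})$, the output of $Q$.

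For the running time I analyze a single bag and multiply by the constant number $k$ of bags of $\calT$ (the query is fixed). \textsc{Contract} runs in $\bigO{|E|}$ by Proposition~\ref{prop:contract_query}, and the graph $G_i$ it returns has vertex set $V$ and only $\bigO{|E|}$ edges, as it merely adds the symmetric copy and self-loops. The number of promoted variables is $|\bound(Q_i)\cap\bound(Q_i')|$; the original bound variables of $Q_i$ surviving contraction are exactly those of the contracted query graph, so this number equals $|\bound(\cg(Q_{\calB_i}))|$, which is at most $\ctw$ by Definition~\ref{def:con-tree-width}. Each promotion enlarges the output by a factor of at most $|V|$, while $\pi_{\bm F_i}$ of the output of $Q_i''$ is the output of $Q_i$ of size at most $\out$; therefore the output of $Q_i''$, and hence every calibrated RPQ output occurring in it, has size at most $\out\cdot|V|^{\ctw}$. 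Feeding these bounds into Proposition~\ref{prop:complexity_eval_free_connex} applied to $G_i$ gives $\bigO{|E|+|E|\cdot(\out\cdot|V|^{\ctw})^{1/2}+\out\cdot|V|^{\ctw}}=\bigO{|E|+|E|\cdot\out^{1/2}\cdot|V|^{\ctw/2}+\out\cdot|V|^{\ctw}}$ for evaluating $Q_i''$, and the projection in Line~6 is absorbed into this.

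It remains to bound the final join in Line~7. Because $\calT$ is a tree decomposition, its tree is a join tree for $Q'(\bm F)=\bigwedge_i R_i'(\bm F_i)$ (the running-intersection property restricts to the free variables), so $Q'$ is a full acyclic conjunctive query and can be evaluated with the Yannakakis algorithm~\cite{Yannakakis81} in $\bigO{\sum_i|R_i'|+\out}$ time. Each $R_i'=\pi_{\bm F_i}(\mathcal{M})$ has size at most $\out$ and $k$ is constant in data complexity, so this is $\bigO{\out}$. Summing the per-bag cost over the constant number of bags and adding the final join yields the claimed $\bigO{|E|+|E|\cdot\out^{1/2}\cdot|V|^{\ctw/2}+\out\cdot|V|^{\ctw}}$. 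I expect the reconstruction identity and the clean $\out\cdot|V|^{\ctw}$ bound obtained from promotion to require the most care; the remaining arithmetic is routine.
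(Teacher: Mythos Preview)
Your argument follows the paper's proof essentially step for step: same correctness reduction via Proposition~\ref{prop:contract_query}, same chain $\out_c''\le\out_i''\le\out_i\cdot|V|^p\le\out\cdot|V|^{\ctw}$, same appeal to Proposition~\ref{prop:complexity_eval_free_connex}, and same final Yannakakis step on the full acyclic join $Q'$. Your treatment of the reconstruction identity $\bowtie_i\pi_{\bm F_i}(\mathcal{M})=\pi_{\free(Q)}(\mathcal{M})$ is in fact more explicit than the paper's one-line justification.

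There is one small gap: you invoke Proposition~\ref{prop:complexity_eval_free_connex} on $Q_i''$ without verifying its precondition, namely that $Q_i''$ is free-connex acyclic. The paper closes this in one sentence: after promoting $\bound(Q_i)\cap\bound(Q_i')$, the only bound variables left in $Q_i''$ are the fresh $\bar X$ introduced by \textsc{Contract} in the atoms $\sigma_X(X,\bar X)$, and these are non-join (degree-1) leaves, so the free variables form a connected subtree in every component of the query graph. Add that sentence and your proof is complete.
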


\begin{proof}
Consider an acyclic CRPQ $Q(\bm F)$, a free-connex tree decomposition 
$\calT$ for $Q$, and an edge-labeled graph $G= (V, E, \Sigma)$.
We first show that the procedure \textsc{EvalAcyclic} produces the correct output of $Q$ over $G$.  
Since the body of each induced query $Q_i$ is the same as the body of $Q$ and the union of the free variables of  the induced queries is the set $\bm F$, the join of the outputs of the induced queries is exactly the output of $Q$.
All bound variables in each query $Q_i''$ constructed in Line~4 are non-join variables, which implies that $Q_i''$ is free-connex.  Hence, \textsc{EvalFreeConnex} produces the correct output of  $Q_i''$ over $G$.
By construction, each $R_i'$ is the output of $Q_i$ over $G$.
Hence, the output of $Q'$ in Line~7 
is the output of $Q$.

Now, we show that the procedure \textsc{EvalAcyclic} runs within 
the time bound stated in the proposition.
For $i \in [k]$, we denote by $\out_i$, $\out_i'$, and $\out_i''$ the output sizes of the queries $Q_i$, $Q_i'$, and $Q_i''$, respectively.  
The execution of \textsc{Contract} on a pair $(Q_i, G)$ in Line~3 takes $\bigO{|E|}$ time (Proposition~\ref{prop:contract_query}).  
The procedure \textsc{EvalFreeConnex}
called in Line~5 runs in $\bigO{|E| + |E| \cdot 
\out''^{1/2}_c + \out''}$ time, where $\out_c''$ is the maximal size of the calibrated output of any RPQ occurring in $Q_i''$
(Proposition~\ref{prop:complexity_eval_free_connex}). 
Let $p$ be the number of variables in $Q_i'$ that are promoted from bound to free in Line~4. 
It holds 
$\out_c'' \leq \out_i''$ and $\out_i'' \leq \out_i' \cdot |V|^{p} = \out_i \cdot |V|^{p}$. Since the output of $Q_i$ is the projection of the output of $Q$ onto the free variables of $Q_i$, it also holds
$\out_i \cdot |V|^{p} \leq \out \cdot |V|^{p}$.
The contraction width $\ctw$ of $\calT$ is the maximal number of promoted variables 
over all bags of $\calT$. Hence, we can write the time complexity of \textsc{EvalFreeConnex} in Line~5 as
$\bigO{|E| + |E| \cdot \out^{1/2} \cdot |V|^{\ctw/2} + \out\cdot |V|^{\ctw}}$. The size of the output $R_i$ of 
\textsc{EvalFreeConnex} is bounded by the same complexity expression. The relation $R_i'$ in Line~6 can be computed by a single pass over the relation $R_i$, which takes time linear in the size of $R_i$. The query $Q'$ in Line~7 
constitutes a full acyclic conjunctive query. Hence, we can use the Yannakakis algorithm~\cite{Yannakakis81} to compute the output of $Q'$ in time
$\bigO{\max_{i \in [k]}|R_i'| + \out}$, which simplifies 
to $\bigO{\out}$ due to $\max_{i \in [k]}|R_i'| \leq \out$.
This implies that the overall time of the procedure \textsc{EvalAcyclic} is $\bigO{|E| + |E| \cdot \out^{1/2} \cdot |V|^{ \ctw/2} + \out\cdot |V|^{\ctw}}$.
\end{proof}

Using a free-connex tree decomposition $\calT$ for $Q$ with 
$\ctw(\calT) = \ctw(Q)$, Proposition~\ref{prop:complexity_eval_acyclic}
immediately implies Theorem~\ref{theo:general}. 
\section{Conclusion}
\label{sec:conclusion}

In this paper, we introduced an output-sensitive evaluation approach for acyclic conjunctive regular path queries. There are three immediate questions that naturally follow from this development. (1) Can our approach be further improved to match the complexity of the best-known approach for the output-sensitive evaluation of acyclic conjunctive queries~\cite{Hu25}, dubbed the output-sensitive Yannakakis in Section~\ref{sec:intro}? If this were true, then the complexity of acyclic conjunctive regular path queries would match the complexity of acyclic conjunctive queries, albeit the widely-held conjecture in the literature is that the former ought to be harder than the latter.
(2) Can our approach be extended to cyclic queries as well? It is already open whether the recent output-sensitive evaluation algorithm for acyclic conjunctive queries~\cite{Hu25} can be extended to cyclic conjunctive queries, let alone for cyclic conjunctive regular path queries. Yet there is hope: Earlier work introduced an output-sensitive algorithm for listing triangles~\cite{DBLP:conf/icalp/BjorklundPWZ14}. It is open whether the methodology for listing triangles in an output-sensitive manner can be extended to arbitrary cyclic queries. (3) Do output-sensitive algorithms yield lower runtimes than the classical non-output sensitive algorithms for practical workloads?

\bibliography{bibliography}

\begin{thebibliography}{10}

\bibitem{AnglesABHRV17}
Renzo Angles, Marcelo Arenas, Pablo Barcel{\'{o}}, Aidan Hogan, Juan~L. Reutter, and Domagoj Vrgoc.
\newblock Foundations of modern query languages for graph databases.
\newblock {\em {ACM} Comput. Surv.}, 50(5):68:1--68:40, 2017.
\newblock \href {https://doi.org/10.1145/3104031} {\path{doi:10.1145/3104031}}.

\bibitem{AtseriasGM13}
Albert Atserias, Martin Grohe, and D{\'{a}}niel Marx.
\newblock {Size Bounds and Query Plans for Relational Joins}.
\newblock {\em {SIAM} J. Comput.}, 42(4):1737--1767, 2013.
\newblock \href {https://doi.org/10.1109/FOCS.2008.43} {\path{doi:10.1109/FOCS.2008.43}}.

\bibitem{Baeza13}
Pablo~Barcel{\'{o}} Baeza.
\newblock Querying graph databases.
\newblock In {\em {PODS}}, pages 175--188, 2013.
\newblock \href {https://doi.org/10.1145/2463664.2465216} {\path{doi:10.1145/2463664.2465216}}.

\bibitem{DBLP:conf/icalp/BjorklundPWZ14}
Andreas Bj{\"{o}}rklund, Rasmus Pagh, Virginia {Vassilevska Williams}, and Uri Zwick.
\newblock Listing triangles.
\newblock In {\em {ICALP}}, volume 8572, pages 223--234, 2014.
\newblock \href {https://doi.org/10.1007/978-3-662-43948-7\_19} {\path{doi:10.1007/978-3-662-43948-7\_19}}.

\bibitem{BraultPhD13}
Johann Brault-Baron.
\newblock {\em De la pertinence de l'{\'e}num{\'e}ration: complexit{\'e} en logiques propositionnelle et du premier ordre}.
\newblock PhD thesis, {Universit{\'e} de Caen}, 2013.

\bibitem{Brault-Baron16}
Johann Brault{-}Baron.
\newblock {Hypergraph Acyclicity Revisited}.
\newblock {\em {ACM} Comput. Surv.}, 49(3):54:1--54:26, 2016.

\bibitem{CalvaneseGLV03}
Diego Calvanese, Giuseppe~De Giacomo, Maurizio Lenzerini, and Moshe~Y. Vardi.
\newblock Reasoning on regular path queries.
\newblock {\em {SIGMOD} Rec.}, 32(4):83--92, 2003.
\newblock \href {https://doi.org/10.1145/959060.959076} {\path{doi:10.1145/959060.959076}}.

\bibitem{CaselS23}
Katrin Casel and Markus~L. Schmid.
\newblock Fine-grained complexity of regular path queries.
\newblock {\em LMCS}, 19(4), 2023.
\newblock URL: \url{https://doi.org/10.46298/lmcs-19(4:15)2023}, \href {https://doi.org/10.46298/LMCS-19(4:15)2023} {\path{doi:10.46298/LMCS-19(4:15)2023}}.

\bibitem{DBLP:conf/icdt/CucumidesRV23}
Tamara Cucumides, Juan~L. Reutter, and Domagoj Vrgoc.
\newblock Size bounds and algorithms for conjunctive regular path queries.
\newblock In {\em {ICDT}}, pages 13:1--13:17, 2023.
\newblock URL: \url{https://doi.org/10.4230/LIPIcs.ICDT.2023.13}, \href {https://doi.org/10.4230/LIPICS.ICDT.2023.13} {\path{doi:10.4230/LIPICS.ICDT.2023.13}}.

\bibitem{DBLP:conf/sigmod/DeutschFGHLLLMM22}
Alin Deutsch, Nadime Francis, Alastair Green, Keith Hare, Bei Li, Leonid Libkin, Tobias Lindaaker, Victor Marsault, Wim Martens, Jan Michels, Filip Murlak, Stefan Plantikow, Petra Selmer, Oskar van Rest, Hannes Voigt, Domagoj Vrgoc, Mingxi Wu, and Fred Zemke.
\newblock Graph pattern matching in {GQL} and {SQL/PGQ}.
\newblock In {\em {SIGMOD}}, pages 2246--2258, 2022.
\newblock \href {https://doi.org/10.1145/3514221.3526057} {\path{doi:10.1145/3514221.3526057}}.

\bibitem{DBLP:conf/icdt/FrancisGGLMMMPR23}
Nadime Francis, Am{\'{e}}lie Gheerbrant, Paolo Guagliardo, Leonid Libkin, Victor Marsault, Wim Martens, Filip Murlak, Liat Peterfreund, Alexandra Rogova, and Domagoj Vrgoc.
\newblock A researcher's digest of {GQL} (invited talk).
\newblock In {\em {ICDT}}, pages 1:1--1:22, 2023.
\newblock \href {https://doi.org/10.4230/LIPICS.ICDT.2023.1} {\path{doi:10.4230/LIPICS.ICDT.2023.1}}.

\bibitem{Cypher:SIGMOD:2018}
Nadime Francis, Alastair Green, Paolo Guagliardo, Leonid Libkin, Tobias Lindaaker, Victor Marsault, Stefan Plantikow, Mats Rydberg, Petra Selmer, and Andr{\'{e}}s Taylor.
\newblock Cypher: An evolving query language for property graphs.
\newblock In {\em {SIGMOD}}, pages 1433--1445, 2018.
\newblock \href {https://doi.org/10.1145/3183713.3190657} {\path{doi:10.1145/3183713.3190657}}.

\bibitem{SPARQL:2013}
Steve Harris and Andy Seaborne.
\newblock Sparql 1.1 query language. w3c recommendation.
\newblock \url{http://www.w3.org/TR/sparql11-query/}, 2013.

\bibitem{HopcroftMU07}
John~E. Hopcroft, Rajeev Motwani, and Jeffrey~D. Ullman.
\newblock {\em Introduction to automata theory, languages, and computation, 3rd Edition}.
\newblock Pearson international edition. Addison-Wesley, 2007.

\bibitem{Hu25}
Xiao Hu.
\newblock Output-optimal algorithms for join-aggregate queries.
\newblock {\em Proc. {ACM} Manag. Data}, 3(2):104:1--104:27, 2025.
\newblock \href {https://doi.org/10.1145/3725241} {\path{doi:10.1145/3725241}}.

\bibitem{KhamisKOS25}
Mahmoud~Abo Khamis, Ahmet Kara, Dan Olteanu, and Dan Suciu.
\newblock Output-sensitive evaluation of regular path queries.
\newblock {\em Proc. {ACM} Manag. Data}, 3(2):105:1--105:20, 2025.
\newblock \href {https://doi.org/10.1145/3725242} {\path{doi:10.1145/3725242}}.

\bibitem{MartensT18}
Wim Martens and Tina Trautner.
\newblock Evaluation and enumeration problems for regular path queries.
\newblock In {\em {ICDT}}, pages 19:1--19:21, 2018.
\newblock \href {https://doi.org/10.4230/LIPICS.ICDT.2018.19} {\path{doi:10.4230/LIPICS.ICDT.2018.19}}.

\bibitem{MendelzonW95}
Alberto~O. Mendelzon and Peter~T. Wood.
\newblock Finding regular simple paths in graph databases.
\newblock {\em {SIAM} J. Comput.}, 24(6):1235--1258, 1995.
\newblock \href {https://doi.org/10.1137/S009753979122370X} {\path{doi:10.1137/S009753979122370X}}.

\bibitem{Ngo:JACM:18}
Hung~Q. Ngo, Ely Porat, Christopher R{\'{e}}, and Atri Rudra.
\newblock {Worst-case Optimal Join Algorithms}.
\newblock {\em J. {ACM}}, 65(3):16:1--16:40, 2018.

\bibitem{OlteanuZ15}
Dan Olteanu and Jakub Z{\'{a}}vodn{\'{y}}.
\newblock Size bounds for factorised representations of query results.
\newblock {\em {ACM} Trans. Database Syst.}, 40(1):2:1--2:44, 2015.
\newblock \href {https://doi.org/10.1145/2656335} {\path{doi:10.1145/2656335}}.

\bibitem{PGQL:2021}
{Property Graph Query Language}.
\newblock Pgql 1.4 specification.
\newblock \url{https://pgql-lang.org/spec/1.4/}, 2021.

\bibitem{GSQL:2021}
{TigerGraph Team}.
\newblock Tigergraph documentation – version 3.1.
\newblock \url{https://docs.tigergraph.com/}, 2021.

\bibitem{Veldhuizen14}
Todd~L. Veldhuizen.
\newblock Triejoin: {A} simple, worst-case optimal join algorithm.
\newblock In {\em ICDT}, pages 96--106, 2014.
\newblock URL: \url{https://doi.org/10.5441/002/icdt.2014.13}, \href {https://doi.org/10.5441/002/ICDT.2014.13} {\path{doi:10.5441/002/ICDT.2014.13}}.

\bibitem{Wood12}
Peter~T. Wood.
\newblock Query languages for graph databases.
\newblock {\em {SIGMOD} Rec.}, 41(1):50--60, 2012.
\newblock \href {https://doi.org/10.1145/2206869.2206879} {\path{doi:10.1145/2206869.2206879}}.

\bibitem{Yannakakis81}
Mihalis Yannakakis.
\newblock Algorithms for acyclic database schemes.
\newblock In {\em VLDB}, pages 82--94, 1981.

\end{thebibliography}

\appendix
\section{Missing Details in Section \ref{sec:intro}}
\label{app:intro}

\subsection{Proof of Theorem~\ref{theo:general}}
\MainTheorem*

\begin{proof}
Consider a CRPQ $Q$, an edge-labeled graph $G = (V, E, \Sigma)$  and a free-connex tree decomposition $\calT$ for $Q$ with $\cw = \ctw(Q) = \ctw(\calT)$. By Proposition~\ref{prop:complexity_eval_acyclic},  
$Q$ can be evaluated in 
$\bigO{|E| + |E| \cdot \out^{1/2} \cdot |V|^{\ctw/2} +\out \cdot |V|^{\ctw}}$ time. 
\end{proof}

\subsection{Proof of Proposition~\ref{prop:properties_contraction_width}}
\PropertiesContractionWidth*

\begin{proof}
Consider the infinite class of all k-star queries
of the form $Q_k(X_1, \ldots , X_{k}) = r_1(Y, X_1), \ldots , r_k(Y, X_k)$
for $k \geq 3$. The query graph of each query $Q_k$  contains a single bound variable of degree at least three. Hence, the contraction width of each query is 1. However, the fractional hypertree width of each query $Q_k$ is $k$.
\end{proof}

\subsection{Proof of Proposition~\ref{prop:free-connex}}
\FreeConnex*

\begin{proof}
    We prove the first statement in Proposition~\ref{prop:free-connex}.
Consider a free-connex acyclic CRPQ $Q$.
By definition, the query graph $G_Q$ of $Q$ consists of 
trees $T_1, \ldots , T_k$ such that in each tree, the free variables of the tree form a connected subtree.
In each tree $T_i$, we can repeatedly eliminate degree-1 vertices starting from the leaves of the tree. At the end of this process, each tree $T_i$ becomes empty or consists of only free variables. This implies that the contraction width of $Q$ is $0$.  

\smallskip
The second statement in Proposition~\ref{prop:free-connex}
follows immediately from Proposition~\ref{prop:complexity_eval_free_connex}.
\end{proof}

\subsection{Comparison with Related Work -- Examples}
\label{subsec:app_query_examples}

In the following, we compare the performance of our algorithm with the performance of \ospgosyan via illustrative examples.
The following table restates 
the evaluation times achieved by 
\ospgosyan and our algorithm for free-connex acyclic (first row) and general acyclic (second row) queries. We recall the parameters appearing in the complexity expressions:  
$V$ and $E$ are the vertex and edge sets of the input graph, respectively;
$\out$ is the output size of the query; 
$\out_a$ is the maximal size of the uncalibrated output of any CRPQ appearing in the query; 
and $\out_c$ is the maximal size of the calibrated output of any CRPQ appearing in the query.

\renewcommand{\arraystretch}{1.5}
\begin{center}
\begin{tabular}{@{}c || c | c@{}}
CRPQs & \ospgosyan & Our algorithm \\
\hline
free-con. acyc. & 
$|E| + |E|\hspace{-0.1em}\cdot\hspace{-0.1em} \out_a^{1/2} + \out$ & $|E| + |E|\hspace{-0.1em} \cdot\hspace{-0.1em} \out_c^{1/2} + \out$ \\
\hline
general acyc. & 
$|E| + |E|\hspace{-0.1em}\cdot\hspace{-0.1em} \out_a^{1/2} + \out_a \hspace{-0.1em} \cdot\hspace{-0.1em} \out^{1-1/\fw} + \out$ & $|E| + |E|\hspace{-0.1em}\cdot \hspace{-0.1em}\out^{1/2} \cdot |V|^{\ctw/2}+ \out |V|^{\ctw}$
\end{tabular}
\end{center}

We start the comparison with a free-connex acyclic query. On this type of queries, our algorithm performs at least as good as \ospgosyan and outperforms \ospgosyan when $\out_a$ is asymptotically larger than $\out_c$.   

\begin{example}
\label{ex:analysis_running_time_free_connex}
\rm
Consider the free-connex acyclic CRPQ $Q(X,Y,Z) = a^*aa(X,Y) \wedge b^*bb(Y,Z)$, which asks for all triples $(u,v,w)$ of vertices in the input graph such that there is a path from $u$ to $v$ labeled by $a^*aa$ and a path from $v$ to $w$ labeled by $b^*bb$.
The free-connex fractional hypertree width $\fw(Q)$ of $Q$ is $1$, while its contraction width $\ctw(Q)$ is 0.

We evaluate $Q$ over the edge-labeled graph $G = (V_1 \cup V_2\cup \{v\}, E_a \cup E_b, \{a,b\})$, where $V_1 = \{u_1, \ldots , u_n\}$, $V_2 = \{w_1, \ldots , w_n\}$, $E_a = \{(u_i,a,v), (v,a,w_i) \mid i \in [n]\}$, and $E_b = \{(u_i,b,v), (v,b,w_i) \mid i \in [n]\}$, for some $n \in \mathbb{N}$. The graph $G$ is depicted in Figure~\ref{fig:graphs_part1} (left). 

We have $|V| =2n+1$ and $|E| = 4n$.
The output of each of the RPQs $a^*aa$ and $b^*bb$ is $\{(u_i, w_j)\mid i, j \in [n]\}$, so of size $n^2$. The output of $Q$ and therefore the calibrated outputs of the RPQs are empty.  
Hence, $\out = \out_c = 0$ and $\out_a = n^2$.
Plugging these quantities into the complexity expressions of the algorithms, we observe that \ospgosyan takes $\bigO{n^2}$ time, whereas our approach takes only $\bigO{n}$ time.
\qed
\end{example}

\begin{figure}[t]
\centering
\begin{tikzpicture}
  \node (u1) at (0,0) {$u_{1}$};
  \node (udots) at (0,-1) {$\vdots$};
  \node (un) at (0,-2) {$u_{n}$};

  \node (v) at (1.5,-1) {$v$};

  \node (w1) at (3,0) {$w_{1}$};
  \node (wdots) at (3,-1) {$\vdots$};
  \node (wn) at (3,-2) {$w_{n}$};

  \draw[->] (u1) to[bend left=10] node[above] {$a$} (v);
  \draw[->] (u1) to[bend right=10] node[pos=0.3, below] {$b$} (v);
  \draw[->] (un) to[bend left=10] node[above] {$a$} (v);
  \draw[->] (un) to[bend right=10] node[pos=0.4, below] {$b$} (v);

  \draw[->] (v) to[bend left=10] node[above] {$a$} (w1);
  \draw[->] (v) to[bend right=10] node[pos=0.75, below] {$b$} (w1);
  \draw[->] (v) to[bend left=10] node[above] {$a$} (wn);
  \draw[->] (v) to[bend right=10] node[below] {$b$} (wn);

\begin{scope}[xshift =5.5cm]
  \node (u0) at (0,0) {$u_{0}$};
    \node (v0) at (1.5,0) {$v_{0}$};
        
  \node (u1) at (0,-0.5) {$u_{1}$};
  \node (udots) at (0,-1.1) {$\vdots$};
  \node (un) at (0,-2) {$u_{n}$};

  \node (v) at (1.5,-1) {$v$};

  \node (w1) at (3,-0.5) {$w_{1}$};
  \node (wdots) at (3,-1.1) {$\vdots$};
  \node (wn) at (3,-2) {$w_{n}$};

  \node (z1) at (4,0) {$z_{1}$};
  \node (zdots) at (4,-0.5) {$\vdots$};    
  \node (zn) at (4,-1.2) {$z_{n}$};    

  \draw[<-] (u1) to node[above] {$a$} (v);
  \draw[<-] (un) to node[above] {$a$} (v);

  \draw[<-] (v) to  node[above] {$a$} (w1);
  \draw[<-] (v) to node[below] {$a$} (wn);

    \draw[->] (u0) to node[above] {$a$} (v0);
    \draw[->] (v0) to node[above] {$a$} (w1);

    \draw[->] (w1) to node[above] {$b$} (z1);        
    \draw[->] (w1) to node[below] {$b$} (zn);                
\end{scope}

\end{tikzpicture}
\caption{The edge-labeled graphs from Examples~\ref{ex:analysis_running_time_free_connex} (left) and \ref{ex:analysis_running_time_k_path} (right).}
\label{fig:graphs_part1}
\end{figure}

\begin{figure}[t]
\centering
\begin{tikzpicture}
  \node (u0) at (0,0) {$u_{0}$};
    \node (v0) at (1.5,0) {$v_{0}$};
    
    \node (z1) at (4,0) {$z_{1}$};
    \node (z2) at (4,-1) {$z_{2}$};    
    
  \node (u1) at (0,-0.5) {$u_{1}$};
  \node (udots) at (0,-1.1) {$\vdots$};
  \node (un) at (0,-2) {$u_{n}$};

  \node (v) at (1.5,-1) {$v$};

  \node (w1) at (3,-0.5) {$w_{1}$};
  \node (wdots) at (3,-1.1) {$\vdots$};
  \node (wn) at (3,-2) {$w_{n}$};

  \draw[<-] (u1) to node[above] {$a$} (v);
  \draw[<-] (un) to node[above] {$a$} (v);

  \draw[<-] (v) to  node[above] {$a$} (w1);
  \draw[<-] (v) to node[below] {$a$} (wn);

    \draw[->] (u0) to node[above] {$a$} (v0);
        \draw[->] (v0) to node[above] {$a$} (w1);

        \draw[->] (z1) to node[above] {$b$} (w1);        
        \draw[->] (z2) to node[below] {$c$} (w1);                

\begin{scope}[xshift = 6cm]
  \node (u1) at (0,-1) {$u_{1}$};
  \node (udots) at (0,-1.4) {$\vdots$};
  \node (un) at (0,-2) {$u_{n}$};

  \node (w1) at (3,-1) {$w_{1}$};
  \node (wdots) at (3,-1.4) {$\vdots$};
  \node (wn) at (3,-2) {$w_{n}$};

  \node (z1) at (1,0) {$z_{1}$};
  \node (ldots) at (1.5,0) {$\ldots$};
  \node (zn) at (2,0) {$z_{n}$};  

  \node (v) at (1.5,-1) {$v$};
  
  \draw[->] (u1) to node[pos=0.3, above] {$a$} (v);
  \draw[->] (un) to node[pos=0.5, below] {$a$} (v);

  \draw[<-] (v) to  node[pos=0.75, above] {$b$} (w1);
  \draw[<-] (v) to node[pos=0.7, below] {$b$} (wn);

    \draw[->] (z1) to  node[left] {$c$} (v);
  \draw[->] (zn) to node[right] {$c$} (v);

\end{scope}
\end{tikzpicture}
\caption{The edge-labeled graphs from Examples~\ref{ex:analysis_running_time_non_free_connex} (left) and \ref{ex:analysis_running_time_bad_case} (right).}
\label{fig:graphs_part2}
\end{figure}

The next example illustrates a case where our approach outperforms \ospgosyan on a CRPQ that is not free-connex acyclic but has contraction width 0.

\begin{example}
\label{ex:analysis_running_time_k_path}
\rm
Consider the 2-path CRPQ $Q(X, Z) = a^*aa(X, Y) \wedge b(Y, Z)$ which asks for all pairs of vertices $(u, w)$ in the input graph such that there are a vertex $v$, a path from $u$ to $v$ labeled with $a^*aa$ and an edge from $v$ to $w$ labeled with $b$. This query is acyclic, but not free-connex. It has free-connex fractional hypertree width 2 and contraction width 0.

Consider the edge-labeled graph $G = (V_1 \cup V_2 \cup V_3 \cup \{v, v_0\}, E_a \cup E_b, \{a, b\})$, where $V_1 = \{u_0, \ldots, u_n\}$, $V_2 = \{w_1, \ldots, w_n\}$, $V_3 = \{z_1, \ldots, z_n\}$, $E_a = \{(w_i,a,v), (v,a,u_i) \mid i \in [n]\} \cup \{(u_0, a, v_0), (v_0, a, w_1)\}$, and $E_b = \{(w_1, b, z_i) \mid i \in [n]\}$ for some $n \in \mathbb{N}$. This graph is visualized in Figure~\ref{fig:graphs_part1} (right).

We have $|V| = 3n+3$, $|E| = 3n+2$, and $\out_a = \Theta(n^2)$. 
The output of $Q$ consists of the tuples $\{(u_0, z_i) \mid i \in [n]\}$, hence $\out = n$. 
Our algorithm needs $\bigO{n^{3/2}}$ time to evaluate this query
on the input graph, whereas \ospgosyan approach requires $\bigO{n^{5/2}}$ time.
\qed
\end{example}

The next example illustrates a case where our algorithm outperforms \ospgosyan on a query whose contraction width is greater than 0.

\begin{example}
 \label{ex:analysis_running_time_non_free_connex}
\rm
Consider the 3-star CRPQ $Q(X_1,X_2, X_3) = a^*aa(X_1,X) \wedge b(X_2,X)\wedge c(X_3,X)$, which asks for all triples $(u,v,w)$ of vertices such that there is a vertex $z$, a path from $u$ to $z$ labeled by $a^*aa$, an edge from $v$ to $z$ labeled by $b$, and an edge from $w$ to $z$ labeled by $c$.
The query is acyclic but not free-connex. It has free-connex fractional hypertree width $3$ and contraction width $1$.

Consider the edge-labeled graph $G = (V_1 \cup V_2 \cup \{u_0, v_0, v, z_1, z_2\}, E_a \cup E_{bc}, \{a,b,c\})$, where $V_1 = \{u_1, \ldots , u_n\}$, $V_2 = \{w_1, \ldots , w_n\}$, 
$E_a = \{(w_i,a,v), (v,a,u_i) \mid i \in [n]\}$ $\cup$
$\{(u_0, a, v_0), (v_0, a, w_1)\}$, and $E_{bc} = \{(z_1,b,w_1), (z_2,c,w_1)\}$ 
for some $n \in \mathbb{N}$. The graph $G$ is visualized in Figure~\ref{fig:graphs_part2} (left). 

We observe that  $|V| =2n+5$, $|E| = 2n + 4$, and $\out_a = \Theta(n^2)$.
The output of $Q$ consists of the single tuple $(u_0,z_1, z_2)$, hence $\out = 1$.
These quantities imply that \ospgosyan needs $\bigO{n^2}$, while our algorithm needs only 
$\bigO{n^{3/2}}$ time. 
\qed
\end{example}

Next, we illustrate a CRPQ and an input graph where \ospgosyan outperforms our algorithm. This CRPQ degenerates to a conjunctive query.  
Given the size of the outputs of the RPQs, the final output of the CRPQ achieves asymptotically its worst-case size.
\begin{example}
\label{ex:analysis_running_time_bad_case}
\rm
Consider the following simplified variant of the 3-star CRPQ 
from Example \ref{ex:analysis_running_time_non_free_connex}: 
$Q(X_1,X_2, X_3) = a(X_1,X) \wedge b(X_2,X)\wedge c(X_3,X)$.
The query asks for all vertex triples $(u,v,w)$ 
such that there is a vertex $z$, an edge from $u$ to $z$ labeled by $a$, an edge from $v$ to $z$ labeled by $b$, and an edge from $w$ to $z$ labeled by $c$. Just as in Example~\ref{ex:analysis_running_time_non_free_connex}, the query has free-connex fractional hypertree width $3$ and contraction width $1$.

We evaluate the query over the edge-labeled graph 
$G = (V_1 \cup V_2 \cup V_3 \cup \{v\}, E, \{a,b,c\})$, where $V_1 = \{u_1, \ldots , u_n\}$, $V_2 = \{w_1, \ldots , w_n\}$,
$V_3 = \{z_1, \ldots , z_n\}$, and 
$E$ $=$ $\{(u_i,a,v),$ $(w_i,b,v),$ $(z_i,c,v)$ $\mid i \in [n]\}$
for some $n \in \mathbb{N}$. The graph $G$ is shown in Figure~\ref{fig:graphs_part2} (right). 

We have $|V| =3n+1$ and $|E| = 3n$.
The outputs of the RPQs $a$, $b$, and $c$ are 
$\{(u_i, v) \mid i \in [n]\}$, $\{(w_i, v) \mid i \in [n]\}$, and 
$\{(z_i, v) \mid i \in [n]\}$, respectively. 
The output of $Q$ is 
$\{(u_i,v_j, z_k) \mid i,j,k \in [n]\}$.
Hence, $\out_a = n$ and $\out = n^3$.
%
We observe that \ospgosyan
runs in $\bigO{n^3}$ time, while our algorithms requires $\bigO{n^4}$ time.
\qed
\end{example}

\section{Missing Details in Section \ref{sec:prelims}}
\label{app:prelims}

\paragraph*{Complexity Measures}
\nop{The {\em evaluation problem} for a CRPQ $Q$ has the following input and output:

\smallskip
    \begin{tabular}{ll} 
        \textbf{Input} & An edge-labeled graph $G = (V,E, \Sigma)$\\
        \textbf{Output} & The set of all tuples in the output of $Q$ over $G$
    \end{tabular}

\smallskip 
}
We use data complexity, i.e., we consider the query fixed and of constant size and measure the time complexity of solving the evaluation problem for CRPQs in terms of the number $|V|$ of vertices and the number $|E|$ of  edges of the input graph, the  size $\out$ of the output of the query, and the maximum size $\out_c$ of the calibrated output of any RPQ in the query.

We consider the RAM model of computation and assume that each materialized relation over the vertex set of the input graph is implemented by a data structure that allows to look up a tuple in
the relation in constant time and enumerate all tuples with constant delay.

\paragraph*{Conjunctive Queries}
A conjunctive query (CQ)   is of the form 
\begin{align*}
    Q(\bm F) = R_1(\bm X_1) \wedge \cdots \wedge R_n(\bm X_n),
    \label{eq:join-query}
\end{align*}
where 
each $R_i$ is a relation symbol, each 
$\bm X_i$ is a tuple of variables, and 
$\bm F \subseteq  \bigcup_{i \in [n]} \bm X_i$
is the set of free variables.
We refer to $\bm X_i$ as the {\em schema}
of $R_i$ and
call each $R_i(\bm X_i)$ an {\em atom} of $Q$.

The domain of a variable $X$ is denoted by  $\Dom(X)$. 
A {value tuple} $\bm t$ over a schema $\bm X = (X_1, \ldots , X_k)$ is an element from $\Dom(\bm X) = \Dom(X_1) \times \cdots \times \Dom(X_k)$.
We denote by $\bm t(X)$ the $X$-value of $\bm t$. 
A tuple $\bm t$ over $\bm F$
is in the result of $Q$ if it can be extended into a tuple $\bm t'$ over $\bigcup_{i \in [n]} \bm X_i$ such that there are tuples 
$\bm t_i \in R_i$ for $i \in [n]$ and the following holds: for all $X \in \vars(Q)$, the set $\{\bm t_i(X)\mid i \in [n] \wedge X \in \bm X_i\}$ consists of the single value $\bm t'(X)$.

A CQ is {\em ($\alpha$-)acyclic} if we can construct a tree, called {\em join tree}, such that the nodes of the tree are the atoms of the query ({\em coverage}) and for each variable, it holds: if the variable appears in two atoms, then it appears in all atoms on the path connecting the two atoms ({\em connectivity})~\cite{Brault-Baron16}.

\paragraph*{Queries Combining RPQs with Unary and Nullary Relations}
Our evaluation algorithms use at intermediate steps  
queries of the form 
$$Q(\bm F) = r(X,Y) \wedge \bigwedge_{i \in [k]} R_i(\bm X_i),$$
where $r$ is an  RPQ and each $R_i$ is a unary or nullary materialized  relation over the vertex set of the input graph.
We have $\bm F \subseteq \vars(Q)$, where 
$\vars(Q): = \{X,Y\} \cup \bigcup_{i \in [k]} \bm X_i$. 

The semantics of such a query over an edge-labeled graph $G = (V, E, \Sigma)$ is straightforward. 
A mapping $\mu: \bm F \rightarrow V$ is in the output of $Q$ if it can be extended to a mapping
$\mu': \vars(Q) \rightarrow V$
such that $G$ has a path from $\mu'(X)$
to $\mu'(Y)$ labeled by a string from $L(r)$ and for each $i \in [k]$,
it holds: if $\bm X_i = Z$ for some variable $Z$, then  $\mu'(Z) \in R_i$; if $\bm X_i$ is empty, then  $R_i$ consists of the empty tuple (which means that it evaluates to \texttt{true}). 
We represent the mapping $\mu$ by the tuple $(\mu(X))_{X \in \bm F}$, assuming a fixed order on the free variables.
\section{Missing Details in Section~\ref{sec:contraction}}
\label{app:contraction}
\subsection{Proof of Proposition~\ref{lem:unique-reduced-graph}}

\UniqueReducedGraph*

\begin{proof}
    The proof is by induction on the number of vertices in the query graph $G$ of the acyclic CRPQ $Q$.
    In the base case, the graph $G$ has no bound vertices of degree less than $3$, in which case the only possible elimination order is the empty order $\omega=()$, which yields $G$ itself.

    For the induction step, suppose that $G$ has at least one bound vertex $v$ of degree less than $3$.
    If $G$ has exactly one bound vertex $v$ of degree less than $3$, then every elimination order $\omega$ of $G$ must start with $v$.
    Let $G_1:=\eliminate(G, v)$.
    By induction, $G_1$ must have a unique contracted graph $G_k$, hence $G_k$ must also be the unique contracted graph of $G$.
    Now suppose that $G$ has at least two bound vertices $v, v'$ of degree less than $3$,
    and let $G_1:=\eliminate(G, v)$ and $G_1':=\eliminate(G, v')$. 
    By induction, $G_1$ has a unique contracted graph $G_k$ and $G_1'$ has a unique contracted graph $G'_{k'}$.
    We make the following claim:
    \begin{claim}
        \label{clm:all-roads-lead-to-rome}
        $\eliminate(G_1, v') = \eliminate(G_1', v)$
    \end{claim}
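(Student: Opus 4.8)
The plan is to read Claim~\ref{clm:all-roads-lead-to-rome} as a \emph{local confluence} (diamond) property for the elimination rewriting system and to prove it by reducing to a handful of local configurations. Each operation $\eliminate(\cdot, v)$ touches only $v$, the edges incident to $v$, and the immediate neighbourhood $N(v)$: a degree-$2$ step inserts one edge between the two neighbours of $v$, while a degree-$1$ step may additionally delete the unique neighbour of $v$ (precisely when that neighbour would become isolated). Hence eliminating $v$ and eliminating $v'$ are \emph{independent}, and therefore commute, unless their redexes overlap, i.e. unless $v$ and $v'$ are adjacent or share a neighbour. The first step is therefore to classify the possible overlaps.

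Because the query graph $G$ is acyclic (a forest), these overlaps are severely constrained and mutually exclusive: $v$ and $v'$ cannot share two neighbours (that would close a $4$-cycle), and if $v$ is adjacent to $v'$ they can share no neighbour (a common neighbour would close a triangle). This leaves exactly three exhaustive cases: (i) $v,v'$ are non-adjacent with $N(v)\cap N(v')=\emptyset$; (ii) $v,v'$ are adjacent with no common neighbour; (iii) $v,v'$ are non-adjacent with exactly one common neighbour $w$. In case (i) the two eliminations modify disjoint sets of vertices and edges — the inserted edge of a degree-$2$ step lies strictly inside a single neighbourhood, and a degree-$1$ step can only delete an already-isolated neighbour, which cannot be $v'$ nor touch any edge of $v'$ — so $\eliminate(G_1,v')=\eliminate(G_1',v)$ is immediate.

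For cases (ii) and (iii) I would run through the sub-cases given by the degrees $\deg v,\deg v'\in\{1,2\}$ and verify equality by bookkeeping the edges that each order removes and inserts, together with the condition under which a degree-$1$ step additionally deletes its neighbour. The observations that make every sub-case close are: eliminating $v$ never raises the degree of $v'$ (if they are adjacent, $v'$ trades its edge to $v$ for an edge to the other neighbour of $v$ in the degree-$2$ case, or simply loses one edge in the degree-$1$ case; if they only share $w$, the edge $\{v',w\}$ is left intact), so $v'$ remains eliminable in $G_1$; and the isolation test deciding whether a shared endpoint survives evaluates to the \emph{same} condition on $G$ for both orders (e.g. in case (iii) the common neighbour $w$ disappears iff its only $G$-neighbours were $v$ and $v'$, independently of the order). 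Carrying this out shows that in every sub-case both orders delete the same vertex set $\{v,v'\}$, possibly one further shared endpoint under one common condition, and perform the same net edge changes.

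The configuration demanding separate treatment — and the main obstacle — is the conditional neighbour deletion of the degree-$1$ rule, and especially its degenerate instance within case (ii) where $\deg v=\deg v'=1$ and $\{v,v'\}$ is an isolated edge. There $\eliminate(G,v)$ already deletes \emph{both} $v$ and $v'$ (after removing the edge, $v'$ is isolated and is removed), and symmetrically for $\eliminate(G,v')$; thus $G_1=G_1'$ outright and $v'$ is no longer present to be eliminated. I would dispose of this instance before the diamond argument, observing that $G_1=G_1'$ makes the conclusion of the surrounding induction hold directly, and invoke the claim only in the remaining sub-cases, where the degree-preservation observation guarantees that $v'$ survives in $G_1$ and $v$ survives in $G_1'$, so that both sides of the claimed equation are well defined.
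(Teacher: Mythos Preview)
Your proposal is correct and follows essentially the same case-analysis route as the paper: split on whether $v$ and $v'$ are adjacent, then on their degrees, and verify that both elimination orders yield the same edge and vertex modifications. Your treatment is slightly more thorough in two respects: you explicitly separate the non-adjacent common-neighbour sub-case (which the paper folds into its Case~1 with the remark that degrees are unaffected), and you correctly flag the isolated-edge degeneracy in the adjacent degree-$1$/degree-$1$ sub-case, where $v'$ is already absent from $G_1$ so the claim as literally stated is ill-posed and must be handled directly in the enclosing induction rather than via the diamond --- a point the paper's own proof glosses over.
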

    Assuming Claim~\ref{clm:all-roads-lead-to-rome} holds, $G_k$ must be the unique contracted graph of
    $\eliminate(G_1, v')$ and $G'_{k'}$ must be the unique contracted graph of $\eliminate(G_1', v)$.
    Therefore, $G_k = G'_{k'}$ is the unique contracted graph of both $G_1$ and $G_1'$, and by extension $G$, which concludes the proof.

    Now we prove Claim~\ref{clm:all-roads-lead-to-rome}.
    \begin{proof}[Proof of Claim~\ref{clm:all-roads-lead-to-rome}]
        We recognize the following cases:
        \begin{itemize}
            \item Case 1: $v$ and $v'$ are not adjacent in $G$. In this case, eliminating $v$ does not affect the degree of $v'$, and vice versa. If both degrees are less than 3, then the two vertices
            can be eliminated independently from one another, leading to the same graph, regardless
            of which vertex is eliminated first. Therefore, $\eliminate(G_1, v') = \eliminate(G_1', v)$.
            \item Case 2: $v$ and $v'$ are adjacent in $G$. Here, we recognize the following sub-cases:
            \begin{itemize}
                \item Case 2a: Both $v$ and $v'$ have degree 1 in $G$.
                In particular, $v$ and $v'$ are connected by an edge and have no other neighbors.
                In this case, eliminating either vertex
                will result in removing both vertices along with the edge connecting them.
                \item Case 2b: Both $v$ and $v'$ have degree 2 in $G$. Namely, $v$ has two neighbors $u$ and $v'$, whereas $v'$ has two neighbors $v$ and $u'$ where $u\neq u'$.
                In this case, eliminating both $v$ and $v'$ (in any order) will result in removing both vertices along with the edge connecting them, and adding a new edge between $u$ and $u'$.
                \item Case 2c: $v$ has degree 1 and $v'$ has degree 2 in $G$. In particular, $v'$ is the only neighbor of $v$, and $v'$ has another neighbor $u$.
                In this case, eliminating $v$ and $v'$ (in any order) will result in removing both vertices along with the two edges $\{v, v'\}$ and $\{v', u\}$.
            \end{itemize}
        \end{itemize}
    \end{proof}
\end{proof}

\subsection{Fractional Hypertree Width}
\label{sec:fract_hypertree_width}
We introduce the {\em fractional edge cover number} of variable sets in    CRPQs.
Given a CRPQ $Q$ and $\bm F \subseteq \vars(Q)$,    
a {\em fractional edge cover}
of $\bm F$ is a solution 
$\boldsymbol{\lambda} = (\lambda_{r(\bm X)})_{r(\bm X) \in \atoms(Q)}$ to the following linear program \cite{AtseriasGM13}: 
\begin{align*}
\text{minimize} & \TAB\sum_{r(\bm X) \in\, \atoms(Q)} \lambda_{r(\bm X)} && \\[3pt]
\text{subject to} &\TAB \sum_{r(\bm X): \, X \in \bm X} \lambda_{r(\bm X)} \geq 1 && \text{ for all } X \in \bm F \text{ and } \\[3pt]
& \TAB\lambda_{r(\bm X)} \in [0,1] && \text{ for all } r(\bm X) \in \atoms(Q)
\end{align*}
The optimal objective value of the above program 
is called the fractional edge cover number of $\bm F$ 
in $Q$ and is denoted as $\rho_{Q}^{\ast}(\bm F)$.  

\nop{
\begin{definition}[The restriction $Q|_{\bm Y}$ of a CRPQ $Q$]
    \label{defn:restriction}
    For a CRPQ $Q$ and a subset $\bm Y \subseteq \vars(Q)$,
    we define the {\em restriction of $Q$ to $\bm Y$}, denoted by $Q|_{\bm Y}$,
to be the query that results
from $Q$ by restricting the schema of each atom to those variables 
that appear in $\bm Y$. Formally:
\begin{align}
    R_{\bm Y}(\bm X \cap \bm Y) &\defeq \pi_{\bm X \cap \bm Y} R(\bm X),\quad\quad
    \text{for all $R(\bm X) \in \atoms(Q)$}
    \nonumber\\
    Q_{\bm Y}(\bm Y) &\defeq \bigwedge_{R(\bm X) \in \atoms(Q)} R_{\bm Y}(\bm X \cap \bm Y)
    \label{eq:bag_query}
\end{align}
\end{definition}
}

\begin{definition}[Fractional Hypertree Width]
    \label{defn:fhtw}
The fractional hypertree width $\fw(Q)$ of any CRPQ $Q$ and
the fractional hypertree  width $\fw(\calT)$ of any tree decomposition $\calT$ of $Q$ are:
$$
\fw(Q)  = \min_{\calT \in \ftd(Q)} \fw(\calT)
\hspace{1cm} \text{and} \hspace{1cm} 
\fw(\calT)  = \max_{\calB \in \bags(\calT)} \rho^*_Q(\calB)
$$
\end{definition}

\section{Missing Details in Section \ref{sec:freeconnex}}
\label{app:freeconnex}
\subsection{Proof of Proposition~\ref{prop:complexity_eval_free_connex}}
\ComplexityEvalFreeConnex*

\begin{proof}
First, we prove that each relation $R_i$ computed in \textsc{EvalFreeConnex} is calibrated. Then, we show that the running time of each of the subprocedures \textsc{BottomUp} and \textsc{TopDown} is $\bigO{|E| + |E| \cdot \out_c^{1/2}}$. Finally, we analyze the running time of the entire procedure \textsc{EvalFreeConnex}.

\paragraph*{All relations $R_i$ are calibrated}
The argumentation is similar to the argumentation  for  the Yannakakis algorithm~\cite{Yannakakis81}. For completeness, we outline the main proof ideas. 
We first show that the relation $R_i$ computed at the root $r_i(X_i, Y_i)$ of the join tree is the calibrated output of $r_i$. For any atom $r_j(X_j, Y_j)$ in the join tree, let $Q_j$ be the full query whose body is the conjunction of all atoms in the subtree rooted at $r_j(X_j,Y_j)$. We prove the following claim:

\begin{claim}
\label{claim:bottom-up-calib}
For each non-root atom $r_i(X_i, Y_i)$, it holds that the relation $B_{i \rightarrow j}(\bm X_i)$ is the projection of the output of $Q_i$ onto $\bm X_i$.
\end{claim}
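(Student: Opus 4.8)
The plan is to prove Claim~\ref{claim:bottom-up-calib} by induction on the structure of the join tree $J$, processing the atoms from the leaves upward — precisely in the order in which \textsc{BottomUp} materializes the relations. Throughout, write $r_j$ for the parent of a non-root atom $r_i(X_i,Y_i)$, so that the schema of $B_{i\to j}$ is $\bm X_i = \{X_i,Y_i\}\cap\{X_j,Y_j\}$, and recall that $Q_i$ denotes the full query whose body is the conjunction of all atoms in the subtree $T_i$ rooted at $r_i$. The base case is immediate: if $r_i$ is a leaf it has no children, so \textsc{BottomUp} sets $B_{i\to j}$ to the projection of the output of $r_i$ onto $\bm X_i$, while $Q_i$ is just the single-atom full query $r_i(X_i,Y_i)$; hence $B_{i\to j}$ is the projection of the output of $Q_i$ onto $\bm X_i$.

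For the inductive step, let $r_i$ have children $r_{c_1},\dots,r_{c_m}$, with $\bm X_{c_\ell} = \{X_{c_\ell},Y_{c_\ell}\}\cap\{X_i,Y_i\}$ the schema of $B_{c_\ell\to i}$. By definition of \textsc{BottomUp},
\[
B_{i\to j}(\bm X_i) = \pi_{\bm X_i}\Big( r_i(X_i,Y_i) \bowtie \bigwedge_{\ell\in[m]} B_{c_\ell\to i}(\bm X_{c_\ell})\Big),
\]
and by the induction hypothesis each $B_{c_\ell\to i}$ equals the projection of the output of $Q_{c_\ell}$ onto $\bm X_{c_\ell}$. Since the output of $Q_i$ is the natural join of the output of $r_i$ with the outputs of all $Q_{c_\ell}$, it remains to verify the projection-pushing identity
\[
\pi_{\bm X_i}\Big( r_i \bowtie \bigwedge_{\ell} \pi_{\bm X_{c_\ell}}(Q_{c_\ell})\Big) = \pi_{\bm X_i}\Big( r_i \bowtie \bigwedge_{\ell} Q_{c_\ell}\Big),
\]
where I abbreviate each query by its output.

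I expect this projection-pushing step to be the crux of the argument, and it is exactly where the \emph{connectivity} property of the join tree is used. The key observation is that, for each child $\ell$, any variable occurring both in some atom of $T_{c_\ell}$ and in some atom outside $T_{c_\ell}$ must — by connectivity, since the atoms containing it form a connected subtree whose connecting path crosses the edge $\{r_{c_\ell}, r_i\}$ — occur in both $r_{c_\ell}$ and $r_i$, and therefore lie in $\bm X_{c_\ell}$. Hence the variables $\vars(Q_{c_\ell})\setminus\bm X_{c_\ell}$ that are projected away are local to $T_{c_\ell}$: they appear neither in $r_i$, nor in any sibling subtree $T_{c_{\ell'}}$, nor in $\bm X_i\subseteq\{X_i,Y_i\}$. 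Standard commutation of projection with natural join then justifies pushing $\pi_{\bm X_{c_\ell}}$ below the join without affecting the final projection onto $\bm X_i$, which establishes the identity and completes the induction. Applying the same recurrence one last time at the root (where $Q_i = Q$) shows that the relation $R_i$ computed there is the projection of the output of $Q$ onto $\{X_i,Y_i\}\cap\bm F$, i.e. the calibrated output of $r_i$; the top-down pass is handled by a symmetric argument.
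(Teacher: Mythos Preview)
Your proposal is correct and follows essentially the same inductive scheme as the paper's own proof. In fact, you are more explicit than the paper at the key step: the paper simply asserts that since $\bm X_i\subseteq\{X_i,Y_i\}$ the relation $B_{i\to j}$ equals the projection of the output of $Q_i$ onto $\bm X_i$, whereas you spell out the projection-pushing identity and justify it via the connectivity property of the join tree --- a detail the paper leaves implicit.
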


\begin{proof}
We prove Claim~\ref{claim:bottom-up-calib} using induction  on the structure of the join tree. 
In the base case, the body of $Q_i$ consists of the single atom $r_i(X_i, Y_i)$.
Since $B_{i \rightarrow j}$ is a projection of $r_i(X_i,Y_i)$, Claim~\ref{claim:bottom-up-calib} holds in this case. 

Consider now an atom $r_i(X_i, Y_i)$ in the join tree that is not a leaf. 
Consider the queries $Q_{c_1}, \ldots, Q_{c_k}$ constructed for the child 
atoms $r_{c_1}(X_{c_1},Y_{c_1}), \ldots , r_{c_k}(X_{c_k},Y_{c_k})$.
By the induction hypothesis, each relation $B_{c_j \rightarrow i}(\bm X_{c_j})$ is the projection of the output of $Q_{c_j}$ onto $\bm X_{c_j}$. 
By the definition of $B_{i \rightarrow j}$, the relation $B_{i \rightarrow j}$ is the projection of the output of the following query onto $\bm X_i$:
\[
Q_i'(X_i, Y_i) = r_i(X_i, Y_i) \wedge \bigwedge\limits_{r_c(X_c, Y_c)\in\text{children}(r_i(X_i,Y_i))} B_{c \rightarrow i}(\{X_c, Y_c\} \cap \{X_i, Y_i\}),
\]
Since $\bm X_i \subseteq \{X_i, Y_i\}$, the relation $B_{i \rightarrow j}$ must be the projection of the output of $Q_i$ onto $\bm X_i$.
This completes the proof of Claim~\ref{claim:bottom-up-calib}.
\end{proof}
Claim~\ref{claim:bottom-up-calib} implies that the relation $R_i$ computed at the root of the join tree must be the calibrated output of the root atom $r_i(X_i, Y_i)$.

\smallskip

We now show that each relation $R_i$ computed be the procedure  \textsc{EvalFreeConnex} is the calibrated output of $r_i$. We use induction on the depth of a node $r_i(X_i, Y_i)$ in the join tree. In the base case,  $r_i(X_i, Y_i)$ is the root of the join tree. We have shown above that 
$R_i$ must be the calibrated output of $r_i$.
For the induction step, consider a non-root atom $r_i(X_i, Y_i)$. Let $r_p(X_p, Y_p)$ be the parent atom of $r_i(X_i, Y_i)$. By the induction hypothesis, $R_p$ is calibrated output of $r_p$. Since $T_{p \rightarrow i}$ is a projection of 
$r_p$ onto the common variables of $r_p$ and $r_i$, this implies that $R_i$ is the calibrated output of $r_i$.

We conclude that all relations $R_i$ computed in \textsc{EvalFreeConnex} are calibrated.

\paragraph*{Running time of \textsc{BottomUp}} 
The running time of \textsc{BottomUp} is dominated by the time to materialize the relations $B_{i \rightarrow p}$ and the relation $R_i$ at the root of the join tree. We first show that materializing all relations $B_{i \rightarrow p}$ takes $\bigO{|E|}$ time. Since the query is acyclic, there are no two atoms with the same pair of variables. Therefore, the definition of each relation 
$B_{i \rightarrow p}$ is as follows:
\[
B_{i\rightarrow p}(\bm X) = r_i(X_i, Y_i) \wedge B_{j_1\rightarrow i}(\bm X_1) \wedge \cdots \wedge B_{j_k\rightarrow i}(\bm X_k),
\]
where $\bm X$ and each $\bm X_i$ either consists of $X_i$, or consists of $Y_i$, or is empty. 
Assuming that the relations $B_{j_\ell \rightarrow i}$ 
appearing in the body of the query defining $B_{i \rightarrow p}$ are materialized,  
Lemma~\ref{lem:reachable_set_ospg_variant} implies that 
each relation $B_{i \rightarrow p}$ can be computed in $\bigO{|E|}$ time. Since the query size is constant, and for each query atom $r_i(X_i,Y_i)$, at most one relation $B_{i \rightarrow p}$ is constructed, the time required to compute all relations of type $B_{i \rightarrow p}$ is $\bigO{|E|}$.

As shown above, the relation $R_i$ constructed at the root of the join tree is calibrated. The relation $R_i$ is of the form:
\[
R_i(\{X_i, Y_i\} \cap \bm F) = r_i(X_i, Y_i) \wedge B_{j_1\rightarrow i}(\bm X_1) \wedge \cdots B_{j_k\rightarrow i}(\bm X_k),
\]
where each $\bm X_i$ either consists of $X_i$, or consists of $Y_i$, or is empty. 
By Lemma \ref{lem:reachable_set_ospg_variant}, it holds:
If $\{X_i, Y_i\} \cap \bm F$ is empty or a singleton, the relation $R_i$ can be computed in 
$\bigO{|E|}$ time; if $\{X_i, Y_i\} \cap 
\bm F = \{X_i, Y_i\}$, the computation time is $\bigO{|E| + |E| \cdot \out_{X_i, Y_i}^{1/2}}$ time, where $\out_{X_i, Y_i}$ is the output size of $R_i$. Since $\out_{X_i, Y_i} \leq \out_c$, we can rewrite this time complexity as $\bigO{|E| + |E| \cdot \out_c^{1/2}}$. 

We conclude that the subprocedure \textsc{BottomUp} runs in $\bigO{|E| + |E| \cdot \out_c^{1/2}}$ time.

\paragraph*{Running time of \textsc{TopDown}} 
The running time of \textsc{TopDown} is dominated by the time needed to compute the relations denoted $T_{p \rightarrow i}$ and the relations $R_i$ constructed at the atoms $r_i(X_i,Y_i)$ below the root of the join tree. Consider a node atom $r_i(X_i, Y_i)$ in the join tree 
and its parent node $r_p(X_p, Y_p)$ at which we have already
computed the relation $R_p(\bm X)$. We have  
$T_{p\rightarrow i}(\{X_i, Y_i\} \cap \bm X) = R_p(\bm X)$ and
$R_i(\{X_i, Y_i\}\cap \bm F) = r_i(X_i, Y_i) \wedge  T_{p\rightarrow i}(\{X_i, Y_i\} \cap \bm X)$.
Since the query cannot contain two atoms with the same set of variables, the set $\{X_i, Y_i\} \cap \bm X$
can contain at most one variable. Hence, it follows 
from Lemma~\ref{lem:reachable_set_ospg_variant} 
that $R_i$ can be computed in  $\bigO{|E| + |E| \cdot \out_{X_i, Y_i}^{1/2}}$ time, where
$\out_{X_i, Y_i}$ is the output size of $R_i$. 
Again, as we showed above, the relation $R_i$ is calibrated. This means,
we can express its computation time as 
$\bigO{|E| + |E| \cdot \out_{c}^{1/2}}$.

Each relation $T_{p \rightarrow i}$ is a projection of the relation $R_p$, which means that
$T_{p \rightarrow i}$ can be computed by a single pass over $R_{p}$. Since the size of $R_p$ is upper-bounded
by $\bigO{|E| + |E| \cdot \out_c^{1/2}}$,  
the relation $T_{p \rightarrow i}$ can be computed within the
same complexity bound.

We conclude that the running time of the procedure \textsc{TopDown} is $\bigO{|E| + |E| \cdot \out_c^{1/2}}$.

\paragraph*{Running time of \textsc{EvalFreeConnex}} 
As shown above, each of the procedures  \textsc{BottomUp} and \textsc{TopDown} takes 
$\bigO{|E| + |E| \cdot \out_c^{1/2}}$ time. 
The procedure \textsc{EvalFreeConnex} computes and returns the output of the conjunctive query $Q'(\bm F)$, which joins the calibrated relations $\calR = \{R_{i_1}, \ldots, R_{i_k}\}$. 
Since all variables of the query  $Q'$ are free and the relations in $\calR$
correspond to nodes in a connected subtree of a join tree, the query $Q'$ forms a free-connex acyclic conjunctive query. Using the Yannakakis algorithm, the output of $Q'$ 
can be computed in 
$\bigO{\max_{j \in [k]} |R_{i_j}| + \out}$ time. 
Since the size of the  relations $R_{i_j}$ is 
$\bigO{|E| + |E| \cdot \out_c^{1/2}}$, the overall computation time of \textsc{EvalFreeConnex} can be expressed as $\bigO{|E| + |E| \cdot \out_c^{1/2} + \out}$.
\end{proof}

\subsection{Proof of Lemma~\ref{lem:reachable_set_ospg_variant}}
\ReachableSetOSPGVariant*

Before giving the proof of the lemma, we recall some standard concepts. 
A {\em (non-deterministic) finite automaton} $\calA = (\calS, s_0, \Sigma, \delta, \calF)$ consists of a set $\calS$ of states, 
a start state $s_0$, an input alphabet $\Sigma$, a transition relation 
$\delta \subseteq \calS \times \Sigma \times \calS$, and a set $\calF$
of accepting states. The language $L(A)$ recognized by a finite automaton 
is defined in the standard way. 
A language is regular if and only if it is recognized by a finite automaton~\cite{HopcroftMU07}. 

Given an edge-labeled graph $G=(V, E, \Sigma)$ and a finite automaton 
$\calA = (\calS, s_0, \Sigma, \delta, \calF)$,  the {\em product graph} of $G$ and $\calA$ is a directed 
graph $P_{G,\calA} = (V_{G,\calA}, E_{G,\calA})$, where
$V_{G,\calA} = V \times \calS$ and 
$E_{G,\calA} = \{((u_1, s_1), (u_2, s_2)) \mid \exists a \in \Sigma \text{ with } 
(u_1, a, u_2) \in V \text { and } (s_1, a, s_2) \in \delta\}$. 
The graph $G$ contains a path from vertex $u$ to vertex $v$  labeled by a string from $L(r)$ for some regular expression $r$ if and only if the product graph of $G$ and the finite automaton 
recognizing $L(r)$ contains a path from $(u,s_0)$ to $(v,f)$ for $f \in \calF$ and some $u,v \in V$.

\begin{proof}[Proof of Lemma~\ref{lem:reachable_set_ospg_variant}]
Let $G =(V,E,\Sigma)$ be an edge-labeled graph.
Consider the relation $U$ defined by
$U(X) = \bigwedge_{i \in [k]} U_i(X)$, which we can compute as follows. 
We iterate over the vertices in $U_1$ and for each such vertex $u$, we do a constant-time lookup in each of the other relations $U_2, \ldots , U_k$. 
If $u$ is contained in all relations $U_i$, we add it to $U$. 
This computation requires $\bigO{|V|}$ time. Due to our assumption that the input graph does not have any isolated vertex (see Section~\ref{sec:prelims}), the overall computation time is $\bigO{|E|}$. Similarly, we can compute the relations 
$W(Y) = \bigwedge_{i \in [\ell]} W_i(Y)$ and 
$Z() = \bigwedge_{i \in [m]} Z_i()$ in $\bigO{|E|}$ time. 
Then, we rewrite $Q$ as 
$Q(\bm X) = r(X, Y) \wedge U(X) \wedge W(Y)$, by assuming, without loss of generality, that $Z()$ evaluates to \texttt{true}. 
If $Z()$ evaluates to \texttt{false}, we can immediately report that the output of $Q$ is empty.

We consider the case that $\bm X$ consists of only $X$ or only $Y$.
In this case, we construct the product 
of the input graph $G$ and the finite automaton recognizing $L(r)$. 
Then, we reverse the edges in the product graph. Starting from the vertices in 
$\{(v,s) \mid v \in W \text{ and } $s$ \text{ is an accepting state}\}$ and
using a (multi-source) graph search algorithm, such as breadth-first search, we find all vertices 
in the set $\{(u,s) \mid u \in U \text{ and } $s$ \text{ is the start state}\}$.
This algorithm runs in time linear in the number of edges in the product graph. Since the size of the query, and hence the size of the finite automaton, is considered constant, the overall computation time is 
$\bigO{|E|}$.

In case the set $\bm X$ is empty, we use the same construction and follow the same search strategy as above. The only difference is that we stop
as soon as we find one vertex in the product graph that belongs to  
$\{(u,s) \mid u \in U \text{ and } $s$ \text{ is the start state}\}$.
In this case, we report that the output of the query is \texttt{true}.
Otherwise, we report that the output is \texttt{false}.
Hence, the computation time is $\bigO{|E|}$, as in the case above. 

Now, we consider the case that $\bm X$ consists of $X$ and $Y$. We use the \ospg algorithm~\cite{KhamisKOS25} on a slight modification of the RPQ $r$ and the input graph $G$ to compute the output of $Q$. 
Consider the RPQ $r' = \sigma_X \cdot r \cdot \sigma_Y$, where $\sigma_X$ and $\sigma_Y$ are fresh symbols that do not occur in $r$ or in $G$.
Let $G'$ be the edge-labeled graph obtained from $G$ by adding a self-loop edge 
$(u, \sigma_X, u)$ for each $u \in U$ and a self-loop edge $(v, \sigma_Y, v)$ for each $v \in W$. The graph $G'$ can be constructed in $\bigO{|E|}$ time by iterating over the vertices in $U$ and $W$ and adapting the corresponding vertices in $G$.
A tuple $(u,v)$ is contained in the output of $Q$ evaluated over $G$
if and only if it is contained in the output of the RPQ $r'$ evaluated over $G'$.
We evaluate $r'$ over $G'$ using the algorithm \ospg. This algorithm runs in $\bigO{|E| + |E| \cdot \out^{1/2}}$ time, where $\out$ is the output size of $r'$, and hence the output size of $Q$.
\nop{
Therefore, each pair of values $(x, y)$ in the original data graph satisfying $R$ corresponds to a pair of nodes $(x', y')$ in the modified data graph satisfying the RPQ $\sigma_x \cdot r \cdot \sigma_y$. As a result, computing $R$ is equivalent to evaluating the RPQ $\sigma_x \cdot r \cdot \sigma_y$ on the modified data graph. We evaluate this RPQ using \ospag. The running time of the \ospg algorithm depends not on the output size of $r$, but rather on the output size of the RPQ $\sigma_x \cdot r \cdot \sigma_y$, which is exactly the output size of $R$. As a result, we can compute $R$ in time $O(|E| + |E| \cdot |R|^{1/2})$.
}
\end{proof}
\section{Missing Details in Section~\ref{sec:general}}
\label{app:general}
\subsection{Proof of Proposition~\ref{prop:contract_query}}
\ContractQuery*

Consider an acyclic CRPQ $Q$ and an edge-labeled graph $G = (V,E, \Sigma)$.
Let $Q_0:=Q$, $G_0:= G$, and $\calR_0:= \{R_X := V \mid X \in \vars(Q_0)\}$ be the set of filter relations constructed in Line~3 of the procedure \textsc{Contract}.
Let $\omega = (X_1, \ldots , X_k)$ be an elimination order for $G_0$ with 
$\omega(G_0) = (G_1, \ldots, G_k)$.
We denote by $\calR_i$ the set of filter relations obtained after the elimination step for variable $X_i$. 
Let $Q_1, \ldots ,Q_k$ be the queries 
constructed by the procedure \textsc{Contract} in Lines 7 and 10 after each elimination step. Let $\hat{G}'$ be the graph constructed from $G$ in Lines 11 and 12 of the procedure \textsc{Contract}. For a CRPQ $Q$ and a set $\calR$ of filter relations, we use $Q \wedge \calR$ as a shorthand for the query 
obtained from $Q$ by adding to the body of $Q$ the conjunction 
$\bigwedge_{R_X \in \calR} R_X(X)$ of the unary relations in $\calR$.

Our proof consists of three parts. First, we show that the following claim holds:

\begin{claim}
\label{claim:contract_same_output}
For each $i \in \{0, \ldots, k \}$, the output of 
$Q_{i} \wedge \calR_i$ over $\hat{G}'$ is the same
as the output of $Q$ over $G$.
\end{claim}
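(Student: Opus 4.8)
The plan is to prove Claim~\ref{claim:contract_same_output} by induction on $i$, reading it as a loop invariant that each elimination step preserves the query output. For the base case $i=0$, we have $Q_0=Q$ and every filter relation in $\calR_0$ equals $V$, so the unary atoms $R_X(X)$ contributed by $\calR_0$ are vacuous. The RPQs of $Q$ use only symbols of the original alphabet $\Sigma$, whereas $\hat{G}'$ augments $G$ only with reverse edges (labeled by fresh hat-symbols $\hat{\sigma}$) and self-loops (labeled by fresh symbols $\sigma_X$). Since no string in any $L(r_j)$ contains such a fresh symbol, every path witnessing an atom of $Q$ in $\hat{G}'$ uses only edges of $G$; hence evaluating $Q$ over $\hat{G}'$ returns exactly the output of $Q$ over $G$.

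For the inductive step, assuming the invariant for $i-1$, it suffices to show that $Q_i \wedge \calR_i$ and $Q_{i-1} \wedge \calR_{i-1}$ have the same output over $\hat{G}'$. Since the two queries share the same free variables and differ only in how the eliminated bound variable $X_i$ is handled, the goal reduces to verifying that the step correctly existentially quantifies $X_i$. In the \textbf{degree-1 case}, $X_i$ occurs only in the single atom $r(\bm X)$ (with neighbor $Y$) and in the unary atom $R_{X_i}(X_i)$; projecting $X_i$ out leaves precisely the constraint that $Y$ admit some $X_i \in R_{X_i}$ reachable along $r$, which is exactly the updated relation $R_Y(Y) = r(\bm X) \wedge R_Y'(Y) \wedge R_{X_i}(X_i)$. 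Thus deleting $r(\bm X)$ and replacing the old $R_Y$ by the new one preserves the output.

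In the \textbf{degree-2 case}, with $r_1(Y,X_i)$ and $r_2(Z,X_i)$ replaced by the single atom $r_1\sigma_{X_i}\hat{r}_2^R(Y,Z)$, I will show that a mapping $\mu$ satisfies the new atom over $\hat{G}'$ iff there is a vertex $w \in R_{X_i}$ together with a path from $\mu(Y)$ to $w$ labeled in $L(r_1)$ and a path from $\mu(Z)$ to $w$ labeled in $L(r_2)$, i.e.\ exactly the existential quantification of $X_i$ together with its filter constraint. The composed expression factors as follows: the prefix $r_1$ routes $\mu(Y)$ to some $w$; the middle symbol $\sigma_{X_i}$ can be read iff a $\sigma_{X_i}$-labeled self-loop sits at $w$, which by the construction in Line~12 holds iff $w \in R_{X_i}$; and a path from $w$ to $\mu(Z)$ labeled in $L(\hat{r}_2^R)$ corresponds bijectively, through the reverse edges of the symmetric closure, to a forward path from $\mu(Z)$ to $w$ labeled in $L(r_2)$. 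The unary atom $R_{X_i}(X_i)$ remaining in $\calR_i$ now dangles (as $X_i \notin \vars(Q_i)$), but it is harmless: it is satisfiable exactly when $R_{X_i}\neq\emptyset$, which is precisely when the $\sigma_{X_i}$ self-loops exist, so it can never alter the output.

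The main obstacle is the reverse-traversal correspondence in the degree-2 case: one must unfold the hat-operation, the string reversal $(\cdot)^R$, and the symmetric closure, and check that following $\hat{r}_2^R$ forward in $\hat{G}'$ recovers exactly the $r_2$-paths traversed backward in $G$. A secondary point, which should be stated explicitly rather than left implicit, is that each freshly introduced symbol labels only its designated edges (a $\sigma_X$-labeled self-loop, or a single reverse edge of the symmetric closure), so that the augmentation in $\hat{G}'$ cannot create spurious witnessing paths that reuse added edges in unintended ways.
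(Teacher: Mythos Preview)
Your proposal is correct and follows essentially the same inductive structure as the paper's proof: base case via the observation that the fresh symbols in $\hat{G}'$ are disjoint from those appearing in $Q$, and the inductive step split into the degree-1 and degree-2 cases with the same reasoning. You are in fact somewhat more careful than the paper in explicitly treating the dangling filter $R_{X_i}(X_i)$ after a degree-2 elimination and in flagging the freshness of the added labels; these are points the paper leaves implicit.
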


\noindent
Then, we show that the output of the query $Q_k'$, as defined in Line~13 of the procedure \textsc{Contract}, over $\hat{G}'$ is the same as  the output of $Q_k \wedge \calR_k$ over the same graph. In the final part, we show that the procedure \textsc{Contract} runs in 
$\bigO{|E|}$ time.

\medskip

We show by induction on $i \in \{0, \ldots, k \}$ that Claim 
\ref{claim:contract_same_output} holds.

In the base case, $i = 0$, we have $Q_i = Q$ and $\calR_i = \calR_0$. Claim~\ref{claim:contract_same_output} trivially holds since: (1) $Q \wedge \calR_0$ is equivalent to $Q$, 
(2) $\hat{G}'$ has exactly the same vertices as 
$G$, (3) every edge in $G$ is contained in $\hat{G}'$, and (4) any edge contained in $\hat{G}'$ but not in $G$ has a label that does not appear in $Q$, which means that the edges in   $\hat{G}'$
that do not appear in $G$ do not affect the evaluation of $Q$.

Our induction hypothesis is that  Claim~\ref{claim:contract_same_output} holds for $i-1$ with 
$i \in \{1, \ldots , k\}$. In the induction step, we show that it holds for $i$.

First, we consider the case that $Q_{i}$ results from $Q_{i-1}$ via the elimination of a variable $X_i$ that has degree 1 in $G_{i-1}$ (Lines~5-8 in \textsc{Contract}). Let $r(\bm X)$ be the atom that contains 
$X_i$ and some other variable $Y$. The procedure  
removes from $Q_{i-1}$ the atom $r(\bm X)$ and  updates the filter 
$R_Y$ by computing: $R_Y'(Y):= R_Y(Y)$ and $R_Y(Y): = r(\bm X) \wedge R_Y'(Y) \wedge R_{X_i}(X_i)$. The filter set $\calR_i$ results from $\calR_{i-1}$ by replacing the relation $R_Y$ with its updated version. 
It follows from the induction hypothesis and the definition 
of $R_Y(Y)$ that $R_Y$ consists of all vertices from which there is a 
path to a vertex contained in $R_{X_i}$ labeled by a string from $L(r)$.
This implies that  
$Q_i \wedge \calR_i$ over $\hat{G}'$ has the same output as $Q$ over $G$.

Now, we consider the case that $Q_{i}$ results from $Q_{i-1}$ via the elimination of a variable $X_i$ that has degree 2 in $G_{i-1}$ (Lines~9-10 in \textsc{Contract}). 
Without loss of generality, we assume that $Q_{i-1}$ contains two atoms $r_1(Y,X_i)$ and $r_2(Z,X_i)$ for distinct variables $Y$ and $Z$. The other cases are handled  completely analogously. 
The procedure  \textsc{Contract} replaces in $Q_{i-1}$ the atoms
$r_1(Y,X_i)$ and $r_2(Z,X_i)$ with the new atom $r_1\sigma_{X_i}\hat{r}_2^R(Y,Z)$. 
By construction, $\hat{G}'$
contains a path from a vertex $u$ to a vertex $v$ labeled by $r_1\sigma_{X_i}\hat{r}_2^R$  if and only if it  contains a path from $u$ to a vertex $w$ contained in $R_{X_i}$ labeled by $r_1$ and a path from $v$ to  $w$ labeled by $r_2$. 
Note that $\calR_i = \calR_{i-1}$.
Hence, it follows from our induction hypothesis 
and the definition of 
$r_1\sigma_{X_i}\hat{r}_2^R$ that 
$Q_i \wedge \calR_i$ over $\hat{G}'$ has the same output as $Q$ over $G$.
This concludes the proof of Claim~\ref{claim:contract_same_output}.

Next, we show that the output of $Q_k'$ over $\hat{G}'$ is the same as the output of $Q_k \wedge \calR_k$ over $\hat{G}'$. The query $Q_k'$ results from $Q_k$ by adding the atom $\sigma_X(X,\bar{X})$ for each variable appearing in $Q_k$, where $\bar{X}$ is a  fresh variable. A tuple $\bm t$ is in the  output of $Q_k'$ if and only if it is in the output of $Q_k$ and each vertex mapped to variable $X$ is contained in the filter $R_X$. Hence, 
$Q_k'$ has the same output over $\hat{G}'$ as $Q_k \wedge \calR_k$.

Lastly, we show that the procedure \textsc{Contract} runs in $\bigO{|E|}$ time. 
The query transformations in Lines~7, 10, and 13 can be done in constant time.
To initialize the filter relations in Line~3, we need to iterate over all vertices in the input graph. 
Due to our assumption that the input graph does not contain isolated vertices (Section~\ref{sec:prelims}), we can find all vertices by iterating over the edges in the input graph, which requires $\bigO{|E|}$ time. 
By Lemma~\ref{lem:reachable_set_ospg_variant}, 
the filter relations can be updated in $\bigO{|E|}$ time. We compute the symmetric closure $\hat{G}$ of $G$ by iterating over the edges in $G$ and adding for each edge $(u,\sigma,v)$, an edge $(v,\hat{\sigma},u)$. Likewise $\hat{G}'$ can be computed by iterating over the edges in $\hat{G}$ and adding a self-loop $(u,\sigma_X,u)$ for each vertex $u$ such that $u$ is contained in the filter relation $R_X$. Hence, $\hat{G}'$ can be computed in $\bigO{|E|}$ time. 
We conclude that the procedure \textsc{Contract} runs in $\bigO{|E|}$ time.  

\nop{
The method \textsc{Contract} consists of following three types of operations.
\begin{itemize}
    \item Remove a node $X$ using a degree-one contraction,
    \item Remove a node $X$ using a degree-two contraction, and
    \item Filter transformation: Transform the filter $R_X$ into an RPQ $\sigma_X(X, \bar{X})$ for every $X \in \vars(Q)$.
\end{itemize}

We will analyze each of these separately. It is easier to first analyze the filter transformation, then the degree-two contractions.

\underline{Degree-one contractions.} We show that a degree-one contraction results in an equivalent query. Let $X$ be the node which is removed by a degree-one contraction and $Y$ be the neighbour of $X$ in the query graph. Let $Q$ and $Q'$ be the CRPQs before and after the contraction, respectively. Let $Q_e$ be the extended query (not necessarily a CRPQ) obtained by conjunction of $Q$ with all filter relations $R_X$, i.e.,
\[
Q_e(\bm{F}) = Q(\bm{F}) \wedge \bigwedge_{X \in \vars(Q)} R_X(X),
\]
and likewise $Q_e'$ be the extended query obtained by conjunction of $Q'$ with all filter relations $R_X'$ obtained after contracting variable $X$. We aim to prove that $Q_e$ and $Q_e'$ are equivalent.

We assume, without loss of generality, that the RPQ $r$ has the atoms in the order $X, Y$; a similar reasoning is applied in case the atoms are in the order $Y, X$. We also assume that $Y$ is a bound variable; a similar reasoning follows when $Y$ is free. The query $Q_e$, when expressed in first-order predicate logic, contains the formula $\exists Y (R_Y(Y) \wedge \exists X(r(X, Y) \wedge R_X(X))$. By defining $R_Y'(Y) = R_Y(Y) \wedge \exists X(r(X, Y) \wedge R_X(X))$, one gets the equivalent query $Q_e'$ containing the formula $\exists Y (R_Y'(Y))$. The relation $R_Y'(Y)$ is unary and can be computed in $\bigO{|E|}$ time by Lemma~\ref{lem:reachable_set_ospg_variant}.

\underline{Filter transformation.} Because of the RPQ path semantics, it suffices to prove the following property $P$. Let $R_X$ be a filter for the value of variable $X$. Then there is a path $p$ in the graph $G = (V, \Sigma, E)$ from a node $u$ to a node $v$ passing through a node $x \in R_X$ if and only if there is a path $p'$ in the graph $G' = (V, \Sigma', E')$ from $u$ to $v$ passing through an edge labeled with $\sigma_X$, where $\Sigma' = \Sigma \cup \{\sigma_X\}$ and $E' = E \cup \{(u, \sigma_X, u) \mid u \in R_X\}$. This property $P$ is applied for the graph $\hat{G}$, i.e., the symmetrical closure of the query graph $G$. The self-loop constraint is enforced by adding the atom $\sigma_X(X, \bar{X})$ to the CRPQ, where $\bar{X}$ is a fresh variable; only the values $x$ of $X$ for which the self-loop $(x, \sigma_X, x)$ exists will satisfy the RPQ.

We now prove property $P$.

\begin{itemize}
    \item Direction "Path in $G$ implies path in $G'$": Let $u \in V$, $v \in V$, $x \in R_X$ be nodes in the data graph $G$, $p_1$ be a path in $G$ from $u$ to $x$ and $p_2$ be a path in $G$ from $x$ to $v$. We now show a method to construct a path $p'$ in the data graph $G'$ from node $u$ to node $v$ which contains an edge labeled with $\sigma_X$. Let $l_1$ and $l_2$ be the lengths of paths $p_1$ and $p_2$. Then $p'$ is a path of length $l_1 + l_2 + 1$ obtained by first traversing all the edges of path $p_1$ (thus reaching node $x$), then an edge labeled with $\sigma_X$ (still remaining in node $x$), then all edges of path $p_2$ (thus reaching node $v$ in the end). 
    
    \nop{
    More precisely, let the nodes $w_{1,0}, w_{1,1}, \dots, w_{1,l_1}$ and $w_{2,0}, w_{2,1}, \dots, w_{2,l_2}$ be the nodes of paths $p_1$ and $p_2$ respectively. In particular, $w_{1,0} = u$, $w_{1, l_1} = w_{2, 0} = x$ and $w_{2, l_2} = v$. Let $\sigma_{1, 1}, \sigma_{1, 2}, \dots, \sigma_{1, l_1}$ and $\sigma_{2, 1}, \sigma_{2, 2}, \dots, \sigma_{2, l_2}$ denote the symbols of the edges of paths $p_1$ and $p_2$ respectively. Then $p'$ is constructed as follows:
    \begin{align*}
    \text{Vertices: } & w_{1,0}, w_{1,1}, \dots, w_{1,l_1}, w_{2,0}, w_{2,1}, \dots, w_{2,l_2} \\
    \text{Labels: } & \sigma_{1, 1}, \sigma_{1, 2}, \dots, \sigma_{1, l_1}, \sigma_X, \sigma_{2, 1}, \sigma_{2, 2}, \dots, \sigma_{2, l_2}
    \end{align*}
    
    Since $p'$ consists of edges in $E$ and the self-loop edge $(x, \sigma_X, x) \in E'$, $p'$ is indeed a valid path in the graph $G'$.
    }
    
    \item Direction "Path in $G'$ implies path in $G$": Let $u, v \in V$ be nodes in the data graph $G'$ and a path $p'$ from $u$ to $v$ labeled with a word in the language which contains the label $\sigma_X$. Let $l'$ be the length of $p'$, let $w'_0, \dots, w'_{l'}$ be the vertices of $p'$ and let $\sigma'_1, \dots, \sigma'_{l'}$ be the symbols of the edges of $p'$. The proof premise implies that there is an index $q \in [l']$ such that $\sigma_q = \sigma_X$. Since edges with label $\sigma_X$ are self-loops, we get that $w'_{q-1} = w'_q$. Then we can construct two paths $p_1$ and $p_2$ in $G$ by splitting path $p'$ at the edge labeled with $\sigma_X$. If we define $x = w'_{q-1} = w'_q$, then one can see that there is a path $p_1$ in the graph $G$ from the node $u$ to the node $x$ and a path $p_2$ in the graph $G$ from the node $x$ to the node $v$. Therefore, there is a path from $u$ to $v$ in the graph $G$ passing through a node $x \in R_X$.

    \nop{
    \begin{align*}
    p_1 &: \text{Vertices } w'_0, w'_1, \dots, w'_{q-1}; \text{ Edge labels } \sigma'_1, \dots, \sigma_{q-1} \\
    p_2 &: \text{Vertices } w'_q, w'_{q+1}, \dots, w'_{l'}; \text{ Edge labels } \sigma'_{q+1}, \dots, \sigma_{l'}
    \end{align*}
    }
\end{itemize}

Let us now analyze the running time of each filter transformation. Constructing the graph $G'$ involves adding $\bigO{|E|}$ new edges for each variable of the query. Adding the RPQ $\sigma_X$ to the CRPQ takes constant time. Since the number of variables in the query is constant, the time needed for transforming all filter relations into RPQs is $\bigO{|E|}$.

\underline{Degree-two contractions.} We first show that a pair of vertices $(x, y)$ in an edge-labeled graph $G$ satisfies the RPQ $r$ if and only if the pair $(y, x)$ satisfies the RPQ $\hat{r}^R$ in the graph $\hat{G}$. Let $w \in L(r)$ be the label of the path from $x$ to $y$ in the graph $G$. Then, by construction of $\hat{G}$, the node $x$ in $\hat{G}$ is reachable from node $y$ via a path composed of transposed edges, labeled with $\hat{w}^R$, i.e., the reversed word $w^R$ where each character $\sigma$ is replaced by $\hat{\sigma}$. Likewise, a path from $y$ to $x$ labeled with a a word $\hat{w}^R \in L(\hat{r}^R)$ implies the existence of a path from $x$ to $y$ labeled with a word $w$. Therefore, one can efficiently evaluate reversed RPQs by constructing the symmetrical closure of the data graph.

We now prove the correctness of each degree-two contraction. Let $X$ be the variable that is removed by a degree-two contraction and $Y, Z$ be the neighbours of $X$ in the query graph. We assume without loss of generality that $r_1(Y, X)$ and $r_2(Z, X)$ are the atoms containing variable $X$; the other cases of variable order in $r_1$ and $r_2$ are analyzed analogously. We aim to prove that three vertices $x, y$ and $z$ satisfy the query
$r_1(Y, X) \wedge r_2(Z, X) \wedge R_X(X)$ on graph $G$ if and only if the nodes $y$ and $z$ satisfy the RPQ $r_1 \sigma_X \hat{r_2}^R(Y, Z)$ on graph $\hat{G}'$.

In the direction "$G \rightarrow \hat{G}'$", if there is a path from node $z$ to node $x$ labeled with a word in $L(r_2)$, then there is, as proved above, a path from $x$ to $z$ in $\hat{G}$ (and thus $\hat{G}'$) labeled with a word in the language $L(\hat{r_2}^R)$. Since there is a path in $\hat{G}'$ from node $y$ to node $x$ labeled with a word in $L(r_1)$ and a path from node $x$ to node $z$ labeled with $L(\hat{r_2}^R)$, there is a path from node $y$ to node $z$ in the graph $\hat{G}'$ labeled with a word in $L(r_1 \hat{r_2}^R)$. To ensure that this path from $y$ to $z$ passes through nodes $x \in R_X$, we add in the data graph self-loops $(u, \sigma_X, u)$ to all nodes $u \in R_X$. As proved above in the filter transformation section, this graph transformation does not change the output of the query. Therefore, a path from $y$ to $z$ in $\hat{G}$ passing through a node $x \in R_X$ implies the existence of a path in $\hat{G}'$ from $y$ to $z$ labeled with a word in $L(r_1 \sigma_X \hat{r_2}^R)$. In the reverse direction ($\hat{G}' \rightarrow G$), the path from $y$ to $z$ in the graph $\hat{G}'$ can be split in two paths, a path from $y$ to $x \in R_X$ labeled with a word in $L(r_1)$ and a path from $x$ to $z$ labeled with a word in $L(\hat{r_2}^R)$. As shown above in the filter transformation section, this implies that there is a path from node $z$ to node $x$ in the original graph $G$ labeled with a word in $L(r_2)$. Therefore, there is a path in graph $G$ from $y$ to $x \in R_X$ labeled with a word in $L(r_1)$ and a path from $z$ to $x$ labeled with a word in $L(r_2)$.

We now analyze the running time of degree-two contractions. Modifying the atoms of the query takes constant time. The graph $\hat{G}$ is constructed a single time, at the end of the program \textsc{Contract}, requiring $\bigO{E}$ time. Therefore, all operations related to degree-two contractions require $\bigO{|E|}$ time.

\underline{Conclusion.} We proved the correctness of each type of operation in \textsc{Contract} and showed that each type requires $\bigO{|E|}$ time to execute in total. Therefore, the method \textsc{Contract} is correct and runs in $\bigO{|E|}$ time.
}

\end{document}